\numberwithin{equation}{section}
\numberwithin{figure}{section}
\theoremstyle{plain}
	\newtheorem{theorem}{Theorem}[section]
	\newtheorem{lemma}[theorem]{Lemma}
\theoremstyle{definition}
	\newtheorem{definition}[theorem]{Definition}
	\newtheorem*{remark*}{Remark}
\newcounter{Frame}
\newenvironment{Frame}[1][]{%
\refstepcounter{Frame}
    \begin{mdframed}[%
        frametitle={#1},
        skipabove=\baselineskip plus 2pt minus 1pt,
        skipbelow=\baselineskip plus 2pt minus 1pt,
        linewidth=1.0pt,
        frametitlerule=true,
        nobreak=true
    ]%
}{%
    \end{mdframed}
}
\numberwithin{equation}{section}
\numberwithin{figure}{section}
\def\prob#1#2{\mbox{\bf Pr}_{#1}\left[ #2 \right]}
\def\expec#1#2{{\bf \mathbb{E}}_{#1}[ #2 ]}
\def\poly#1{{ \textrm{poly}}\left( #1 \right)}
\def\E{\mathbf{E}}
\author{
Ainesh Bakshi\thanks{Part of this work was done while Ainesh Bakshi and David Woodruff were visiting the Simons Institute for the Theorem of Computing. Ainesh Bakshi and David Woodruff acknowledge support in part from  NSF  No.  CCF-1815840.}
\\
Carnegie Mellon University\\
abakshi@cs.cmu.edu
\and
Nadiia Chepurko
\\
MIT\\
nadiia@mit.edu
\and David P. Woodruff\footnotemark[1]
\\
Carnegie Mellon University\\
dwoodruf@cs.cmu.edu}
\title{Weighted Maximum Independent Set of Geometric Objects in Turnstile Streams}
\date{}
\begin{document}

\maketitle

\begin{abstract}

We study the Maximum Independent Set problem for geometric objects given in the data stream model. A set of geometric objects is said to be independent if the objects are pairwise disjoint. We consider geometric objects in one and two dimensions, i.e., intervals and disks. Let $\alpha$ be the cardinality of the largest independent set. Our goal is to estimate $\alpha$ in a small amount of space, given that the input is received as a one-pass stream. We also consider a generalization of this problem by assigning weights to each object and estimating $\beta$, the largest value of a weighted independent set. We initialize the study of this problem in the turnstile streaming model (insertions and deletions) and provide the first algorithms for estimating $\alpha$ and $\beta$. 
	
For unit-length intervals, we obtain a $(2+\epsilon)$-approximation to $\alpha$ and $\beta$ in poly$(\frac{\log(n)}{\epsilon})$ space. We also show a matching lower bound. Combined with the $3/2$-approximation for insertion-only streams by Cabello and Perez-Lanterno \cite{cabello2017interval}, our result implies a separation between the insertion-only and turnstile model. For unit-radius disks, we obtain a $\left(\frac{8\sqrt{3}}{\pi}\right)$-approximation to $\alpha$ and $\beta$ in $\poly{\frac{\log(n)}{\epsilon}}$ space, which is closely related to the hexagonal circle packing constant.

%For arbitrary-length intervals, we show any $c$-approximation to $\alpha$, and thus also $\beta$, requires $\Omega\left(\frac{n^{1/c}}{2^c}\right)$ space. To this end, we introduce a new communication problem and lower bound its information complexity.  
%To cope with the lower bound, 
We provide algorithms for estimating $\alpha$ for arbitrary-length intervals under a bounded intersection assumption and study the parameterized space complexity of estimating $\alpha$ and $\beta$, where the parameter is the ratio of maximum to minimum interval length.

\end{abstract}

\clearpage

\section{Introduction} 
Maximum Independent Set (\textsf{MIS}) is a fundamental combinatorial problem and in general, is NP-Hard to approximate within a $n^{1-\epsilon}$ factor, for any constant $\epsilon>0$  \cite{Has96}. %Poly-Apx-Complete, i.e., no approximation factor is possible within $|V|^{1-\epsilon}$ for $\epsilon > 0$. 
We focus on the \textsf{MIS} problem for geometric objects: we are given as input $n$ intervals on the real line or disks in the plane and our goal is to output the largest set of non-overlapping intervals or disks. 
Computing the Maximum Independent Set of intervals and disks has numerous applications in scheduling, resource allocation, cellular networks, map labellings, clustering, wireless ad-hoc networks and coding theory,  where it has been extensively studied \cite{gavril1972algorithms} \cite{awerbuch1994competitive}, \cite{woeginger1994line}, \cite{canetti1998bounding}, \cite{bachmann2010online}, \cite{azar2015buffer}, \cite{agarwal1998label}, \cite{hale1980frequency}, \cite{malesinska1997graph}. 
%We are given a set of $n$ intervals and the goal is to find the largest set of non-intersecting intervals. We can view the Interval Selection problem as computing the Maximum Independent Set of the induced Intersection Graph, formed by assigning a vertex to each interval and creating edge when two intervals intersect. 

%\textbf{Intersection Graphs.} An intersection graph, $G(V, E)$ is defined to be a graph whose vertex set $V$ is represented by having a node corresponding to each object, and two nodes share an edge if and only if the corresponding objects have a non-zero intersection, see, e.g, \cite{EGP66}. 
%Intersection Graphs have been studied widely in the past \cite{EGP66}. 
%We are interested in studying intersection graphs of geometric objects, such as intervals and disks, in the streaming model. 
%Prominent applications of intersection graphs include frequency assignment in cellular networks, map labeling, clustering, wireless ad-hoc networks and coding theory.

%Given an intersection graph, some important properties to study include the maximum independent set (\textsf{MIS}), vertex cover, and maximum clique. 

In the one dimensional setting, the \textsf{MIS} problem, also known as the Interval Scheduling\footnote{See \url{https://en.wikipedia.org/wiki/Interval_scheduling}} problem, has a simple greedy algorithm %receives as input a set of $n$ intervals in $\mathbb{R}$ and the goal is to find the largest cardinality set of non-intersecting intervals.  We also consider a natural generalization of this problem, Weighted Maximum Independent Set  (\textsf{WMIS}), where the input is now a set of weighted intervals and the goal is to find a set of non-intersecting intervals with maximum weight. 
%The classical algorithm of \cite{gavril1972algorithms} for \textsf{MIS} 
that picks intervals in increasing order of their right endpoint to obtain an optimal solution. The variant with weighted intervals can also be solved in polynomial time using dynamic programming, which is shown in a number of modern algorithms textbooks \cite{cormen2009introduction}, \cite{kleinberg2006algorithm}. These algorithms have considerable applications in resource allocation and scheduling, where offline and online variants have been extensively studied and we refer the reader to \cite{kolen2007interval} for a survey.

In the two dimensional setting, \textsf{MIS} of geometric objects, such as line segments \cite{hlinveny2001contact}, rectangles \cite{fowler1981optimal}, \cite{imai1983finding} and disks \cite{clark1990unit}, is \emph{NP-Hard}. However, in the offline setting (polynomial space), a PTAS is known for fat objects (squares, disks) and pseudo-disks \cite{ChanH12} (who also provide a recent survey). The \textsf{MIS} problem for arbitrary rectangles has also received considerable attention: \cite{chalermsook2009maximum} show a $\log\log(n)$ approximation in polynomial time and \cite{chuzhoy2016approximating} obtain a $(1+\epsilon)$-approximation in $n^{\poly{\log(n)}\epsilon^{-1}}$ time for axis-aligned rectangles.

\textbf{Streaming Model.} 
The increase in modern computational power has led to massive amounts of available data. Therefore, it is unrealistic to assume that our data fits in RAM. Instead, working with the assumption that data can be efficiently accessed in a sequential manner has led to streaming algorithms for a number of problems. Several classical problems such as heavy-hitters and $l_p$ sampling \cite{jowhari2011tight}, $l_p$ estimation \cite{kane2010exact}, entropy estimation \cite{li2011new}, \cite{clifford2013simple}, maximum matching \cite{konrad2015maximum} etc. have been studied in the turnstile model and recent work has led to interesting connections with linear sketches \cite{AHLW16}.

In this paper, we study the streaming complexity of the geometric \textsf{MIS} problem, where the input is a sequence of $n$ updates, either inserting a new object or deleting a previously inserted object. We assume that the algorithm has poly-logarithmic bounded memory and at the end of the stream, the algorithm should output an estimate of the (weighted) cardinality of the \textsf{MIS}. Since most real world scheduling applications are dynamic, and scheduling constraints expire, it is crucial to allow for both insertions and deletions, while operating in the low-space setting. Consider the following concrete application: automatic point-label conflict resolution on interactive maps \cite{mote2007fast}. In this problem, the goal is to label features (geometric objects such as points, lines and polygons) on a map such that no two features with the same label overlap. Labelling maps in visual analytic software requires such labelling to be fast and dynamic, since features can be added and removed.

%We extend this line of work to include \textsf{WMIS} of geometric objects in turnstile streams. 

%Motivated by this and the numerous applications of Interval Graphs, we study the \textsf{MIS} and \textsf{WMIS} problems in the single-pass turnstile streaming model. Here each input is an interval, which we get to see exactly once and the space available is sublinear (typically poly$(\log(n))$) in the size of the input. In this model, we may see both insertions of intervals and deletions of intervals we have already seen. 

\begin{center}
\begin{table}
\begin{center}
\begin{tabular}{ |c|c|c|c|c| }
\hline
Problem & \multicolumn{2}{c|}{Insertion-Only Streams} &  \multicolumn{2}{c|}{Turnstile Streams}\\ \hline
 & upper bound & lower bound &upper bound & lower bound  \\

\hline
Unit Intervals& $3/2+\epsilon$ & $3/2-\epsilon$ & $2+\epsilon$ & $2-\epsilon$ \\
Unit Weight& \cite{CP15}&  \cite{CP15} & Thm \ref{thm:weighted_ins_del_unit}& Thm \ref{athm:ins_del_unit_lowerbound} \\
 
\hline
Unit Intervals & $3/2+\epsilon$ & $3/2-\epsilon$ & $2+\epsilon$ & $2-\epsilon$ \\
Arbitrary Weight&  Thm \ref{thm:weighted_ins_unit_int} & \cite{CP15}  & Thm \ref{thm:weighted_ins_del_unit}& Thm \ref{athm:ins_del_unit_lowerbound} \\

%\hline
%Arbitrary Intervals & $2+\epsilon$ & $2-\epsilon$ & %$O(\log(n))$ & $\Omega(\log(n))$  \\
%Unit Weight& \cite{CP15} &  \cite{CP15} & folklore & %Thm \ref{thm:lowerbound_arb_len_unit_wt_ins_del} \\
%\hline
%Unit Disks & $\frac{8\sqrt{3}}{\pi} +\epsilon$ & $2 - \epsilon$ & %$\frac{8\sqrt{3}}{\pi} +\epsilon$ & $2-\epsilon$ \\
%Unit Weight& Thm \ref{thm:turnstile_geometric_disks} & Thm \ref{thm:insertion_only_disks_lowerbound} &Thm \ref{thm:turnstile_geometric_disks}& Thm \ref{athm:ins_del_unit_lowerbound}  \\
\hline
Unit Disks & $\frac{8\sqrt{3}}{\pi} +\epsilon$ & $2 - \epsilon$ & $\frac{8\sqrt{3}}{\pi} +\epsilon$ & $2-\epsilon$ \\
Arbitrary Weight & Thm \ref{thm:turnstile_geometric_disks} & Thm \ref{thm:insertion_only_disks_lowerbound} &Thm \ref{thm:turnstile_geometric_disks}& Thm \ref{athm:ins_del_unit_lowerbound}  \\
%\hline
%\multirow{2}{*}{Arbitrary Length, Unit Weight Disks} &  &  & &   \\
%&  &   &  &  \\
\hline
\end{tabular}
\end{center}
\caption{The best known upper and lower bounds for estimating $\alpha$ and $\beta$ in insertion-only and turnstile streams (defined below). Note, the weight and length above are still polynomially bounded in $n$. The folklore result follows from partitioning the input into $O(\log(n))$ weight classes, estimating $\alpha$ on each one in parallel and taking the maximum estimate.}
\label{tab}
\end{table}
\end{center}

\subsection{Our Contributions}

We provide the first algorithmic and hardness results for the Weighted Maximum Independent Set (\textsf{WMIS}) problem for geometric objects in {\it turnstile streams} (where previously inserted objects may also be deleted). The aim of our work is to understand the \textsf{MIS} and \textsf{WMIS} problems in this common data stream model and we summarize the state of the art in Table \ref{tab}. 
%We also introduce the weighted analogue of these problems in the context of turnstile streams.
%We now present our results for the \textsf{MIS} and \textsf{WMIS} problems on Interval Intersection Graphs. 
Our contributions are as follows:
\begin{enumerate}
\item \textbf{Unit-length Intervals.} Our main algorithmic contribution is a turnstile streaming algorithm achieving a $(2+\epsilon)$-approximation to $\alpha$ and $\beta$ in poly$\left(\frac{\log(n)}{\epsilon}\right)$ space. We also show a matching lower bound, i.e., any (possibly randomized) algorithm approximating $\alpha$ up to a $(2-\epsilon)$ factor requires $\Omega(n)$ space. Interestingly, this shows a strict separation between insertion-only and turnstile models since \cite{CP15} show that a $3/2$ approximation is tight in the insertion-only model. 

An unintuitive yet  crucial message here is that attaching polynomially bounded weights to intervals does not affect the approximation factor. Along the way, we also obtain new algorithms for estimating $\beta$ in insertion-only streams which are presented in Section \ref{sec:ins_only}. 
%\item \textbf{Arbitrary-length Intervals.} Our main hardness result is a lower bound that rules out any constant-factor approximation to $\alpha$ in poly$\left(\frac{\log(n)}{\epsilon}\right)$ space for turnstile streams. In order to show the lower bound, we introduce a new communication problem called the Augmented Path Disjointness Problem (\textsf{APD}), which may be useful in other contexts. 

We study this problem in the two player one-way communication model and show that an algorithm for estimating $\alpha$ implies a protocol for \textsf{APD}. We show an $\Omega(n^{1/s}/2^s)$ communication lower bound for this problem, where $\Theta(s)$ is the desired approximation ratio for the streaming algorithm. We note that this also implies a lower bound for estimating $\beta$ for arbitrary-length intervals. 
\item \textbf{Arbitrary Length Intervals.} For arbitrary length intervals, we give a one-pass turnstile streaming algorithm that achieves a $(1+\epsilon)$-approximation to $\alpha$ under the assumption that the degree of the interval intersection graph is bounded by poly$\left(\frac{\log(n)}{\epsilon}\right)$. Our algorithm achieves poly$\left(\frac{\log(n)}{\epsilon}\right)$ space.
%\textbf{I am not sure how much detail should the techniques we use be introduced in}

We also study the problem for arbitrary lengths by parameterizing the ratio of the longest to the shortest interval. We give a one-pass turnstile streaming algorithm that achieves a $(2+\epsilon)$-approximation to $\alpha$, where the space complexity is parameterized by $W_{max}$, which is an upper bound on the length of an interval assuming the minimum interval length is $1$. Here, the space complexity of our algorithm is poly$\left(W_{max} \frac{\log(n)}{\epsilon}\right)$ and this algorithm gives sublinear space whenever $W_{max}$ is sublinear. 
%The techniques we use here are similar to the ones we introduce to estimate $\alpha$ under the bounded degree assumption.   
\item \textbf{Unit-radius Disks.} We show that we can extend the ideas developed for unit-length intervals in turnstile streams to unit disks in the 2-d plane. We describe an algorithm achieving an  $\left(\frac{8\sqrt{3}}{\pi}+\epsilon\right)$-approximation to $\alpha$ and $\beta$ in poly$\left(\frac{\log(n)}{\epsilon}\right)$ space. One key idea in the algorithm is to use the hexagonal circle packing for the plane, where the fraction of area covered is $\frac{\pi}{\sqrt{12}}$ and our approximation constant turns out to be $4\cdot \frac{\sqrt{12}}{\pi}$. 

We also show a lower bound that any (possibly randomized) algorithm approximating $\alpha$ or $\beta$ for disks in insertion-only streams, up to a $(2-\epsilon)$ factor requires $\Omega(n)$ space. This shows a strict separation between estimating intervals and disks in insertion-only streams. 
\end{enumerate}

\section{Related Work}

There has been considerable work on streaming algorithms for graph problems.
%, including the introduction of the semi-streaming model. 
Well-studied problems include finding sparsifiers, identifying connectivity structure, building spanning trees, and matchings; see the survey by McGregor \cite{Mcgregor2014graph}. 
%Additionally, there has been a line of work on estimating the cardinality of a maximum matching in a stream. Assadi et. al. \cite{Assadi2017estimating} show $\alpha$-approximating the cardinality of of the maximum matching in $O(\frac{n}{\alpha^2})$ space for insertion-only streams and $O(\frac{n^2}{\alpha^4})$ space for turnstile streams. For bounded arboricity graphs, there exist algorithms that approximate the cardinality of the maximum matching in $O(\log^2(n))$ space \cite{CormodeJMM17}. 
Recently, Cormode et. al. \cite{CormodeDK17} provide guarantees for estimating the cardinality of a maximum independent set of general graphs via 
the Caro-Wei bound.
Emek, Halldorsson and Rosen \cite{EHR12} studied estimating the cardinality of the maximum independent set for interval intersection graphs in insertion-only streams. They output an independent set that is a $\frac{3}{2}$-approximation to the optimal (\textsf{OPT}) for unit-length intervals and a $2$-approximation for arbitrary-length intervals in $O(|\text{\textsf{OPT}}|)$ space. Note that $|\text{\textsf{OPT}}|$ could be $\Theta(n)$ which is a prohibitive amount of space.

Subsequently, Cabello and Perez-Lantero \cite{CP15} studied the problem of estimating the \emph{cardinality} of \textsf{OPT}, which we denote by $\alpha$, for unit-length and arbitrary length intervals in one-pass insertion-only streams. 
For unit-length intervals in insertion-only streams, Cabello and Perez-Lantero \cite{CP15} give a $(\frac{3}{2} + \epsilon)$ approximation to $\alpha$ in poly$\left(\frac{\log(n)}{\epsilon}\right)$ space. Additionally, they show that this approximation factor is tight, since any algorithm achieving a $(\frac{3}{2} -\epsilon)$-approximation to $\alpha$ requires $\Omega(n)$ space.
For arbitrary-length intervals they give a $(2 + \epsilon)$-approximation to $\alpha$ in poly$\left(\frac{\log(n)}{\epsilon}\right)$ space. Additionally, they show that the approximation factor is tight, since any algorithm achieving a $(2 -\epsilon)$-approximation to $\alpha$ requires $\Omega(n)$ space. Recently, \cite{cormode2018independent} studied MIS of intersection graphs in insertion-only streams. They show achieving a $\left(5/2-\epsilon\right)$-approximation to MIS of squares requires 
$\Omega(n)$ space.

To the best of our knowledge there is no prior work on the problem of Maximum Independent Set of unit disks in turnstile streams. In the offline setting, the first PTAS for MIS of disks was developed by \cite{erlebach2005polynomial} and later improved in running time by Chan \cite{chan2003polynomial}, while \cite{hochbaum1985approximation} shows a PTAS for MIS of $k \times k$ squares. We note that these algorithms require space linear in the number of disks and use a dynamic programming approach that is not suitable for streaming scenarios.

We note that \textsf{MIS} can also be viewed as a natural generalization of the distinct elements problem that has received considerable attention in the streaming model. This problem was first studied in the seminal work of \cite{flajolet1985probabilistic} and a long sequence of work has addressed its space complexity in both insertion-only and turnstile streams  \cite{alon1996space},  \cite{bar2002counting},  \cite{gibbons2001estimating},  \cite{estan2003bitmap}, \cite{flajolet2007hyperloglog}, \cite{DBLP:conf/pods/KaneNW10}, \cite{blasiok2018optimal} and \cite{cormode2018independent}.

\section{Notation and Problem Definitions}
We let $D(d_j, r_j, w_j)$ be a disk in $\mathbb{R}^d$, where $d \in \{1,2\}$, such that it is centered at a point $d_j \in \mathbb{R}^d$ with radius $r_j \in \mathbb{N}$ and weight $w_j$.
We represent $D(d_j, r_j, w_j)$ using the short form $D_j$ when $d_j$, $r_j$ and $w_j$ are clear from context. Note, we use the same notation to denote intervals in $d = 1$.
For a set $\mathcal{P}\subseteq \mathbb{R}^{d}$ of $n$ disks 
(unweighted or weighted), let $G$ be the induced graph formed by assigning a vertex to each disk and adding an edge between two vertices if the corresponding disks intersect. We call $G$ an intersection graph. 
The Maximum Independent Set (\textsf{MIS}) and Weighted Maximum Independent Set (\textsf{WMIS}) problems in the context of intersection graphs are defined as follows: 
	
\begin{definition}[Maximum Independent Set]
Let $\mathcal{P} = \{ D_1, D_2 \ldots , D_n\} \subseteq \mathbb{R^d}$ be a set of $n$ disks such that each weight $w_j = 1$ for $j \in [n]$. 
The \textsf{MIS} problem is to find the largest disjoint subset  $\mathcal{S}$ of $\mathcal{P}$ (i.e., no two objects in $\mathcal{S}$ intersect). We denote the cardinality of this set by $\alpha$. 
\end{definition}

\begin{definition}[Weighted Maximum Independent Set]
Let $\mathcal{P} = \{ D_1, D_2 \ldots , D_n\} \subseteq \mathbb{R}^d$ be a set of $n$ weighted disks.  
We let the weight $w_\mathcal{S}$ of a subset $\mathcal{S} \subseteq \mathcal{P}$ be $w_\mathcal{S}=\sum_{D_j \in \mathcal{S}} w_{j}$. 
The \textsf{WMIS} Problem is to find a disjoint (i.e., non overlapping) subset $\mathcal{S}$ of $\mathcal{P}$ whose weight $w_\mathcal{S}$ is maximum. We denote the weight of the \textsf{WMIS} by $\beta$.
\end{definition}
\vspace{-0.05in}

For a set $\mathcal{P}$ of disks, let \textsf{OPT}$_\mathcal{P}$ denote \textsf{MIS} or \textsf{WMIS} of $\mathcal{P}$. We use $|\textsf{OPT}_\mathcal{P}|$ to denote the cardinality of \textsf{MIS} as well as the weight of \textsf{WMIS} for $\mathcal{P}$. When the set $\mathcal{P}$ is clear from context, we omit it. Next, we define the two streaming models we consider. In our context, an \textit{insertion-only} stream provides sequential access to the input, which is an ordered set of objects such that at any given time step a new interval arrives. \textit{Turnstile streams} are an extension of this model such that at any time step, previously inserted objects can be deleted. An algorithm in the streaming model has access to space sublinear in the size of the input and is restricted to making one pass over the input. 

For proving our lower bounds, we work in the 
two player one-way randomized communication complexity model, where 
the players are denoted by Alice and Bob, who have private randomness. The input of Alice is denoted by $X$ and the input for Bob is denoted by $Y$. The objective is for Alice to
communicate a message to Bob and compute a function $f: X \times Y \to \{0, 1\} $ on the joint inputs of the players. The 
communication is one-way and w.l.o.g. Alice sends one message to Bob and Bob outputs a bit denoting the answer to the communication problem. Let
$\Pi\left(X,Y\right)$ be the random variable that denotes the 
transcript between sent from Alice to Bob when they execute a protocol 
$\Pi$. 

A protocol $\Pi$ is called a $\delta$-error protocol for function $f$ if there 
exists a function $\Pi_{out}$ such that for every input 
$Pr\left[\Pi_{out}\left(\Pi(X,Y)\right) = f(X, Y)\right] \geq 1 - \delta$. The communication cost of a protocol, denoted by 
$|\Pi|$, is the maximum length of $\Pi\left(X,Y\right)$ over 
all possible inputs and random coin flips of the two players. The randomized communication complexity of a function $f$, 
$R_{\delta}(f)$, is the communication cost of the best $\delta$-error protocol 
for computing $f$. 

\section{Technical Overview}

In this section, we summarize our results and briefly describe the main technical ideas in our algorithms and lower bounds. We note that our results hold in the recently introduced Sketching Model \cite{sun2019querying}. This model captures applications of sketches in turnstile streams, distributed computing, communication complexity and property testing. While Sun et. al. study graph problems such as dynamic connectivity and triangle detection, we initiate the study of dynamic Maximum Independent Set in this model. While we state our results in for turnstile streams, they immediately extend to the sketching model.

\subsection{Unit-length Intervals}
%\vspace{0.1in}
%\noindent \textbf{.}
Our main algorithmic contribution is to provide an estimate that obtains a $(2+\epsilon)$-approximation to \textsf{WMIS} of unit-length intervals in turnstile streams : 

\begin{theorem}[Theorem \ref{thm:weighted_ins_del_unit}, informal]
For any $\epsilon >0$, there exists a turnstile streaming algorithm that outputs an estimate such that with probability at least $99/100$, it is a $(2+\epsilon)$-approximation to $\textsf{WMIS}$ of unit intervals (polynomially bounded weights) and the algorithm requires  poly$\left(\frac{\log(n)}{\epsilon}\right)$ space. 
\end{theorem}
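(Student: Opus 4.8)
The plan is to reduce the weighted problem on unit intervals to a small collection of counting problems that can be solved with standard turnstile sketching primitives, and to argue that a simple grid-based rounding loses only a factor of $2+\epsilon$ on unit intervals. First I would set up the geometry: since all intervals have unit length, partition the real line into a grid of cells of width $1$ (or, for the $\epsilon$ refinement, a randomly shifted grid of width $1$ so that in expectation few optimal intervals straddle a cell boundary). Each unit interval intersects exactly one or two consecutive grid cells. If we could afford to ignore intervals that straddle boundaries, the problem would decompose across cells: within a single cell all intervals pairwise intersect, so an independent set picks at most one interval per cell, and the weighted optimum restricted to ``cell-contained'' intervals is just the sum over nonempty cells of the maximum weight of an interval assigned to that cell. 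This is a classic ``heavy hitter / max weight per bucket'' quantity, and the $2$ factor enters exactly because each true optimal interval can be charged to one of its (at most two) cells.

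The key steps, in order, are: (i) reduce estimating $\beta$ to estimating the number of nonempty cells in a suitable hashed-down instance, by bucketing intervals into $O(\log n)$ weight classes (weights are polynomially bounded) and, within a weight class, observing that the weighted optimum is approximated up to $2+\epsilon$ by the count of distinct occupied grid cells times the class weight, maximized over classes — this is the ``folklore'' reduction mentioned in Table~\ref{tab} made quantitative. (ii) For each weight class, implement the count of distinct occupied cells as a \emph{distinct elements} (F$_0$) estimation problem in a turnstile stream: map each interval update $D_j=(d_j,r_j,w_j)$ in the class to the identity of the grid cell containing its center (and handle the straddling cell via the shifted-grid argument), and feed insertions/deletions into a turnstile $F_0$ sketch such as the one of \cite{DBLP:conf/pods/KaneNW10}; this uses $\poly(\log n/\epsilon)$ space per class and $O(\log n)$ classes, giving the claimed total space. (iii) Combine: output the maximum over weight classes of (class weight)$\times$(estimated occupied-cell count), and show this is within $(2+\epsilon)$ of $\beta$ with probability $99/100$ after a union bound over classes and a constant-probability amplification of each sketch. (iv) Argue the matching lower bound is not needed here — the theorem only claims the upper bound — but I would double-check that deletions do not break the reduction, since an interval deleted from the stream must be removed from exactly the cell(s) it was hashed to, which the turnstile $F_0$ sketch handles natively.

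The main obstacle I anticipate is the loss analysis for the grid rounding, specifically controlling the interaction between (a) intervals straddling a boundary and (b) the per-cell ``keep only the max-weight interval'' step. Naively, an optimal solution could consist entirely of intervals each of which straddles a boundary, so a fixed grid could be arbitrarily bad; the randomly shifted grid fixes this in expectation (an optimal interval straddles the random boundary with probability $O(\epsilon)$ if we instead use width-$(1+1/\epsilon)$-style spacing, or more carefully, one shows the shifted grid retains a $(2+\epsilon)$ fraction of the optimum by a charging argument). Making this rigorous while keeping only $\poly(\log n/\epsilon)$ space — in particular handling the fact that we cannot store which intervals were chosen, only aggregate counts — is the delicate part, and I would lean on the analogous argument from \cite{CP15} for the insertion-only unit-interval case, adapting it so that every primitive used (bucketing by weight, hashing to cells, counting distinct cells) is implementable by a linear sketch and hence works under deletions. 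A secondary subtlety is showing that attaching polynomially bounded weights genuinely costs nothing beyond the $O(\log n)$ weight classes; here one uses that within a factor-$(1+\epsilon)$ weight class, treating all weights as equal changes $\beta$ by at most $(1+\epsilon)$, which folds into the final approximation factor.
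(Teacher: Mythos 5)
There is a genuine gap in your step (iii), and it sits exactly where the paper's main technical work lies. Your combination rule --- output the \emph{maximum} over weight classes of (class weight) $\times$ (estimated count of occupied cells) --- is not a $(2+\epsilon)$-approximation; it is only an $O\bigl(\frac{\log n}{\epsilon}\bigr)$-approximation. The quantity you need to estimate, after snapping intervals to the width-$1$ cell containing their center, is $\sum_{c} m(c)$, the sum over nonempty cells of the maximum weight in that cell (restricted to one parity class of cells). If the $b=O(\log n/\epsilon)$ weight classes contribute to this sum roughly equally (e.g., class $i$ has weight $\approx(1+\epsilon)^i$ and $\approx \beta/(b(1+\epsilon)^i)$ cells whose maximum lies in it), then the largest single term $w_i\cdot n_i$ is only $\beta/b$, so the max-over-classes estimator --- the ``folklore'' reduction from the caption of Table~\ref{tab} --- loses a factor of $b$. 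The paper explicitly flags that avoiding this loss (``attaching polynomially bounded weights does not affect the approximation factor'') is the point of the theorem. Replacing your max by a sum does not fix it either: a single cell containing intervals from many weight classes is counted once per class by the per-class $\ell_0$ sketches, so the summed estimator overcharges that cell by the geometric sum $\sum_{i'\le i}(1+\epsilon)^{i'}\approx \frac{1}{\epsilon}(1+\epsilon)^{i}$ rather than $m(c)\approx(1+\epsilon)^i$; this is precisely why the paper's na\"ive algorithm (with ratio $3/2$ instead of $1+\epsilon$) only achieves a factor $9$, and it is used there only as a coarse subroutine.

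What your proposal is missing is the mechanism that attributes each cell to the weight class of its \emph{maximum} interval under deletions: the paper agnostically subsamples cells at rate $p_i\approx \mathrm{poly}(\log n/\epsilon)\cdot(1+\epsilon)^i/|\textsf{OPT}_{\mathcal{C}_e}|$ for each class $i$, filters out cells whose current maximum is below $(1+\epsilon)^i$, runs $\mathrm{poly}(\log n/\epsilon)$-sparse recovery on the surviving cells to recover them exactly and hence decide their true class, thresholds away classes whose $\ell_0$ count is too small to contribute more than an $\epsilon$-fraction, and finally \emph{sums} the resulting per-class estimators. The clean factor $2$ also comes from a different place than your charging argument: cells are split into even and odd parities so that distinct same-parity cells yield genuinely independent intervals (adjacent cells need not), one parity class carries at least $\beta/2$, and each parity class is then estimated to within $(1\pm\epsilon)$. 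Your worry about intervals straddling cell boundaries is a non-issue under center-snapping, but the max-over-classes combination as written would fail to prove the theorem.
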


\begin{Frame}[\textbf{Algorithm \ref{alg:naive_approximation} : Na\"ive Approximation.}]
\label{alg:naive_approximation}
\textbf{Input:} Given a turnstile stream $\mathcal{P}$ with weighted unit intervals, where the weights are polynomially bounded, $\epsilon$ and $\delta >0$, Na\"ive Approximation outputs a $(9+\epsilon)$-approximation to $\beta$ with probability $1-\delta$.

\begin{enumerate}
	\item Randomly shift a grid $\Delta$ of side length $1$. Partition the cells into even and odd, denoted by $\mathcal{C}_e$ and $\mathcal{C}_o$. 
	\item Consider a partition of cells in $\mathcal{C}_e$ into $b = \textrm{poly}(\log(n))$ weight classes $\mathcal{W}_i = \{ c \in \mathcal{C}_e | (1+1/2)^i \leq m(c) < (1+1/2)^{i+1} \}$, where $m(c)$ is the maximum weight of an interval in $c$ (this is not an algorithmic step since we do not know this partition a priori). Create a substream for each weight class $\mathcal{W}_i$ denoted by $\mathcal{W}'_i$. 
    \item For each new interval $D(d_j,1, w_j)$, feed it to substream $\mathcal{W}'_i$ if $w_j \in [(1+1/2)^i, (1+1/2)^{i+1})$. For each substream $\mathcal{W}'_i$, maintain a $(1\pm\epsilon)$-approximate $\ell_0$-estimator (described below). 
	\item Let $t_i$ be the $\ell_0$ estimate corresponding to $\mathcal{W}'_i$. Let $X_e = \frac{2}{9(1+\epsilon)} \sum_{i\in [b]} (1+1/2)^{i+1} t_i$.
    \item Repeat Steps 2-6 for the odd cells $\mathcal{C}_o$ to obtain the corresponding estimator $X_o$. 
\end{enumerate}

\noindent\textbf{Output:} max$(X_e, X_o)$
\end{Frame}

\textbf{A na\"ive approximation.} We start by describing a simple approach (Algorithm \ref{alg:naive_approximation}) to obtain a $9$-approximation. 
The algorithm proceeds by imposing a grid of side length $1$ and shifts it by a random integer. This is a standard technique used in geometric algorithms. We then snap each interval to the cell containing the center of the interval and partition the real line into odd and even cells. This partitions the input space such that intervals landing in distinct odd (even) cells are pairwise independent. Let $\mathcal{C}_e$ be the set of all even cells and $\mathcal{C}_o$ be the set of all odd cells. 

By averaging, either $|\text{\textsf{OPT}}_{\mathcal{C}_e}|$ or $|\text{\textsf{OPT}}_{\mathcal{C}_o}|$ is at least $\frac{\textsf{OPT}}{2}$, where $|\textsf{OPT}|$ is the max weight independent set of intervals. We develop an estimator that gives a $(1+\epsilon)$-approximation to $|\text{\textsf{OPT}}_{\mathcal{C}_e}|$ as well as $|\text{\textsf{OPT}}_{\mathcal{C}_o}|$.  Therefore, taking the max of the two estimators, we obtain a $(2+\epsilon)$-approximation to $|\textsf{OPT}|$.  

Having reduced the problem to estimating $|\text{\textsf{OPT}}_{\mathcal{C}_e}|$, we observe that for each even cell only the max weight interval landing in the cell contributes to $\textsf{OPT}_{\mathcal{C}_e}$. Then, partitioning the cells in $\mathcal{C}_e$ into poly$(\log(n))$ geometrically increasing weight classes based on the max weight interval in each cell and approximately counting the number of cells in each weight class suffices to estimate $|\text{\textsf{OPT}}_{\mathcal{C}_e}|$ up to a $(1+\epsilon)$-factor. 

Given such a partition, we can approximate the number of cells in each weight class by running an $\ell_0$ norm estimator. Estimating the $\ell_0$ norm of a vector in \emph{turnstile streams} is a well studied problem and a result of Kane, Nelson and Woodruff \cite{DBLP:conf/pods/KaneNW10} obtains a $(1\pm\epsilon)$-approximation in poly$(\frac{\log(n)}{\epsilon})$ space. However, we do not know the partition of the cells into the weight classes a priori and this partition can vary drastically over the course of a stream given that intervals can be deleted. Therefore, the main technical challenge is to simulate this partition in \emph{turnstile streams}.   

As a first attempt, consider a partition of cells in $\mathcal{C}_e$ into $b = \textrm{poly}(\log(n))$ weight classes $\mathcal{W}_i = \{ c \in \mathcal{C}_e | (1+1/2)^i \leq m(c) < (1+1/2)^{i+1} \}$, where $m(c)$ is the maximum weight of an interval in $c$. Create a substream for each weight class $\mathcal{W}_i$ and feed an input interval into this substream if its weight lies in the range $[(1+1/2)^i, (1+1/2)^{i+1})$. Let $t_i$ be the corresponding $\ell_0$ estimate for this substream. Approximate the contribution of $\mathcal{W}_i$ by $(1 + 1/2)^{i+1}\cdot t_i$. Sum up the estimates for all $i \in [b]$ to obtain an estimate for $|\text{\textsf{OPT}}_{\mathcal{C}_e}|$. 

We note that there are two issues with our algorithm. First, we overestimate the weight of intervals in class $\mathcal{W}_i$ by a factor of $3/2$ and second, for a given cell we sum up the weights of all intervals landing in it, instead of taking the maximum weight for the cell. In the worst case, we approximate the true weight of a contributing interval, $(3/2)^{i+1}$, with $\sum^{i}_{i'=1}(3/2)^{i'+1} \leq 3((3/2)^{i+1}-1)$. Note, we again overestimate the weight, this time by a factor of $3$. Combined with the approximation for the $\ell_0$ norm, we obtain a weaker $(\frac{9}{2} + \epsilon)$-approximation to $|\text{\textsf{OPT}}_{\mathcal{C}_e}|$ in the desired space. From our discussion above, this implies a $(9+\epsilon)$-approximation to $|\textsf{OPT}|$.  
We also note that this attempt is not futile as we use the above algorithm as a subroutine subsequently.

\textbf{A refined attempt.}
Next, we describe an algorithm that estimates $|\text{\textsf{OPT}}_{\mathcal{C}_e}|$ up to a $(1+\epsilon)$-factor. Here, we use more sophisticated techniques to simulate a finer partition of the cells in $\mathcal{C}_e$ into geometrically increasing weight classes in turnstile streams.  
One key algorithmic tool we use here is a streaming algorithm for $k$-\textsf{Sparse Recovery}:  
given an input vector $x$ such that $x$ receives coordinate-wise updates in the turnstile streaming model and has at most $k$ non-zero entries at the end of the stream of updates,  
there exist data structures that exactly recover $x$ at the end of the stream. As mentioned in Berinde et al. \cite{BCIS09}, the $k$-tail guarantee is a sufficient condition for $k$-\textsf{Sparse Recovery}, since in a $k$-sparse vector, the elements of the tail are $0$. We note that the \textsf{Count-Sketch} Algorithm \cite{CCF02} has a $k$-tail guarantee in \emph{turnstile streams}.

This time around, we consider partitioning cells in $\mathcal{C}_e$ into $\text{poly}\left(\epsilon^{-1} \log(n)\right)$ weight classes, creating a substream for each one and computing the corresponding $\ell_0$ norm.  
We also assume we know $|\text{\textsf{OPT}}_{\mathcal{C}_e}|$ up to a constant (this can be simulated in \emph{turnstile streams}).
Formally, given $b=\text{poly}\left(\log(n),\epsilon^{-1}\right)$ weight classes, for all $i \in [b]$, let $\mathcal{W}_i$ denote the set of even cells with maximum weight sandwiched in the range $[(1+\epsilon)^{i}, (1+\epsilon)^{i+1})$. We then simulate sampling from the partition by subsampling cells in each $\mathcal{W}_i$ at the start of the stream, agnostic to the input. We do this at different sampling rates , i.e. for all $i \in[b]$, we subsample the cells in $\mathcal{W}_i$ with probability roughty $(1+\epsilon)^{i}/|\textsf{OPT}_{\mathcal{C}_e}|$.
%Formally, we subsample cells in $\mathcal{C}_e$ $b$ times, where $b = \text{poly}\left(\log(n),\epsilon^{-1}\right)$, with probability proportional to $\Theta( \frac{b (1+\epsilon)^i \log(n)}{\epsilon^3 |\text{\textsf{OPT}}_{\mathcal{C}_e}|})$, for all $i \in [b]$. 

This presents several issues, as we cannot subsample non-empty cells in turnstile streams a priori. Further, if a weight class has a small number of non-empty cells, we cannot recover accurate estimates for the contribution of this weight class to $|\text{\textsf{OPT}}_{\mathcal{C}_e}|$ at any level of the subsampling. 
To address the first issue, we agnostically sample cells from $\mathcal{C}_e$ according to a carefully chosen range of sampling rates and create a substream for each one. 
We then run a sparse recovery algorithm on the resulting substreams. At the right subsampling rate, we note that the resulting substream is sparse since we can filter out cells that belong to smaller weight classes. Further, we can ensure that the number of cells that survive from the relevant weight class (and larger classes) is small. Therefore, we recover all such cells using the sparse recovery algorithm. 

To address the second issue, we threshold the weight classes that we consider in the algorithm based on the relative fraction of non-empty cells in them. This threshold can be computed in the streaming algorithm using the $\ell_0$-norm estimates for each weight class. All the weight classes below the threshold together contribute at most an $\epsilon$-fraction of $|\text{\textsf{OPT}}_{\mathcal{C}_e}|$ and though we cannot achieve concentration for such weight classes, we show that we do not overestimate their contribution. Further, for all the weight classes above the threshold, we can show that sampling at the right rate can recover enough cells to achieve concentration.   

We complement the above algorithmic result with a matching lower bound, i.e., a $(2-\epsilon)$-approximation to \textsf{MIS}, for any $\epsilon > 0$, requires $\Omega(n)$ space. This follows from an easy application of the Augmented Indexing problem. We note that our result combined with the $3/2$-approximation by \cite{cabello2017interval} implies an unexpected separation between insertion-only and turnstile streams.

\subsection{Parametrized Algorithms for Arbitrary Length Intervals}

In light of the lower bound discussed above, we identify two sources contributing to the streaming hardness of \textsf{MIS} for arbitrary length intervals : the number of pair-wise intersections (max-degree) and the ratio of the longest to shortest interval (scale). We show that when either of these quantities is poly-logarithmically bounded, we can approximate \textsf{MIS} for arbitrary length intervals. 

Instead of assuming the max-degree or scale is bounded, we instead provide algorithms paramterized by these quantities. First, let the number of pair-wise intersections be
bounded by $\kappa_{\max}$. Then,

\begin{theorem}[Theorem \ref{thm:bounded_deg_ins_del}, informal.]
For $\epsilon >0$,  there exists a turnstile streaming algorithm that takes as input a set of unit-weight arbitrary-length intervals, with at most $\kappa_{\max}$ pair-wise intersections and with probability $99/100$, outputs a $(1+\epsilon)$-approximation to \textsf{MIS} in $\textrm{poly}(\log(n), \epsilon^{-1},\kappa_{\max})$ space. 
\end{theorem}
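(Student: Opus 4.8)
The plan is to reduce the arbitrary-length, bounded-degree instance to a bounded number of unit-scale subproblems, then apply (a black-box version of) the unit-interval machinery on each. First I would impose, as in Algorithm \ref{alg:naive_approximation}, a randomly shifted grid and bucket intervals by length scale: partition the intervals into $O(\log n)$ classes $\mathcal{L}_j$ where $\mathcal{L}_j$ contains intervals of length in $[2^j, 2^{j+1})$. Within a single length class, the objects are ``almost unit,'' so snapping to a grid of side $2^{j+1}$ and separating cells by residue mod $3$ makes intervals in distinct surviving cells of the same residue automatically disjoint; this is the same trick used in the $(2+\epsilon)$-algorithm, now applied scale-by-scale. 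The key point that makes the degree bound relevant is that \emph{within a single cell of its own scale}, a bounded-degree interval can intersect only $\kappa_{\max}$ others, so each cell hosts at most $\kappa_{\max}+1$ intervals; hence the entire contribution of one cell can be exactly recovered with a $O(\kappa_{\max})$-sparse recovery structure (Count-Sketch with the $k$-tail guarantee, $k = O(\kappa_{\max})$, as invoked in the refined attempt), and the cell's local \textsf{MIS} is computed offline by the greedy interval-scheduling algorithm once its contents are recovered.

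Next I would handle the interaction \emph{across} scales, which is where the real work lies. A long interval may intersect many short ones, so the $O(\log n)$ length classes are not independent, and we cannot simply add up per-class answers. The approach is to run, for each pair (scale $j$, residue $r \bmod 3$), an $\ell_0$-style estimator over the nonempty cells, but with a subsampling cascade exactly as in the refined attempt for unit intervals: guess $|\textsf{OPT}|$ up to a constant (standard in turnstile streams), subsample cells at a geometric range of rates, and on each subsampled substream store an $O(\kappa_{\max})$-sparse recovery sketch of all interval updates hashed into the surviving cells. At the correct rate the number of surviving relevant cells is $\poly(\log n, \epsilon^{-1})$ and each is $O(\kappa_{\max})$-sparse, so the total recovered information is $\poly(\log n,\epsilon^{-1},\kappa_{\max})$ words. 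From the fully recovered set of surviving intervals (across \emph{all} scales simultaneously, since we recover actual intervals, not just counts) we rebuild a small ``witness'' sub-instance and solve \textsf{MIS} on it offline; because intervals live on a line the offline problem is the polynomial-time weighted interval scheduling DP. Rescaling by the inverse sampling probability and taking the best estimate over the $O(\log |\textsf{OPT}|)$ guesses and over residues yields a $(1+\epsilon)$-approximation.

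To argue correctness of the estimator I would mimic the two-part analysis sketched for unit intervals: split cells by whether their (local, recovered) \textsf{MIS}-contribution is ``heavy'' or ``light'' relative to the guessed $|\textsf{OPT}|/\poly(\log n,\epsilon^{-1})$ threshold, which is computable from the $\ell_0$ estimates; show light cells contribute at most an $\epsilon$-fraction and are never overestimated; and show heavy cells survive subsampling at the matching rate in large enough numbers that a Chernoff bound gives $(1\pm\epsilon)$ concentration on their total. The random grid shift contributes the usual $O(1)$ loss, which we absorb by running $O(\log_{1+\epsilon})$-many grids or by noting (as in the unit case) that a $1$-in-$3$ residue partition only loses a factor $3$ here, which combined with a finer $\epsilon$-net on the guess cleans up to $1+\epsilon$; I would instead follow the unit-interval route of taking the max over residue classes plus a geometric guess of $|\textsf{OPT}|$, which the excerpt already shows suffices. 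The space is $\poly(\log n,\epsilon^{-1},\kappa_{\max})$ because everything is $O(\log n)$ scales $\times$ $O(1)$ residues $\times$ $\poly(\log n,\epsilon^{-1})$ sampling levels $\times$ $O(\kappa_{\max})$-sparse sketches.

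\textbf{Main obstacle.} The crux is the cross-scale dependency: a bounded-degree graph can still have long intervals overlapping $\kappa_{\max}$ short ones at a finer scale, so no single grid resolution suffices and the per-scale subproblems are genuinely coupled. The resolution I expect to need is that the degree bound caps the \emph{number of intervals per own-scale cell} at $\kappa_{\max}+1$, making each cell exactly recoverable; then instead of estimating per-scale counts and summing (which would double-count overlaps), we recover the actual surviving intervals from \emph{all} scales into one offline sub-instance and solve exact \textsf{MIS} there, so the coupling is handled by the offline solver rather than by the sketch. Making the subsampling rates across different scales consistent so that a single draw of hash functions simultaneously gives the right survival probability at every scale — while keeping total sparsity $\poly(\log n,\epsilon^{-1},\kappa_{\max})$ — is the delicate bookkeeping step.
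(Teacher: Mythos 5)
There is a genuine gap, and it sits exactly where you flagged the ``main obstacle.'' Your plan recovers surviving intervals from all scales into one witness sub-instance, solves \textsf{MIS} offline, and then ``rescales by the inverse sampling probability'' --- but the sampling rates differ across scales, so there is no single rescaling factor, and the value of an offline \textsf{MIS} of a union of differently-subsampled pieces is not an additive statistic to which your heavy/light Chernoff argument applies. Worse, the cross-scale conflicts are precisely the information the witness loses: a long interval can be recovered (its own-scale cell was sampled) while the $\kappa_{\max}$ short intervals it kills were not (their own-scale cells were not sampled), so the offline solver overcounts; and you cannot fix this by hashing \emph{all} scales' updates into a sampled coarse cell, because a coarse cell can contain far more than $\kappa_{\max}$ pairwise-disjoint short intervals, destroying the sparsity your recovery relies on. The paper's missing ingredient is the $r_i$-\textsf{Structure}: for each level $i$ it packages a level-$i$ interval together with \emph{all} lower-level intervals that intersect it, with no intervals at higher levels. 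These structures partition the contributing intervals, so per-level estimates can be summed without double counting; each structure contains only $O(\kappa_{\max})$ intervals, so sampling the top cell at level $i$ (at rate tied to the level-$i$ $\ell_0$ count) and running $\kappa_{\max}$-sparse recovery inside it recovers the structure \emph{exactly}, lets you verify at the end of the stream that it really is a valid structure, and lets you compute its local \textsf{OPT} offline before applying a per-level Horvitz--Thompson estimate. That is the device that decouples the scales; your proposal has no substitute for it.

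A second, smaller but real problem: your residue-mod-$3$ partition (take the max over residue classes) intrinsically loses a constant factor, just as the even/odd split gives $2+\epsilon$ for unit intervals; no refinement of the $|\textsf{OPT}|$ guesses can recover the intervals discarded by the partition, so this route cannot reach the claimed $(1+\epsilon)$. The paper avoids any residue partition: the level-$i$ grid has cells of side $(1+\epsilon)^{i+1}/\epsilon$, so after a random shift only an $\epsilon$-fraction of class-$i$ intervals fall near level-$i$ boundaries, and boundary effects cost only $\epsilon\alpha$ rather than a constant factor.
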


This result requires several new algorithmic ideas. Observe, placing a unit grid no longer suffices since the intervals now span different lengths. Therefore, we impose a nested grid on our input, where the grid size is geometrically increasing, and randomly shift it. Further, observe that the natural strategy that partition the interval into geometrically increasing \textit{length classes} and estimates each partition up to $1+\epsilon$ does not work since the intervals overlap. 

We therefore define the following object that uniquely determines intervals of a particular \textit{length class} contributing to the \textsf{MIS} :

\begin{definition}($r_i$-\textsf{Structure.})
\label{def:structure}
We define an $r_i$-\textsf{Structure} to be a subset of the \textsf{Nested Grid}, such that there exists an interval at the $i^{th}$ grid level, there exist no intervals in the grid at any level $i' > i$ and all the intervals in the grid at levels $i'< i$ intersect the interval at the $i^{th}$ level. 
\end{definition}

It is easy to see any interval that contributes to \textsf{MIS} corresponds to an $r_i$-\textsf{Structure} for some $i$. Therefore, it suffices to estimate the number of $r_i$-\textsf{Structures} for all $i$.  Following our approach for unit intervals, we again use $k$-\textsf{Sparse Recovery} as our main tool. At a high level,  we sub-sample poly$(\log(n), \epsilon^{-1})$ $r_i$-\textsf{Structures} from the set of all such structures at level $i$, and create a new substream for each $i$. We then run a $\kappa_{\textrm{max}}$-\textsf{Sparse Recovery} Algorithm on each substream.  We show that at the end of the stream, we obtain an estimate of the number of $r_i$-\textsf{Structures} at level $i$ that concentrates. Since the structures form a partition, our overall estimate is simply the sum of the estimates obtained for each $i$. 

The main algorithmic challenge here is to show that we can indeed detect and subsample the $r_i$-\textsf{structures}. These structures are defined in a way that takes into account how many intervals appear in the nested grid both above and below a given interval. Therefore, it is unclear how to track such updates as they constantly change over the stream. However, observe that since our space is parameterized by the max-degree, we can afford to store an $r_i$-\textsf{Structure} completely in memory. 

Given a randomly sampled cell from the $i$-th level of the nested grid, we assume this cell contributes an $r_i$-\textsf{structure}. We then run $\kappa_{\max}$-\textsf{Sparse Recovery} on this cell.  Our main insight is that at the end of the stream we can verify whether this cell indeed contributed an $r_i$-\textsf{structure} since we recover the nested intervals \textit{exactly}. The final remaining challenge is to ensure that our sub-sample contains a sufficient number of non-empty structures for each level and the resulting estimate concentrates. We describe these details in Section \ref{subsection:bounded_degree}.

Finally, we show that similar algorithmic ideas also result in a turnstile streaming algorithm,  if parametrize the input by the $W_{\max}$, the ratio of the largest to smallest interval : 

\begin{theorem}[Theorem \ref{thm:parametrized_space_ins_del}, informal.]
For $\epsilon >0$,  there exists a turnstile streaming algorithm that takes as input a set of unit-weight arbitrary-length intervals, with $W_{\max}$ being an upper bound on the ratio of the largest to smallest interval, and with probability $99/100$, outputs a $(2+\epsilon)$-approximation to \textsf{MIS} in $\textrm{poly}(\log(n), \epsilon^{-1}, W_{\max})$ space. 
\end{theorem}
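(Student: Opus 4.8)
The plan is to imitate the unit-interval algorithm: use a coarse randomly shifted grid (this is what costs the factor $2$) to break the line into independent bounded-size windows, and then estimate the total contribution of the windows with the same sparse-recovery-plus-subsampling machinery. Rescale so that the shortest interval has length $1$ and all lengths lie in $[1, W_{\max}]$, impose a grid of side $s = 2W_{\max}$ shifted by a uniformly random offset, and snap every interval to the cell containing its center. Color the cells even and odd as in Algorithm~\ref{alg:naive_approximation}, and let $\mathrm{MIS}(c)$ denote the size of a maximum independent set among the intervals centered in cell $c$. Since an interval extends at most $W_{\max}/2 < s/2$ beyond its own cell and two same-colored cells are separated by a gap of length $s$, intervals centered in distinct even (resp.\ odd) cells are pairwise disjoint. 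Hence $\sum_{c \in \mathcal{C}_e} \mathrm{MIS}(c)$ and $\sum_{c \in \mathcal{C}_o} \mathrm{MIS}(c)$ are each at most $\textsf{MIS}$ (each is realized by an honest independent set) and sum to at least $\textsf{MIS}$, so the larger of the two is within a factor $2$ of $\textsf{MIS}$ and it suffices to estimate each sum to within $(1+\epsilon)$.

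For the inner estimation, note that $s = \Theta(W_{\max})$ and all lengths are $\ge 1$, so $\mathrm{MIS}(c) \le O(W_{\max})$; thus there are only $O(\epsilon^{-1}\log W_{\max})$ geometric value classes $\mathcal{V}_j = \{ c : (1+\epsilon)^j \le \mathrm{MIS}(c) < (1+\epsilon)^{j+1} \}$, and $\sum_{c \in \mathcal{C}_e} \mathrm{MIS}(c) = (1\pm\epsilon)\sum_j (1+\epsilon)^j |\mathcal{V}_j \cap \mathcal{C}_e|$. To make $\mathrm{MIS}(c)$ recoverable from a small turnstile sketch, discretize the intervals inside each cell onto a randomly shifted sub-grid with $\mathrm{poly}(W_{\max}, \epsilon^{-1})$ sub-cells; a standard random-grid-snapping argument (cf.\ \cite{CP15}) shows this changes every $\mathrm{MIS}(c)$ by at most a $(1+\epsilon)$ factor in expectation, and afterwards the independent-set-relevant content of a cell is captured by its $\mathrm{poly}(W_{\max}, \epsilon^{-1})$-bit vector of occupied sub-cell positions, so the input restricted to any collection of cells is $\mathrm{poly}(W_{\max},\epsilon^{-1})$-sparse per cell. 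We then run the refined unit-interval procedure essentially verbatim: guess $T := \sum_{c\in\mathcal{C}_e}\mathrm{MIS}(c)$ up to a constant by running $O(\log n)$ powers of two in parallel; agnostically subsample the even cells at $\mathrm{poly}(\log n, \epsilon^{-1})$ geometrically spaced rates; for each rate maintain a $k$-sparse-recovery sketch (\textsf{Count-Sketch} with the $k$-tail guarantee, \cite{CCF02,BCIS09}) with $k = \mathrm{poly}(\log n, \epsilon^{-1}, W_{\max})$ on the occupied-sub-cell vectors of the surviving cells; and maintain an $\ell_0$-estimator \cite{DBLP:conf/pods/KaneNW10} per value class to pick a threshold separating the heavy classes (together carrying a $(1-\epsilon)$ fraction of $T$, and for which the matching sampling rate leaves enough surviving cells to concentrate) from the light classes (total contribution $\le \epsilon T$, not overestimated). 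At the end of the stream we decode each surviving cell, compute $\mathrm{MIS}(c)$ exactly on its discretized instance, bucket it, and form the inverse-probability estimate $\sum_j (1+\epsilon)^j \cdot (\#\text{ surviving cells in } \mathcal{V}_j) / (\text{sampling rate})$ over the heavy classes. Repeating for $\mathcal{C}_o$ and reporting the maximum yields a $(2+\epsilon)$-approximation to $\textsf{MIS}$ in $\mathrm{poly}(\log n, \epsilon^{-1}, W_{\max})$ space, since the number of rates, the number of value classes, the sparsity $k$, and the sub-grid resolution are all of that form.

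The main obstacle is the discretization step together with the sparsity bound it yields: unlike the unit-interval case (where a cell contributes a single number) or the bounded-degree case (where an $r_i$-\textsf{Structure} has only $O(\kappa_{\max})$ intervals), a cell here may contain arbitrarily many intervals, and it is precisely the lemma that snapping onto a random fine sub-grid perturbs every $\mathrm{MIS}(c)$ by only a $(1+\epsilon)$ factor that lets us encode a cell's contribution in $\mathrm{poly}(W_{\max},\epsilon^{-1})$ bits and exactly recover it via $k$-sparse recovery. Given that lemma and the $W_{\max}$-dependent sparsity bound, the subsampling/thresholding analysis and its concentration argument mirror those for unit intervals, so I expect no further difficulty there.
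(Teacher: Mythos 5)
Your outer decomposition is a genuinely different (and cleaner) route than the paper's: the paper keeps the nested grid with level-$i$ cells of side $(1+\epsilon)^{i+1}/2$ and reuses the $r_i$-\textsf{Structure} machinery, paying the factor $2$ by discarding boundary-crossing intervals under the random shift, whereas you pay it via an even/odd coloring of a single coarse grid of side $2W_{\max}$. That part is correct: intervals centered in distinct same-parity cells are separated by at least $2W_{\max} - 2\cdot(W_{\max}/2) > 0$, so $\max\bigl(\sum_{c\in\mathcal{C}_e}\mathrm{MIS}(c), \sum_{c\in\mathcal{C}_o}\mathrm{MIS}(c)\bigr)$ is indeed a $2$-approximation, and $\mathrm{MIS}(c) = O(W_{\max})$ so the value-class bucketing is legitimate.

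The gap is the discretization lemma, and it is not a fixable technicality: no snapping to a sub-grid of any fixed resolution $\delta$ (independent of the input precision) can preserve $\mathrm{MIS}(c)$ to a $(1+\epsilon)$ factor, randomly shifted or not. Take $k$ unit-length intervals with left endpoints $0, 1+\eta, 2(1+\eta), \ldots$ versus left endpoints $0, 1-\eta, 2(1-\eta),\ldots$ with $\eta \ll \delta$: the first family is pairwise disjoint ($\mathrm{MIS}=k$) while in the second consecutive intervals overlap ($\mathrm{MIS}=\lceil k/2\rceil$), yet with probability $1-O(k\eta/\delta)$ over the shift both snap to the identical occupied-sub-cell vector. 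So the quantity you recover from the sketch cannot determine $\mathrm{MIS}(c)$ to better than a factor $2$, your per-cell $(1+\epsilon)$ estimate fails, and the end-to-end guarantee degrades to roughly $4$ rather than $2+\epsilon$. (The random-shift arguments in \cite{CP15} and in this paper are used to bound the loss from \emph{discarding} intervals that cross cell boundaries at a scale comparable to their own length; they do not give endpoint discretization at a scale \emph{below} the minimum interval length, which is what you need here.) A secondary issue: even granting sparsity per cell, your substream at rate $p_j$ is polluted by the up to $n$ cells with tiny $\mathrm{MIS}(c)$, and unlike the unit-interval algorithm there is no per-update attribute (like the interval's own weight) by which to filter them out before sparse recovery; deciding $\mathrm{MIS}(c) < (1+\epsilon)^j$ requires decoding $c$, which requires the sparsity you are trying to establish. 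The paper sidesteps both problems by working level-by-level in interval length (so each $r_i$-\textsf{Structure} is decoded exactly, with an $\ell_0$ test to discard overfull structures), which is where the $W_{\max}$ dependence in the space genuinely enters.
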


\subsection{Unit-Radius Disks}

We generalize the \textsf{WMIS} turnstile streaming algorithm for unit length intervals to unit radius disks in $\mathbb{R}^2$. The approximation ratio for disks is closely related to the optimal circle packing constant. We leverage the hexagonal packing of circles in the 2-D plane to obtain the following result:

\begin{theorem}[Theorem \ref{thm:turnstile_geometric_disks}, informal]
There exists a turnstile streaming algorithm achieving a $\left(\frac{8\sqrt{3}}{\pi}+\epsilon\right)$-approximation to estimate $\textsf{WMIS}$ of unit disks with constant probability and in poly$\left(\frac{\log(n)}{\epsilon}\right)$ space. 
\end{theorem}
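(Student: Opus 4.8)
The plan is to mirror the turnstile algorithm for unit-length intervals (Theorem~\ref{thm:weighted_ins_del_unit}) essentially verbatim, replacing the randomly shifted unit grid on the line by a randomly shifted \emph{hexagonal} tiling of the plane. Concretely, I would tile $\mathbb{R}^2$ by the Voronoi cells of the optimal (triangular-lattice) circle packing --- regular hexagons that circumscribe a unit disk --- and shift this tiling by a uniformly random vector in its fundamental domain. Each incoming disk $D(d_j,1,w_j)$ is ``snapped'' to the unique cell containing its center $d_j$; as in the interval case only the maximum-weight disk landing in a cell can matter, so the target quantity reduces to an appropriately weighted count of non-empty cells. I would then color the cells with a constant number of classes so that cells in a common class are pairwise far apart, run the estimation on each class in parallel, and output the maximum over classes, paralleling the $\max(X_e,X_o)$ step.

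For the estimation itself I would invoke the unit-interval machinery as a black box: partition the cells into $\mathrm{poly}(\epsilon^{-1}\log n)$ geometric weight classes $\mathcal{W}_i$ according to the maximum weight of a disk landing in them, feed each weight class into its own substream, and approximately count non-empty cells per class. The na\"ive version (an $\ell_0$-estimator per substream) gives a constant-factor-worse bound, and the refined version --- agnostic subsampling of cells at a geometric range of rates, running a $k$-\textsf{Sparse Recovery} / \textsf{Count-Sketch} procedure on each resulting substream to recover the surviving cells exactly, together with the $\ell_0$-based thresholding that discards the weight classes contributing only an $\epsilon$-fraction of the optimum --- upgrades the per-class guarantee to $(1\pm\epsilon)$. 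All of this is one-dimensional bookkeeping on a hashed cell identifier, so it transfers with essentially no change and keeps the space at $\mathrm{poly}(\epsilon^{-1}\log n)$; the only care needed is that a hexagonal cell identifier together with the random shift and coloring can be stored in $O(\log n)$ bits.

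The genuinely new ingredient, and the step I expect to be the main obstacle, is the geometric ``sandwich'' relating $\beta$ to the cell-based quantity $\widehat\beta$ with ratio exactly $\tfrac{8\sqrt{3}}{\pi}=4\cdot\tfrac{\sqrt{12}}{\pi}$. One direction --- that $\widehat\beta$, suitably normalized, is at most $\beta$ --- should follow because the selected per-cell maximum-weight disks, restricted to a single color class, form a valid independent set. The reverse inequality, $\beta\le(\tfrac{8\sqrt{3}}{\pi}+\epsilon)\widehat\beta$, is where the hexagonal circle-packing constant enters: I would argue via an area/packing bound that the disks of an optimal weighted independent set cannot ``occupy'' more cells --- counted with the color-class normalization and averaged over the random shift --- than $\tfrac{8\sqrt{3}}{\pi}$ times what the algorithm recovers, with the factor $4$ being the planar analogue of the even/odd split in the interval algorithm and the factor $\tfrac{\sqrt{12}}{\pi}$ being exactly the reciprocal of the hexagonal packing density $\tfrac{\pi}{\sqrt{12}}$. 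The delicate point is getting this \emph{tight} constant rather than a crude integer bound from a pure coloring argument: the hexagon size and coloring must be chosen so that the ``at most one optimal disk per cell'' loss and the ``few colors'' loss multiply out to precisely $\tfrac{8\sqrt{3}}{\pi}$, and I would need the random shift to make the bound hold with good probability rather than merely in expectation. Finally I would combine the per-class $(1\pm\epsilon)$ estimation error with the geometric ratio and union bound over the $\mathrm{poly}(\log n)$ weight classes and $O(1)$ color classes to obtain the claimed constant-probability, $\mathrm{poly}(\epsilon^{-1}\log n)$-space guarantee.
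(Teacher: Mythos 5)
Your high-level plan shares the paper's key ingredients (a randomly shifted hexagonal packing, a constant number of ``far apart'' classes with a max over classes, and the unit-interval weight-class/$\ell_0$/sparse-recovery machinery reused per class), but the concrete geometric decomposition you propose does not work, and it also misattributes where the packing constant comes from. You tile the plane by the Voronoi cells of the packing, i.e., regular hexagons with inradius $1$. Such a hexagon has circumradius $2/\sqrt{3}$ and hence diameter $4/\sqrt{3}\approx 2.31 > 2$, so two unit-radius disks can both have centers in the same cell and still be disjoint. This breaks the reduction ``only the maximum-weight disk landing in a cell can matter,'' which is the load-bearing step that lets the whole problem collapse to a weighted count of non-empty cells. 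The paper avoids this by taking the cells to be the \emph{circles of the packing themselves} (diameter exactly $2$, so all disks snapped to one circle pairwise intersect, up to the measure-zero boundary event killed by the random shift), and by \emph{discarding} every disk whose center falls in the gaps between circles.

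That discarding step is precisely where $\frac{\sqrt{12}}{\pi}$ enters: the circles cover a $\frac{\pi}{\sqrt{12}}$ fraction of the plane, so under the random shift the surviving portion of the optimal solution has expected weight $\frac{\pi}{\sqrt{12}}\beta$ (made to hold with high probability via a Chernoff bound when $\beta=\Omega(\epsilon^{-2}\log n)$); the remaining factor $4$ comes from averaging over the four equivalence classes of circles, exactly as you anticipated. Your proposed alternative --- keeping all disks and proving a ``reverse inequality'' via an area/packing bound on how many cells the optimal disks occupy --- is not just a different route, it is the part you yourself flag as the main obstacle, and you give no mechanism for it; with full-plane hexagonal cells the natural coloring argument only yields a crude integer constant, not $\frac{8\sqrt{3}}{\pi}$. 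So the fix is structural: shrink the cells to the inscribed unit circles, accept the $\left(1-\frac{\pi}{\sqrt{12}}\right)$ loss from discarded disks, and then your per-class $(1\pm\epsilon)$ estimation argument goes through as in the interval case.
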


We note that a greedy algorithm for unweighted disks obtains a $5$-approximation to $\alpha$ \cite{erlebach2003maximum} and the space required is $O\left(\alpha\right)$. The greedy algorithm can be extended to obtain a $(5 + \epsilon)$-approximation in poly$\left(\frac{\log n }{\epsilon}\right)$ space using the sampling approach we presented in Section \ref{sec:turnstile_unit_interval}. However, beating the approximation ratio achieved by the greedy algorithm requires geometric insight.
Critically, we use the hexagonal packing of unit circles in a plane introduced by Lagrange \footnote{See \url{https://en.wikipedia.org/wiki/Circle_packing}}, which was shown to be optimal by Toth \cite{chang2010simple}. 

The hexagonal packing covers a $\frac{\pi}{\sqrt{12}}$ fraction of the area in two dimensions. We then partition the unit circles in the hexagonal packing into equivalence classes such that two circles in the same equivalence class are at least a unit distance apart. Formally, let $c_1, c_2$ be two unit circles in the hexagonal packing of the plane lying in the same equivalence class. Then, for all points $p_1 \in c_1$, $p_i \in c_2$, $\|p_1 - p_2 \|_2 \geq 1$. Therefore, if two input disks of unit radius have centers lying in distinct circles belong to the same equivalence class, the disks must be independent, as long as the disk are not centered on the boundary of the circles. 

Randomly shifting the underlying hexagonal packing ensures this happens with probability $1$. We then show that we can partition the hexagonal packing into four equivalence classes such that their union covers all the circles in the packing and disks lying in distinct circles of the same equivalence class are independent. 
%Further, the space required is proportional to the size of the optimal solution, $\alpha$. We then extend the sampling techniques developed unit-length intervals to convert this algorithm into one that runs in $\poly{\frac{\log(n)}{\epsilon}}$ space.

Algorithmically, we first impose a grid $\Delta$ of the hexagonal packing of circles with radius $1$ and shift it by a random integer. We discard all disks that do not have centers lying inside the grid $\Delta$.   
Given that a hexagonal packing covers a $\pi/\sqrt{12}$ fraction of the area, in expectation, we discard a $( 1 - \pi/\sqrt{12})$ fraction of $|\textsf{OPT}|$. We note that if we could accurately estimate the remaining \textsf{WMIS}, and scale the estimator by $\sqrt{12}/\pi$, we would obtain a $(\sqrt{12}/\pi)$-approximation to $|\textsf{OPT}|$. Let $||\textsf{OPT}|_{\textrm{hp}}|$ denote the remaining \textsf{WMIS}.  By Theorem \ref{athm:ins_del_unit_lowerbound} such an approximation requires $\Omega(n)$ space. 

\begin{figure*}
\centering
\textbf{Hexagonal Packing of Circles in the Plane}\par\medskip
%\vspace{-0.1in}
\includegraphics[width=0.5\textwidth]{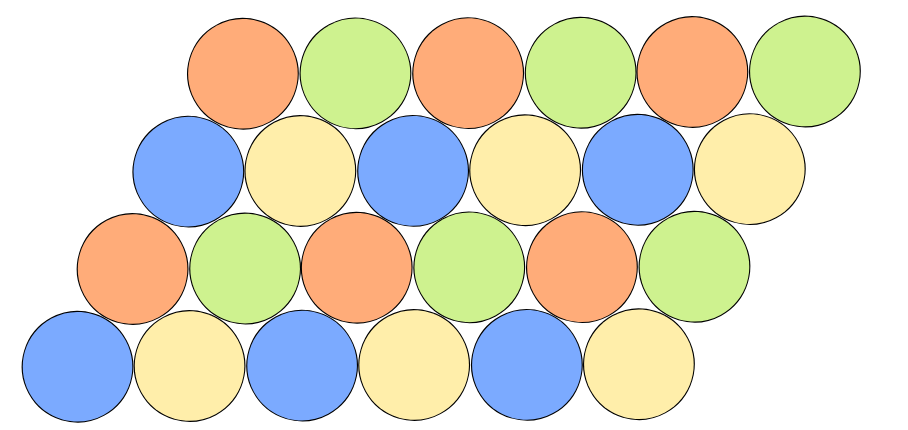}
\caption{We illustrate the hexagonal circle packing in the Euclidean Plane. Each color represents an equivalence class. Observe that input disks that are centered in distinct circles of the same equivalence class are independent, since the circles are at least $2$ units apart.} 
\label{fig:hexagonal_packing}
\end{figure*}

We then observe that the hexagonal circle packing grid can be partitioned into four equivalence classes. We use $\mathcal{C}_1, \mathcal{C}_2, \mathcal{C}_3$ and $\mathcal{C}_4$ to denote these equivalence classes. Since the equivalence classes form a partition of the hexagonal packing, at least one of them must contain a  $1/4$-fraction of the remaining maximum independent set. W.l.o.g, let $\mathcal{C}_1$ be the partition that contributes the most to $|\textsf{OPT}|$. Then, $|\textsf{OPT}_{\mathcal{C}_1}| \geq \frac{1}{4} |\textsf{OPT}_{\textrm{hp}}|$.
Therefore, we focus on designing an estimator for $\mathcal{C}_1$. We show a $(1+\epsilon)$-approximation to $\mathcal{C}_1$ in $\poly{\log(n), \epsilon^{-1}}$ space generalizing the algorithmic ideas we introduced for Theorem \ref{thm:weighted_ins_del_unit}. This implies an overall $\left(\frac{4 \sqrt{12}}{\pi} + \epsilon\right) = \left(\frac{8 \sqrt{3}}{\pi} + \epsilon \right)$ approximation for $|\textsf{OPT}|$.

\newpage
\bibliographystyle{alpha}
\bibliography{main_arxiv}

\appendix
\newpage
\noindent{\Huge\textbf{Appendix} }
\section{Weighted Interval Selection for Unit Intervals}
\label{sec:turnstile_unit_interval}
In this section, we present an algorithm to approximate the weight of the maximum independent set, $\beta$, for unit-length intervals in turnstile streams. Interestingly, we note that estimating $\beta$ has the same complexity as approximating $\alpha$ for unit-length intervals. That is, we obtain a $(2 + \epsilon)$-approximation to $\beta$ in the turnstile model, which immediately implies $(2+\epsilon)$-approximation for $\alpha$, where the weights are identical. We complement this result with a lower bound that shows any $(2-\epsilon)$-approximation to $\alpha$ requires $\Omega(n)$ space. The main algorithmic guarantee we achieve is as follows:
 
\begin{theorem}
\label{thm:weighted_ins_del_unit}
Let $\mathcal{P}$ be a turnstile stream of weighted unit intervals such that the weights are polynomially bounded in $n$ and let $\epsilon \in (0, 1/2)$. There exists an algorithm that outputs an estimator $Y$ such that with probability at least $9/10$ the following guarantees hold:
\begin{enumerate}
    \item $\frac{\beta}{2(1+\epsilon)} \leq Y \leq \beta$.
    \item The total space used is $\text{poly}\left(\frac{\log(n)}{\epsilon}\right)$. 
\end{enumerate}
\end{theorem}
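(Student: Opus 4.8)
The plan is to reduce estimating $\beta$ to approximately counting, in each of $\mathrm{poly}(\log n,\epsilon^{-1})$ geometrically increasing weight classes, the number of grid cells whose heaviest interval lies in that class, and then to realize this count in turnstile streams via agnostic subsampling of cells combined with exact sparse recovery. Concretely, fix a (randomly shifted) grid of unit cells, snap every interval to the cell containing its center, and split the cells by parity into $\mathcal{C}_e$ and $\mathcal{C}_o$. Since two same-parity cells are at distance $\ge 1$, and two disjoint unit intervals cannot both have centers in one cell, every optimal solution decomposes into an even part and an odd part, each a feasible independent set with at most one interval per cell; hence $\max(|\textsf{OPT}_{\mathcal{C}_e}|,|\textsf{OPT}_{\mathcal{C}_o}|)\in[\beta/2,\beta]$, and within one parity $|\textsf{OPT}_{\mathcal{C}_e}|=\sum_{c}m(c)$ summed over non-empty even cells, where $m(c)$ is the maximum interval weight in $c$. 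So it suffices, for each parity, to approximate $L:=\sum_c m(c)$ from below to a factor $(1+\epsilon)$ in $\mathrm{poly}(\log n,\epsilon^{-1})$ space; the final output $Y$ is the max of the two rescaled estimates, which lands in $[\tfrac{\beta}{2(1+\epsilon)},\beta]$.

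For the ideal estimator, partition cells into classes $\mathcal{W}_i=\{c:(1+\epsilon)^i\le m(c)<(1+\epsilon)^{i+1}\}$ for $i\le b=O(\epsilon^{-1}\log n)$ (weights are $\mathrm{poly}(n)$); then $\sum_i (1+\epsilon)^i|\mathcal{W}_i|$ approximates $L$ to within $(1+\epsilon)$, so it is enough to estimate each $|\mathcal{W}_i|$. The difficulty is that $m(c)$ is neither linear nor monotone under turnstile updates and the partition can change drastically during the stream. I handle this as follows: (i) run Algorithm \ref{alg:naive_approximation} (or a guess-and-double variant) in parallel to obtain a constant-factor estimate $\widehat L$ of $L$; (ii) at the start of the stream, agnostically hash each cell into a geometric ladder of subsampling rates $p\in\{2^{-1},2^{-2},\dots\}$; (iii) for every weight class $j$ and every rate $p$, maintain on the substream of (surviving cell, interval-in-class-$j$) pairs a $k$-\textsf{Sparse Recovery} structure (\textsf{Count-Sketch} with its $k$-tail guarantee \cite{CCF02,BCIS09}, $k=\mathrm{poly}(\log n,\epsilon^{-1})$), together with an $\ell_0$-estimator \cite{DBLP:conf/pods/KaneNW10} for class $j$.

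The crucial observation is that a cell enters the class-$j$ substream only if it contains a class-$j$ interval, hence its heaviest interval is $\ge(1+\epsilon)^j$, so at most $L/(1+\epsilon)^j$ cells can ever appear there. Choosing the matching rate $p_j\approx\min\{1,\,k(1+\epsilon)^j/\widehat L\}$ makes the surviving set $O(k)$-sparse with high probability, so sparse recovery returns every surviving cell \emph{exactly}; and for a recovered cell we can look it up in the class-$j'$ substreams for $j'>j$ (which are also $O(k)$-sparse at rate $p_j$) to reconstruct its exact maximum weight $m(c)$ and thus its true class after the fact. I then set $|\mathcal{W}_i|\approx s_i/p_i$, where $s_i$ counts the recovered cells at rate $p_i$ whose exact class equals $i$; a Chernoff bound gives $(1\pm\epsilon)$ accuracy precisely when $p_i|\mathcal{W}_i|=\Omega(\epsilon^{-2}\log n)$, i.e.\ on the ``heavy'' classes. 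Classes below this threshold — detectable from the $\ell_0$-estimates — jointly account for at most an $\epsilon$-fraction of $L$, so I discard them (or bound them without overestimating), which costs only an $\epsilon L$ one-sided error. Summing $\sum_{i\ \mathrm{heavy}}(1+\epsilon)^i\, s_i/p_i$, rescaling conservatively so the result is $\le L$ with high probability, and taking the max over the two parities yields $Y$ with $\tfrac{\beta}{2(1+\epsilon)}\le Y\le\beta$ after reparametrizing $\epsilon$; a union bound over the $\mathrm{poly}(\log n,\epsilon^{-1})$ data structures (each amplified to failure probability $1/\mathrm{poly}(n)$) plus the Chernoff bounds gives success probability $\ge 9/10$, with total space $\mathrm{poly}(\log n,\epsilon^{-1})$.

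The main obstacle is exactly step (iii): converting the input-dependent, non-monotone weight-class partition into something a fixed turnstile sketch can track. Three realizations make it work: (a) restricting a substream to cells that possess an interval of a given class automatically caps the number of such cells via the ``$\textsf{OPT}$ budget,'' which is what lets a fixed-$k$ sparse-recovery structure succeed at the corresponding subsampling rate; (b) exact recovery lets us recompute $m(c)$ post hoc and certify class membership even though it was unknowable during the stream; and (c) thresholding on the $\ell_0$-estimates cleanly quarantines the classes on which subsampling cannot concentrate while guaranteeing they are both negligible and not overcounted. The remaining ingredients — the grid/parity decomposition, the geometric-class approximation bookkeeping, and the concentration and union-bound arguments — are routine.
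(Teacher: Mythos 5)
Your proposal is correct and follows essentially the same route as the paper's proof: the randomly shifted unit grid with even/odd parity decomposition, geometric weight classes keyed to the per-cell maximum weight, a constant-factor pre-estimate from the na\"ive $\ell_0$-based algorithm, agnostic subsampling of cells at class-dependent rates combined with $k$-sparse recovery, and thresholding away non-contributing classes via the $\ell_0$ estimates. The only (cosmetic) difference is that you estimate $|\mathcal{W}_i|$ directly as $s_i/p_i$ from the recovered sample, whereas the paper normalizes its per-class estimator by the $\ell_0$ estimate $X_{\mathcal{W}'_i}$; both yield the same $(1\pm\epsilon)$ guarantee per contributing class.
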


We first impose a grid $\Delta$ of side length $1$ and shift it by a random integer. We then snap each interval to the cell containing the center of the interval and partition the real line into odd and even cells. Let $\mathcal{C}_e$ be the set of all even cells and $\mathcal{C}_o$ be the set of all odd cells. By averaging, either $|\text{\texttt{OPT}}_{\mathcal{C}_e}|$ or $|\text{\texttt{OPT}}_{\mathcal{C}_o}|$ is at least $\frac{\beta}{2}$. We describe an estimator that gives a $(1+\epsilon)$-approximation to $|\text{\texttt{OPT}}_{\mathcal{C}_e}|$ and $|\text{\texttt{OPT}}_{\mathcal{C}_o}|$. W.l.o.g let $|\text{\texttt{OPT}}_{\mathcal{C}_e}| \geq |\text{\texttt{OPT}}_{\mathcal{C}_o}|$.  
%We develop an estimator that gives a $(1+\epsilon)$-approximation to $|\text{\texttt{OPT}}_{\mathcal{C}_e}|$ as well as $|\text{\texttt{OPT}}_{\mathcal{C}_o}|$. 
Therefore, taking the max of the two estimators, we obtain a $(2+\epsilon)$-approximation to $\beta$.

Having reduced the problem to estimating $|\text{\texttt{OPT}}_{\mathcal{C}_e}|$, we observe that each even cell has at most $1$ interval, namely the max weight interval landing in the cell, contributing to $\texttt{OPT}_{\mathcal{C}_e}$. Then, partitioning the cells in $\mathcal{C}_e$ into poly$(\log(n))$ weight classes based on the max weight interval in each cell and approximately counting the number of cells in each weight class suffices to estimate $|\text{\texttt{OPT}}_{\mathcal{C}_e}|$ up to a $(1+\epsilon)$-factor. Given such a partition, we can create a substream for each weight class in the partition and compute the $\ell_0$ norm of each substream. However, we do not know the partition of the cells into the weight classes a priori and this partition can vary drastically over the course of stream given that intervals can be deleted. The main technical challenge is to simulate this partition. A key tool we use is to estimate the $\ell_0$ norm of a vector in turnstile streams is a well studied problem and we use a result of Kane, Nelson and Woodruff \cite{DBLP:conf/pods/KaneNW10} to obtain a $(1\pm\epsilon)$-approximation in poly$(\frac{\log(n)}{\epsilon})$ space. 

\begin{theorem}{($\ell_0$-Norm Estimation \cite{DBLP:conf/pods/KaneNW10}.)}
\label{thm:ell_0}
In the turnstile model, there is an algorithm for $(1\pm \epsilon)$-approximating the $\ell_0$-norm (number of non-zero coordinates) of a vector using space $\text{poly}\left(\frac{\log(n)}{\epsilon}\right)$ with success probability $2/3$. 
\end{theorem}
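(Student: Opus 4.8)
\textbf{The plan} is the standard geometric-subsampling reduction for $\ell_0$: build $L=\lceil\log_2 n\rceil+1$ \emph{nested} subsamples of $[n]$, keep a small \emph{linear} sketch at each level (so insertions and deletions are handled transparently), and read the answer off the level where the surviving support has size $\Theta(\epsilon^{-2})$. Concretely, I would fix a hash $h:[n]\to\{0,1,\dots,L\}$ from an $O(\log n)$-wise independent family with $\Pr[h(i)\ge j]=2^{-j}$, and set $S_j=\{i:h(i)\ge j\}$, so that $S_0=[n]\supseteq S_1\supseteq\cdots\supseteq S_L$ and each coordinate survives to level $j$ with probability $2^{-j}$. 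For each $j$ I would run, on the substream of updates restricted to $S_j$, a $k$-\textsf{Sparse Recovery} structure with $k=\Theta(\epsilon^{-2}\log n)$ — \textsf{Count-Sketch} \cite{CCF02}, which has the $k$-tail guarantee exploited in \cite{BCIS09} — together with a random low-degree verification fingerprint $\sum_i r^{i}x_i \bmod p$ (prime $p=\mathrm{poly}(n)$), so that at query time the structure either certifies that $x|_{S_j}$ is $k$-sparse and outputs it exactly (a $k$-sparse vector has zero $k$-tail, so recovery is exact) or reports ``not $k$-sparse'' (using linearity, the fingerprint of $x|_{S_j}$ minus that of any candidate $k$-sparse $\widehat x$ equals the fingerprint of their difference, which is nonzero with probability $\ge 1-n/p$ whenever $x|_{S_j}\neq\widehat x$). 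Each level costs $O(k\log n)=\mathrm{poly}(\log n/\epsilon)$ bits, there are $O(\log n)$ levels, and the hash seeds cost $O(\log^2 n)$ bits, so the total space is $\mathrm{poly}(\log n/\epsilon)$.

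\textbf{The estimator and its correctness.} At query time I would let $j^\star$ be the \emph{smallest} level whose structure certifies $k$-sparsity — well defined and monotone, since by nesting $x|_{S_j}$ $k$-sparse implies $x|_{S_{j'}}$ $k$-sparse for all $j'>j$ — and output $\widehat{\ell_0}=2^{j^\star}\,\|x|_{S_{j^\star}}\|_0$, computing $\|x|_{S_{j^\star}}\|_0$ from the exactly recovered vector. For the analysis I would first union-bound over the $O(\log n)$ levels the event that every structure classifies correctly (failure $1/\mathrm{poly}(n)$ with $O(\log n)$ \textsf{Count-Sketch} rows); on that event $j^\star$ is a deterministic function of $h$. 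Writing $\mu_j=2^{-j}\ell_0=\mathbb{E}\,\|x|_{S_j}\|_0$ and using that the survival indicators are pairwise independent, so $\mathrm{Var}\,\|x|_{S_j}\|_0\le\mu_j$, Chebyshev (or Chernoff under the $O(\log n)$-wise independence) gives $\|x|_{S_j}\|_0=(1\pm\epsilon/3)\mu_j$ whenever $\mu_j\ge C\epsilon^{-2}$, and with $k\ge C\epsilon^{-2}\log n$ for a large constant $C$ this holds at all levels simultaneously with probability $\ge 9/10$. Pinning down the deterministic ``ideal'' level $\ell$ with $\mu_\ell\in[k/4,k/2]$: on the good event $\|x|_{S_\ell}\|_0\le(1+\epsilon/3)\mu_\ell\le k$, so $\ell$ is certified and $j^\star\le\ell$, while $\|x|_{S_{\ell-1}}\|_0\ge(1-\epsilon/3)\cdot 2\mu_\ell>k$, so $\ell-1$ is not, and more generally $j^\star\in\{\ell-O(1),\dots,\ell\}$; for every level in this window $\mu_j=\Omega(\epsilon^{-2})$, so the concentration bound applies at $j=j^\star$ and yields $\widehat{\ell_0}=2^{j^\star}\,\|x|_{S_{j^\star}}\|_0=(1\pm\epsilon/3)\,2^{j^\star}\mu_{j^\star}=(1\pm\epsilon/3)\,\ell_0$, inside the claimed $(1\pm\epsilon)$ factor. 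The total failure probability is $1/10+1/\mathrm{poly}(n)<1/3$, meeting the $2/3$ requirement (and amplifiable by independent repetitions and a median).

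\textbf{The main obstacle} is the turnstile aspect. After arbitrary insertions and deletions a sampled substream can contain many coordinates whose net values collide or partially cancel, so ``is $x|_{S_j}$ $k$-sparse?'' and ``what is it?'' cannot be settled from raw bucket sums; this is exactly what the $k$-\textsf{Sparse Recovery} primitive with the $k$-tail guarantee plus the random fingerprint buys — a non-$k$-sparse $x|_{S_j}$ differs from every recovered $k$-sparse candidate, hence is detected with probability $\ge 1-n/p$ — and it is also where the extra $\log n$ factors in the space come from. The second delicate point is that $j^\star$ is data-dependent and correlated with $\|x|_{S_{j^\star}}\|_0$; I handle this by establishing the relative-error bound simultaneously at all $O(\log n)$ levels (so it automatically holds at whichever level turns out to be $j^\star$) and by sandwiching $j^\star$ within $O(1)$ of the deterministic ideal level, which forces its surviving support to be $\Omega(\epsilon^{-2})$ — the regime where Chebyshev already gives $\epsilon$ relative error. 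An alternative to the exact-recovery step, closer in spirit to the original argument of \cite{DBLP:conf/pods/KaneNW10}, is to keep $\Theta(\epsilon^{-2})$ hash buckets per level, each equipped with a $1$-sparse-recovery gadget, count the buckets that resolve to a single nonzero coordinate, and invert the balls-in-bins occupancy formula; that variant needs $4$-wise independent bucket hashing for the variance bound but avoids writing down the recovered support.
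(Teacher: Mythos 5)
The paper does not actually prove this statement: Theorem \ref{thm:ell_0} is imported wholesale from Kane, Nelson and Woodruff \cite{DBLP:conf/pods/KaneNW10} and used as a black box, so there is no internal proof to compare against. Your write-up is a legitimate self-contained argument along the standard ``geometric subsampling $+$ exact sparse recovery with fingerprint verification'' route, which is a folklore way to get $\ell_0$ up to $\mathrm{poly}(\log n/\epsilon)$ space; it is coarser than what \cite{DBLP:conf/pods/KaneNW10} actually do (their algorithm is engineered for optimal $O(\epsilon^{-2}+\log n)$-type space, via a constant-factor $\ell_0$ estimate plus balanced hashing into buckets with $1$-sparse detection and an occupancy inversion --- essentially the ``alternative'' you sketch in your last sentence), but coarseness is fine here since the theorem only claims $\mathrm{poly}(\log n/\epsilon)$ space and success probability $2/3$. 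The core of your argument is sound: nesting makes the certified-sparsity levels monotone, the fingerprint makes ``certified'' imply ``exactly recovered'' up to $1/\mathrm{poly}(n)$ error, and establishing concentration simultaneously at all levels $j\le\ell$ (each of which has $\mu_j\ge k/4=\Omega(\epsilon^{-2}\log n)$, so limited-independence Chernoff gives $1/\mathrm{poly}(n)$ failure per level) correctly neutralizes the data-dependence of $j^\star$, because on the good event $j^\star\le\ell$.

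One step is wrong as stated, though harmlessly so: you claim level $\ell-1$ is not certified because $\|x|_{S_{\ell-1}}\|_0\ge(1-\epsilon/3)\cdot 2\mu_\ell>k$. With $\mu_\ell\in[k/4,k/2]$ you only get $2\mu_\ell\ge k/2$, so $(1-\epsilon/3)\cdot 2\mu_\ell$ need not exceed $k$, and the lower sandwich $j^\star\ge\ell-O(1)$ does not follow from this inequality. Fortunately you never need it: all the final bound uses is $j^\star\le\ell$ (which forces $\mu_{j^\star}\ge\mu_\ell=\Omega(\epsilon^{-2}\log n)$, the regime where your simultaneous concentration applies) together with exact recovery at the certified level, so the estimate $2^{j^\star}\|x|_{S_{j^\star}}\|_0=(1\pm\epsilon/3)\ell_0$ stands once that sub-claim is deleted. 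You should also say a word about the degenerate case $\ell_0<k/4$, where no level has $\mu_\ell\in[k/4,k/2]$; there $j^\star=0$ and the output is exact, so the theorem still holds, but the ``ideal level'' bookkeeping as written silently assumes it exists.
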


We begin by describing a simple algorithm which obtains a weaker $(9/2+\epsilon)$-approximation to $|\text{\texttt{OPT}}_{\mathcal{C}_e}|$ and in turn a $(9+\epsilon)$-approximation to $\beta$. Formally, consider a partition of cells in $\mathcal{C}_e$ into $b = \textrm{poly}(\log(n))$ weight classes $\mathcal{W}_i = \{ c \in \mathcal{C}_e | (1+1/2)^i \leq m(c) < (1+1/2)^{i+1} \}$, where $m(c)$ is the maximum weight of an interval in $c$. Create a substream for each weight class $\mathcal{W}_i$, denoted by $\mathcal{W}'_i$, and feed an input interval into this substream if it's weight lies in the range $[(1+1/2)^i, (1+1/2)^{i+1})$. Let $t_i$ be the corresponding $\ell_0$ estimate for substream $\mathcal{W}'_i$. Then, we can approximate the contribution of $\mathcal{W}_i$ by $(1 + 1/2)^{i+1}\cdot t_i$. Summing over the $b$ weight classes gives an estimate for $|\text{\texttt{OPT}}_{\mathcal{C}_e}|$.

Given access to an algorithm for estimating the $\ell_0$-norm, the Na\"ive Approximation Algorithm (\ref{alg:naive_approximation}) satisfies the following guarantee:

\begin{lemma}
\label{lem:NaiveApprox}
The Na\"ive Approximation Algorithm (\ref{alg:naive_approximation}) outputs an estimate $X$ such that with probability $99/100$, $\frac{\beta}{9(1+\epsilon)} \leq X \leq \beta$ and runs in space poly$\left(\frac{\log(n}{\epsilon}\right)$. 
\end{lemma}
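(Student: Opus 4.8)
\emph{Proof proposal.} The plan is to carry out the three-phase argument implicit in the discussion above: a purely combinatorial reduction, a deterministic error-tracking calculation conditioned on all $\ell_0$-estimators being accurate, and finally a union bound with the space accounting.

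\textbf{Phase 1 (structure).} First I would record the geometric facts underlying the reduction. Since every interval has length exactly $1$, two intervals whose centers lie in the same cell of side length $1$ are at distance strictly less than $1$, hence they intersect; and two intervals whose centers lie in distinct cells of the same parity are at distance strictly greater than $1$, hence they are disjoint (the random shift is only needed so that with probability $1$ no center lands on a cell boundary, and otherwise plays no role in this lemma). Consequently a maximum-weight independent set among the intervals with centers in even cells picks exactly one, necessarily a heaviest, interval from each non-empty even cell, so $|\texttt{OPT}_{\mathcal{C}_e}| = \sum_{c\in\mathcal{C}_e} m(c)$, and likewise for $\mathcal{C}_o$. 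Splitting a global optimum according to the parity of the cell containing each chosen center gives $\max(|\texttt{OPT}_{\mathcal{C}_e}|,|\texttt{OPT}_{\mathcal{C}_o}|)\ge\beta/2$, while $|\texttt{OPT}_{\mathcal{C}_e}|,|\texttt{OPT}_{\mathcal{C}_o}|\le\beta$ since each is an independent set in $\mathcal{P}$. After rescaling so the smallest weight is $1$, the classes $\mathcal{W}_i$, $0\le i<b$ with $b=O(\log n)$ (weights are $\mathrm{poly}(n)$), partition the non-empty even cells.

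\textbf{Phase 2 (the estimate).} Fix the even cells and let $n_i$ be the $\ell_0$-norm of substream $\mathcal{W}'_i$, i.e.\ the number of even cells containing at least one interval of weight class $i$. Each cell of $\mathcal{W}_i$ contributes its heaviest interval to $\mathcal{W}'_i$, so $|\mathcal{W}_i|\le n_i$; and a cell contributing an interval of class $i$ has maximum weight at least $(3/2)^i$, hence lies in some $\mathcal{W}_j$ with $j\ge i$, so $n_i\le\sum_{j\ge i}|\mathcal{W}_j|$. The upper bound is then a double-counting/geometric-series estimate:
\begin{align*}
\sum_i (3/2)^{i+1} n_i &\le \sum_j |\mathcal{W}_j|\sum_{i\le j}(3/2)^{i+1} \le 3\sum_j |\mathcal{W}_j|(3/2)^{j+1}\\
&= \tfrac{9}{2}\sum_j |\mathcal{W}_j|(3/2)^{j} \le \tfrac{9}{2}\sum_j\sum_{c\in\mathcal{W}_j} m(c) = \tfrac{9}{2}\,|\texttt{OPT}_{\mathcal{C}_e}|,
\end{align*}
using $\sum_{i\le j}(3/2)^{i+1}\le 3(3/2)^{j+1}$ and $|\mathcal{W}_j|(3/2)^{j}\le\sum_{c\in\mathcal{W}_j}m(c)$; the matching lower bound is $\sum_i (3/2)^{i+1} n_i \ge \sum_i (3/2)^{i+1}|\mathcal{W}_i| \ge \sum_i\sum_{c\in\mathcal{W}_i} m(c) = |\texttt{OPT}_{\mathcal{C}_e}|$, since $m(c)<(3/2)^{i+1}$ for $c\in\mathcal{W}_i$. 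Plugging in $t_i\in[(1-\epsilon')n_i,(1+\epsilon')n_i]$ from Theorem~\ref{thm:ell_0} (run with accuracy $\epsilon'=\Theta(\epsilon)$) and the normalization $X_e=\frac{2}{9(1+\epsilon)}\sum_i (3/2)^{i+1}t_i$ yields $X_e\le|\texttt{OPT}_{\mathcal{C}_e}|\le\beta$ and, when $|\texttt{OPT}_{\mathcal{C}_e}|\ge\beta/2$, $X_e\ge\frac{1-O(\epsilon)}{9}\beta$; the same holds for $X_o$. Taking $X=\max(X_e,X_o)$ and absorbing $\epsilon'$ into the stated $\epsilon$ gives $\frac{\beta}{9(1+\epsilon)}\le X\le\beta$.

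\textbf{Phase 3 (probability and space).} Everything above is deterministic once every $t_i$ is a $(1\pm\epsilon')$-approximation of $n_i$. Theorem~\ref{thm:ell_0} gives each estimate with constant success probability, so I would amplify each of the $2b=O(\log n)$ estimators to failure probability at most $1/(200b)$ via a median of $O(\log b)$ independent repetitions, then union-bound so that all are accurate with probability $\ge 99/100$. Routing an update to the correct substream is determined by $\lfloor\log_{3/2}w_j\rfloor$ and to the correct cell coordinate by the shifted center, requiring no persistent state, so the total space is $2b\cdot O(\log b)\cdot\poly{\tfrac{\log n}{\epsilon'}}=\poly{\tfrac{\log n}{\epsilon}}$. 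The step I expect to be the main obstacle is Phase 2: nailing down the constant $\tfrac{9}{2}$ in the double-counting bound — this is precisely where the algorithm pays a factor $3/2$ for rounding weights down to a class representative and a further factor $3$ for a cell being counted once in every class at or below its own — and reconciling the two-sided error of the $\ell_0$-estimator with the one-sided requirement $X\le\beta$ through the choice of $\epsilon'$.
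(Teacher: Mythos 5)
Your proposal is correct and follows essentially the same route as the paper: reduce to estimating $|\texttt{OPT}_{\mathcal{C}_e}|$ via the even/odd cell partition (losing a factor $2$), account for a factor $3/2$ from rounding weights to class representatives and a factor $3$ from summing over all classes at or below a cell's own class rather than taking the maximum, and combine with amplified $\ell_0$ estimates and a union bound. Your Phase~2 summation-exchange is simply a more explicit rendering of the paper's worst-case calculation $\sum_{i'\le i}(3/2)^{i'+1}=3((3/2)^{i+1}-1)$, and the constants and conclusion match.
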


\begin{proof}
We observe that for each non-empty cell $c \in \mathcal{C}_e$, there is exactly $1$ interval that can contribute to $|\text{\texttt{OPT}}_{\mathcal{C}_e}|$ since each cell of the grid has side length $1$ and all intervals falling in a given cell pairwise intersect. This contributing interval lies in some weight class $\mathcal{W}_i$ and our estimator approximates its weight as $(1+1/2)^{i+1}$. Here, the weights of the intervals are sandwiched between $(1+1/2)^i$ and $(1+1/2)^{i+1}$. Therefore, we overestimate the weight by a factor of at most $3/2$. 

Further, instead of taking the maximum over each cell $c$, we may have inserted intervals that lie in $c$ into all substreams $\mathcal{W}'_i$. Therefore, we take the sum of our geometrically increasing weight classes over that cell. In the worst case, we approximate the true weight of a contributing interval, $(3/2)^{i+1}$, with $\sum^{i}_{i'=1}(3/2)^{i'+1} = 3((3/2)^{i+1}-1)$. Note, we again overestimate the weight, this time by a factor of $3$. 

Finally, Theorem \ref{thm:ell_0} overestimates the $\ell_0$-norm of $\mathcal{W}_i$ by at most $1+\epsilon$ with probability at least $2/3$. We boost this probability by running $O(\log(n))$ estimators and taking the median. Union bounding over all $i \in [b]$, we simultaneously overestimate the $\ell_0$-norm of all $\mathcal{W}_i$ by at most $1+\epsilon$ with probability at least $99/100$. Therefore, the overall estimator is a $(9/2+\epsilon)$-approximation to $|\text{\texttt{OPT}}_{\mathcal{C}_e}|$. Rescaling our estimator by the above constant underestimates $|\text{\texttt{OPT}}_{\mathcal{C}_e}|$. Finally, $|\text{\texttt{OPT}}_{\mathcal{C}_e}|\geq \beta/2$ and $\frac{\beta}{(9+\epsilon)} \leq X \leq \beta$. 

Since our weights are polynomially bounded, we create poly$\left(\log_{1+\epsilon}(n)\right)$ substreams and run an $\ell_0$ estimator from Theorem \ref{thm:ell_0} on each substream. Therefore, the total space used by Algorithm \ref{alg:naive_approximation} is poly$\left(\frac{\log(n}{\epsilon}\right)$.
\end{proof}

We can thus assume we know $\beta$ and $|\text{\texttt{OPT}}_{\mathcal{C}_e}|$ up to a constant by initially making $O\left(\log(n)\right)$ guesses and running the Na\"ive Approximation Algorithm for each guess in parallel. At the end of the stream, we know the correct guess up to a constant factor, and thus can output the estimator corresponding to that branch of computation.
A key tool we use in this algorithm is $k$-\texttt{Sparse Recovery}. As mentioned in Berinde et al. \cite{BCIS09}, the $k$-tail guarantee is a sufficient condition for $k$-\texttt{Sparse Recovery}, since in a $k$-sparse vector, the elements of the tail are $0$. We note that the \texttt{Count-Sketch} Algorithm \cite{CCF02} has a $k$-tail guarantee in turnstile streams.

\begin{definition}($k$-\texttt{Sparse Recovery}.)
\label{def:sparse_rec}
Let $x$ be the input vector such that $x$ is updated coordinate-wise in the turnstile streaming model. Then, $x$ is $k$-sparse if $x$ has at most $k$ non-zero entries at the end of the stream of updates. 
Given that $x$ is $k$-sparse, a data structure that exactly recovers $x$ at the end of the stream is referred to as a $k$-\texttt{Sparse Recovery} data structure. 
\end{definition}

%The key ingredient in our algorithm is computing the $\ell_1$-heavy hitters in a turnstile stream using the Count-Min Sketch, introduced by Cormode and Muthukrishnan \cite{DBLP:conf/latin/CormodeM04} and obtain the following guarantee:

%\begin{theorem}($\ell_1$ Heavy Hitters.)
%\label{athm:cm_sketch}
%Given $\epsilon, \delta, \phi >0$ and a vector $a$, there exists an algorithm that outputs every item $a_i$ such that $|a_i| \geq (\phi+\epsilon) |a|_1$, and no item $a_i$ such that $|a_i| < \phi |a|_1$ with probability at least $\delta$, and uses space $O\left(\frac{\log(n)}{\epsilon} \log( \frac{2\log(n)}{\delta \phi})\right)$.
%\end{theorem}

Intuitively, we again simulate partitioning cells in $\mathcal{C}_e$ into $\text{poly}\left(\frac{\log(n)}{\epsilon}\right)$ weight classes according to the maximum weight occurring in each cell. Since we do not know this partition a priori, we initially create $b =O\left(\frac{\log(n)}{\epsilon}\right)$ substreams, one for each weight class and run the $\ell_0$-estimator on each one. We then make $O\left(\frac{\log(n)}{\epsilon}\right)$ guesses for $|\text{\texttt{OPT}}_{\mathcal{C}_e}|$ and run the rest of the algorithm for each branch in parallel. Additionally, we run the Na\"ive Approximation Algorithm to compute the right value of $|\text{\texttt{OPT}}_{\mathcal{C}_e}|$ up to a constant factor, which runs in space $\text{poly}\left(\frac{\log(n)}{\epsilon}\right)$. Then, we create $b = \text{poly}\left(\frac{\log(n)}{\epsilon}\right)$ substreams by agnostically sampling cells with probability $p_i = \Theta\left( \frac{b (1+\epsilon)^i \log(n)}{\epsilon^3 X}\right)$, where $X$ is the right guess for $|\text{\texttt{OPT}}_{\mathcal{C}_e}|$. Sampling at this rate preserves a sufficient number of cells from weight class $\mathcal{W}_i$. We then run a sparse recovery algorithm on the resulting substreams. 

We note that the resulting substreams are sparse. To see this, note we can filter out cells that belong weight classes $\mathcal{W}_{i'}$ for $i' < i$ by simply checking if the maximum interval seen so far lies in weight classes $\mathcal{W}_{i}$ and higher. Further, sampling with probability proportional to $\Theta\left( \frac{b (1+\epsilon)^i \log(n)}{\epsilon^3 |\text{\texttt{OPT}}_{\mathcal{C}_e}|}\right)$ ensures that the number of cells that survive from weight classes $\mathcal{W}_{i}$ and above are small. Therefore, we recover all such cells using the sparse recovery algorithm. Note, we limit the algorithm to considering weight classes that have a non-trivial contribution to $\text{\texttt{OPT}}_{\mathcal{C}_e}$. 

Using the $\ell_0$ norm estimates computed above, we can determine the number of non-empty cells in each of the weight classes. Thus, we create a threshold for weight classes that contribute, such that all the weight classes below the threshold together contribute at most an $\epsilon$-fraction of $|\text{\texttt{OPT}}_{\mathcal{C}_e}|$ and we can set their corresponding estimators to $0$. Further, for all the weight classes above the threshold, we can show that sampling at the right rate leads to recovering enough cells to achieve concentration in estimating their contribution. 

%Subsequently, we show that running a $\text{poly}\left(\frac{\log(n)}{\epsilon}\right)$-sparse recovery algorithm on each substream recovers the cells that correspond the the weight class $\mathcal{W}_i$ and filters out the cells that belong to other weight classes. 

%We then simulate the partition by subsampling even cells at different sampling rates. Formally, we subsample cells in $\mathcal{C}_e$ $b = \text{poly}\left(\frac{\log(n)}{\epsilon}\right)$ times, with probability proportional to $\Theta( \frac{b (1+\epsilon)^i \log(n)}{\epsilon^3 |\text{\texttt{OPT}}_{\mathcal{C}_e}|})$, for all $i \in [b]$. This presents several issues, as we cannot subsample non-empty cells in turnstile streams a priori.  Further, if a weight class has a small number on non-empty cells, we cannot recover accurate estimates for the contribution of this weight class to $|\text{\texttt{OPT}}_{\mathcal{C}_e}|$ at any level of the subsampling. To address the first issue, 

%  We do not know this partition a priori, therefore we subsample cells from the input space for a range of probabilities. 
%We then sample a small number of non-empty cells for each weight class and find the heavy hitter cells in each sample set. To sample non-empty cells with a probability $p$ in turnstile streams, we create a substream by sampling every cell in the input space with probability $p$ and running a heavy hitters algorithm on this stream. Note, the empty cells are ignored by the heavy hitters algorithm. We show that an estimator constructed from the heavy hitters suffices.  

Next, we show that the total space used by Algorithm \ref{alg:weighted_unit_interval_sampling} is $\text{poly}\left(\frac{\log(n)}{\epsilon}\right)$. We initially create $b =O\left(\frac{\log(n)}{\epsilon}\right)$ substreams, one for each weight class and run an $\ell_0$-estimator on each one. Recall, this requires $\text{poly}\left(\frac{\log(n)}{\epsilon}\right)$. We then make $O\left(\frac{\log(n)}{\epsilon}\right)$ guesses for $|\text{\texttt{OPT}}_{\mathcal{C}_e}|$ and run the rest of the algorithm for each branch in parallel. Additionally, we run Algorithm \ref{alg:naive_approximation} to compute the right value of $|\text{\texttt{OPT}}_{\mathcal{C}_e}|$ up to a constant factor, which runs in space $\text{poly}\left(\frac{\log(n)}{\epsilon}\right)$. Then, we create $b$ substreams by sampling cells with probability $p_i = \Theta\left( \frac{b (1+\epsilon)^i \log(n)}{\epsilon^3 X}\right)$, for $i \in [b]$. Subsequently, we run a $\text{poly}\left(\frac{\log(n)}{\epsilon}\right)$-sparse recovery algorithm on each one. Note, if each sample is not too large, this can be done in $\text{poly}\left(\frac{\log(n)}{\epsilon}\right)$ space. Therefore, it remains to show that each sample $\mathcal{S}_i$ is small. 

\begin{lemma}
\label{lem:weighted_ins_del_unit_space}
Given a turnstile stream $\mathcal{P}$, with probability at least $99/100$, the Weighted Unit Interval Turnstile Sampling procedure (Algorithm \ref{alg:weighted_unit_interval_sampling}) samples $\text{poly}\left(\frac{\log(n)}{\epsilon}\right)$ cells from the grid $\Delta$.  
\end{lemma}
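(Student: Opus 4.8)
The plan is to bound the total number of sampled cells across all $b$ substreams by a union bound over the $b$ sampling rates, and within each rate to use a Chernoff/Markov argument on the number of non-empty cells that survive subsampling. The key observation is that the only cells that matter for substream $i$ are those whose maximum-weight interval lies in weight class $\mathcal{W}_{i'}$ for $i' \geq i$ (the algorithm filters out the strictly lighter cells by tracking the heaviest interval seen so far). Let $N_{\geq i}$ denote the number of non-empty even cells whose max-weight class is $\mathcal{W}_i$ or higher. Since each such cell contributes a weight of at least $(1+\epsilon)^i$ to $|\texttt{OPT}_{\mathcal{C}_e}|$, we have $N_{\geq i} \cdot (1+\epsilon)^i \leq |\texttt{OPT}_{\mathcal{C}_e}| = \Theta(X)$, i.e. $N_{\geq i} \leq O\!\left(\frac{X}{(1+\epsilon)^i}\right)$. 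Sampling these cells independently with probability $p_i = \Theta\!\left(\frac{b(1+\epsilon)^i \log(n)}{\epsilon^3 X}\right)$, the expected number retained in substream $i$ is $N_{\geq i} \cdot p_i = O\!\left(\frac{b \log(n)}{\epsilon^3}\right) = \text{poly}\!\left(\frac{\log(n)}{\epsilon}\right)$.

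First I would make precise the "only heavier cells survive" claim and record the bound $N_{\geq i} \leq O(X/(1+\epsilon)^i)$, being careful that the guess $X$ is within a constant factor of the true $|\texttt{OPT}_{\mathcal{C}_e}|$ (guaranteed by running the Na\"ive Approximation Algorithm of Lemma \ref{lem:NaiveApprox} in parallel). Then for each fixed $i$, I would apply a Chernoff bound to the sum of $N_{\geq i}$ independent indicator variables with success probability $p_i$: since the expectation $\mu_i := N_{\geq i} p_i$ is at least $\Omega(\log n)$ whenever it is not negligible, a standard multiplicative Chernoff bound gives that the number of surviving cells in substream $i$ is $O(\mu_i) = \text{poly}(\log(n)/\epsilon)$ except with probability $\leq 1/(100 b)$. (For the substreams where $\mu_i = o(\log n)$, one instead uses a direct Markov/Poisson-tail estimate to argue the count is $\text{poly}(\log(n)/\epsilon)$ with the same failure probability, since the probability of exceeding, say, $\log^2 n$ successes decays super-polynomially.) Finally I would union-bound over all $b = \text{poly}(\log(n)/\epsilon)$ substreams to conclude that, with probability at least $99/100$, every substream retains at most $\text{poly}(\log(n)/\epsilon)$ cells, and hence the total number of sampled cells over the whole grid $\Delta$ is $b \cdot \text{poly}(\log(n)/\epsilon) = \text{poly}(\log(n)/\epsilon)$.

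I would also need to handle the sampling that happens \emph{across} the $O(\log n/\epsilon)$ guesses for $|\texttt{OPT}_{\mathcal{C}_e}|$ run in parallel: each guess spawns its own family of $b$ substreams with its own rates $p_i$, but since the guesses only range over a $\text{poly}(n)$-sized geometric grid and for the wrong guesses the sampling rates are either all smaller (fewer cells, trivially fine) or at most a $\text{poly}(n)$ factor larger, one can argue the bound holds for all branches simultaneously by another union bound (or, more simply, note that we only need the space bound to hold on the single correct branch and can cap each substream's storage at $\text{poly}(\log(n)/\epsilon)$, aborting branches that exceed it). The main obstacle I anticipate is the low-expectation regime: for weight classes $\mathcal{W}_i$ with very few non-empty cells, $\mu_i$ can be much smaller than $\log n$, so a naive Chernoff bound does not directly give a high-probability $\text{poly}(\log n/\epsilon)$ cap — one must argue via the tail of a Binomial/Poisson that exceeding $\text{poly}(\log n/\epsilon)$ successes is super-polynomially unlikely, and also verify this is consistent with the thresholding step of the algorithm (which discards exactly the classes that are too sparse to concentrate, so the space argument only needs the retained-cell count to be bounded, not the estimator to be accurate). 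Everything else — the arithmetic relating $p_i$, $N_{\geq i}$, $b$, and $X$ — is routine once the filtering claim and the correct-guess invariant are in place.
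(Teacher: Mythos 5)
Your proposal is correct and follows essentially the same route as the paper's proof: bound the number of cells surviving the weight filter for substream $\mathcal{S}_i$ by an averaging argument (at most $O(X/(1+\epsilon)^i)$ cells can have max weight $\geq (1+\epsilon)^i$), multiply by $p_i$ to get a $\text{poly}(\log(n)/\epsilon)$ expectation, apply a Chernoff bound per substream, and union bound over the $b$ substreams. You are in fact somewhat more careful than the paper in explicitly handling the low-expectation regime and the parallel guesses for $X$, both of which the paper's argument glosses over.
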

\begin{proof}
For $i \in [b]$, let $\mathcal{S}_i$ be a substream of cells in $\mathcal{C}_e$, sampled with probability $p_i$ and having an interval with weight at least $(1+\epsilon)^i$ since we filter out all cells with smaller weight. Then, by an averaging argument, the total number of cells with an interval of weight at least $(1+\epsilon)^{i}$ is at most $\frac{\beta}{(1+\epsilon)^i}$. Sampling with probability $p_i = \Theta\left( \frac{b (1+\epsilon)^i \log(n)}{\epsilon^3 X}\right)$, the expected number of cells from $\mathcal{W}_i$ that survive in $\mathcal{S}_i$ is at most $p_i \frac{\beta}{(1+\epsilon)^i} = \text{poly}\left(\frac{\log(n)}{\epsilon}\right)$ in expectation. Next, we show that they are never much larger than their expectation. Let $X_c$ be the indicator random variable for cell $c \in \mathcal{W}_i$ to be sampled in $\mathcal{S}_i$ and let $\mu$ be the expected number of cells in $\mathcal{S}_i$. By Chernoff bounds, 
$$
\Pr\left[\sum_c X_c \geq (1+\epsilon)\mu\right] \leq \textrm{exp}\left(-\frac{2\epsilon^2 \textrm{poly}(\log(n))}{\textrm{poly}(\epsilon)}\right) \leq 1/n^k
$$
for some large constant $k$. A similar argument holds for the number of cells from weight class $\mathcal{W}_{i'}$, for $i' > i$, surviving in substream $\mathcal{S}_i$. Note, for all $i' < i$, we never include such a cell from weight class $\mathcal{W}_{i'}$ in our sample $\mathcal{S}_i$, since the filtering step rejects all cells that do not contain an interval of weight at least $(1+\epsilon)^i$. Union bounding over the events that cells $c \in \mathcal{W}_{i'}$ get sampled in $\mathcal{S}_{i}$, for $i'\geq i$, the cardinality of $\mathcal{S}_i$ is at most $\text{poly}\left(\frac{\log(n)}{\epsilon}\right)$ with probability at least $1-1/n^{k'}$ for an appropriate constant $k'$. Since we create $b$ such substreams for $\mathcal{C}_e$, we can union bound over such events in each of them and thus $\bigcup_{i\in [b]} |\mathcal{S}_i|$ is at most $\text{poly}\left(\frac{\log(n)}{\epsilon}\right)$ with probability at least $99/100$. Since $|\mathcal{C}_e|$ is $|\Delta|/2$, the same result holds for the total cells sampled from $\Delta$. Therefore, the overall space used by Algorithm \ref{alg:weighted_unit_interval_sampling} is $\text{poly}\left(\frac{\log(n)}{\epsilon}\right)$.  
\end{proof}

Next, we show that the estimate returned by our sampling procedure is indeed a $(2 + \epsilon)$-approximation. We observe that the union of the $\mathcal{W}_i$'s form a partition of $\mathcal{C}_e$. Therefore, it suffices to show that we obtain a $(1+\epsilon)$-approximation to the \texttt{WIS} for each $\mathcal{W}_i$ with good probability. Let $c$ denote a cell in $\mathcal{W}_i$ and $\text{\texttt{OPT}}_c$ denote the \texttt{WIS} in cell $c$. We create a substream for each weight class $\mathcal{W}_i$ denoted by $\mathcal{W}'_i$ and let $X_{\mathcal{W}'_i}$ be the corresponding estimate returned by the $\ell_0$ norm of $\mathcal{W}'_i$. Let $X_{\mathcal{W}'} = \sum_{i \in [b]} X_{\mathcal{W}'_i}$ denote the sum of the estimates across the $b$ substreams. 

We say that weight class $\mathcal{W}_i$ contributes if $X_{\mathcal{W}'_i} \geq \frac{\epsilon X_{\mathcal{W}'} }{(1+\epsilon)^{i+1}b}$. 
Note, if we discard all the weight classes that do not contribute we lose at most an $\epsilon$-fraction of $\beta$ (as shown below). Therefore, setting the estimators corresponding to classes that do not contribute to $0$ suffices. The main technical hurdle remaining is to show that if a weight class contributes we can accurately estimate $|\texttt{OPT}_{\mathcal{W}_i}|$.  

\begin{lemma}
\label{lem:weighted_ins_del_unit_approx}
Let $Y_e = \sum_i Y_i$ be the estimator returned by Algorithm \ref{alg:weighted_unit_interval_sampling} for the set $\mathcal{C}_e$. Then, $Y_e =(1\pm\epsilon)|\text{\texttt{OPT}}_{\mathcal{C}_e}|$ with probability at least $99/100$.
\end{lemma}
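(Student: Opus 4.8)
The plan is to analyze $Y_e=\sum_i Y_i$ class by class against the exact decomposition $|\text{\texttt{OPT}}_{\mathcal{C}_e}| = \sum_{c\in\mathcal{C}_e} m(c) = \sum_{i\in[b]} W_i$, where $W_i := \sum_{c\in\mathcal{W}_i} m(c)$ is the contribution of weight class $\mathcal{W}_i$; this identity is exact because the $\mathcal{W}_i$ partition $\mathcal{C}_e$ and, since all intervals inside a single grid cell pairwise intersect, only the maximum-weight interval of a non-empty cell can enter $\text{\texttt{OPT}}_{\mathcal{C}_e}$. The definition of $\mathcal{W}_i$ gives $(1+\epsilon)^i|\mathcal{W}_i| \le W_i < (1+\epsilon)^{i+1}|\mathcal{W}_i|$, so it suffices to (a) estimate $|\mathcal{W}_i|$ to within $(1\pm\epsilon)$ for each ``contributing'' class and (b) show the discarded classes carry a negligible fraction of $|\text{\texttt{OPT}}_{\mathcal{C}_e}|$. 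For a contributing class the estimator produced by Algorithm~\ref{alg:weighted_unit_interval_sampling} is the Horvitz--Thompson quantity $Y_i = (1+\epsilon)^{i+1}\cdot|\mathcal{S}_i\cap\mathcal{W}_i|/p_i$, and $Y_i=0$ otherwise, where $\mathcal{S}_i$ is the sampled-and-filtered substream fed to a $\poly{\log(n)/\epsilon}$-\texttt{Sparse Recovery} structure; by Lemma~\ref{lem:weighted_ins_del_unit_space}, $|\mathcal{S}_i|=\poly{\log(n)/\epsilon}$ with probability at least $99/100$, so recovery returns the support of $\mathcal{S}_i$ exactly and, additionally tracking the running per-cell maximum among the recovered cells, lets us read off $|\mathcal{S}_i\cap\mathcal{W}_i|$ exactly.

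I would first condition on a handful of events, each failing with probability at most $1/100$ (the constants to be adjusted at the end): (i) the median of $O(\log n)$ copies of the $\ell_0$-estimator of Theorem~\ref{thm:ell_0} returns $X_{\mathcal{W}'_i} = (1\pm\epsilon)\,(1+\epsilon)^{i+1}\ell_0(\mathcal{W}'_i)$ simultaneously for all $i\in[b]$, so $X_{\mathcal{W}'}=\sum_i X_{\mathcal{W}'_i} = \Theta(|\text{\texttt{OPT}}_{\mathcal{C}_e}|)$ by Lemma~\ref{lem:NaiveApprox} (union bound over $b=\poly{\log(n)/\epsilon}$ classes); (ii) the Na\"ive Approximation Algorithm of Lemma~\ref{lem:NaiveApprox} pins down the guess $X=\Theta(|\text{\texttt{OPT}}_{\mathcal{C}_e}|)$, fixing $p_i = \Theta\!\big(b(1+\epsilon)^i\log(n)/(\epsilon^3|\text{\texttt{OPT}}_{\mathcal{C}_e}|)\big)$ capped at $1$; and (iii) the sample-size bound of Lemma~\ref{lem:weighted_ins_del_unit_space}. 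From here the argument is deterministic except for one concentration step.

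For the discarded classes, observe that every cell of $\mathcal{W}_i$ holds an interval of weight in $[(1+\epsilon)^i,(1+\epsilon)^{i+1})$, hence $\ell_0(\mathcal{W}'_i)\ge|\mathcal{W}_i|$ and $X_{\mathcal{W}'_i} \ge (1-\epsilon)(1+\epsilon)^{i+1}|\mathcal{W}_i| > (1-\epsilon)W_i$; if $\mathcal{W}_i$ does not contribute, i.e.\ $X_{\mathcal{W}'_i} < \epsilon X_{\mathcal{W}'}/((1+\epsilon)^{i+1}b)$, then $W_i < \epsilon X_{\mathcal{W}'}/((1-\epsilon)(1+\epsilon)^{i+1}b)$, and summing the resulting geometric series over the at most $b$ discarded classes gives $\sum_{i\text{ discarded}} W_i = O(X_{\mathcal{W}'}/b) = o(\epsilon\,|\text{\texttt{OPT}}_{\mathcal{C}_e}|)$; since every $Y_i\ge 0$, zeroing these estimators costs only this negligible amount and never causes an overestimate. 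For a retained class I would split into two cases. If $p_i$ is capped at $1$, then $\mathcal{S}_i$ contains every cell of $\mathcal{W}_i$ and $Y_i = (1+\epsilon)^{i+1}|\mathcal{W}_i| \in [W_i,(1+\epsilon)W_i]$ deterministically. Otherwise, the threshold together with $X=\Theta(|\text{\texttt{OPT}}_{\mathcal{C}_e}|)$ and the averaging bound $\ell_0(\mathcal{W}'_i)\le\beta/(1+\epsilon)^i$ (the cells of $\mathcal{C}_e$ carrying a band-$i$ interval induce an independent set of weight $\ge(1+\epsilon)^i\ell_0(\mathcal{W}'_i)$) should force $|\mathcal{W}_i| = \Omega\!\big(\epsilon\,|\text{\texttt{OPT}}_{\mathcal{C}_e}|/((1+\epsilon)^i b)\big)$, so $\mu_i:=p_i|\mathcal{W}_i| = \Omega(\log(n)/\epsilon^2)$; since the cells of $\mathcal{W}_i$ land in $\mathcal{S}_i$ independently with probability $p_i$, a Chernoff bound gives $|\mathcal{S}_i\cap\mathcal{W}_i| = (1\pm\epsilon)\mu_i$ with probability $1-n^{-\Omega(1)}$, hence $Y_i = (1\pm\epsilon)(1+\epsilon)^{i+1}|\mathcal{W}_i| = (1\pm O(\epsilon))W_i$. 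A union bound over the $\le b$ retained classes keeps all of these accurate with probability at least $99/100$.

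Summing, $Y_e = \sum_{i\text{ retained}} Y_i = (1\pm O(\epsilon))\sum_{i\text{ retained}} W_i = (1\pm O(\epsilon))\,|\text{\texttt{OPT}}_{\mathcal{C}_e}|$, and rescaling $\epsilon$ by a constant at the outset plus a union bound over the conditioning and concentration events yields $Y_e = (1\pm\epsilon)\,|\text{\texttt{OPT}}_{\mathcal{C}_e}|$ with probability at least $99/100$. I expect the main obstacle to be the double duty of the thresholding. The only quantity the algorithm can test is $X_{\mathcal{W}'_i}$, which is built from $\ell_0(\mathcal{W}'_i)$; this lower-bounds $|\mathcal{W}_i|$ but can strictly overcount it, since a cell whose maximum lies in a higher band may still contain a band-$i$ interval and thus appear in $\mathcal{W}'_i$. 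One must therefore check both that a sub-threshold $X_{\mathcal{W}'_i}$ genuinely certifies a negligible $W_i$ (used above) and that a super-threshold $X_{\mathcal{W}'_i}$ never makes us retain a class whose true $|\mathcal{W}_i|$ is too small to concentrate at its prescribed rate $p_i$; for the borderline retained-but-sparse classes the clean Chernoff bound above is unavailable, and one instead controls their aggregate deviation by a variance argument, using the same $\poly{\log(n)/\epsilon}$-\texttt{Sparse Recovery} bound to show $\sum_i \var{Y_i} = O(\epsilon^2\,|\text{\texttt{OPT}}_{\mathcal{C}_e}|^2)$ and applying Chebyshev. The secondary, minor subtlety is the within-band slack $(1+\epsilon)^i|\mathcal{W}_i| \le W_i < (1+\epsilon)^{i+1}|\mathcal{W}_i|$, which inflates the final error by only another $(1+\epsilon)$ factor and is absorbed into the rescaling.
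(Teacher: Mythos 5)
Your proof follows essentially the same route as the paper's: condition on the $\ell_0$ estimates, the constant-factor guess $X$, and the sample-size bound of Lemma \ref{lem:weighted_ins_del_unit_space}; split the weight classes into contributing and non-contributing via the threshold on $X_{\mathcal{W}'_i}$; prove concentration for contributing classes by a Chernoff bound on the number of cells of $\mathcal{W}_i$ surviving in $\mathcal{S}_i$; and show that zeroing the non-contributing classes loses only an $O(\epsilon)$ fraction without overestimating. Your formulation of the per-class estimator as the Horvitz--Thompson quantity $(1+\epsilon)^{i+1}\,|\mathcal{S}_i\cap\mathcal{W}_i|/p_i$ measured against the exact decomposition $|\text{\texttt{OPT}}_{\mathcal{C}_e}|=\sum_i W_i$ is cleaner than the paper's (the formula for $Y_i$ in Algorithm \ref{alg:weighted_unit_interval_sampling} as written collapses to $X_{\mathcal{W}'_i}(1+\epsilon)^{i+1}/p_i$, and only the reinterpretation you give it matches the expectation computed in the paper's proof). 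The obstacle you flag at the end is a genuine one, not a cosmetic worry: the threshold is tested on $X_{\mathcal{W}'_i}\approx\ell_0(\mathcal{W}'_i)$, which counts cells containing \emph{some} band-$i$ interval, whereas concentration of $Y_i$ at rate $p_i$ requires a lower bound on $|\mathcal{W}_i|$, the cells whose \emph{maximum} lies in band $i$, and the former can exceed the latter by an arbitrary amount. The paper's proof silently identifies the two quantities when it passes from $(1+\epsilon)^iX_{\mathcal{W}'_i}=\Omega(\epsilon X/((1+\epsilon)b))$ to ``at least $\Omega(\cdot)$ cells from $\mathcal{W}_i$ survive in expectation,'' so it does not resolve this gap either; your proposed variance-plus-Chebyshev patch for retained-but-sparse classes is the right kind of repair but is sketched rather than carried out, leaving your write-up no less complete on this point than the paper's own.
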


\begin{proof}
We first consider the case when $\mathcal{W}_i$ contributes, i.e., $X_{\mathcal{W}'_i} \geq \frac{\epsilon X_{\mathcal{W}'} }{(1+\epsilon)^{i+1}b}$. Note, $X_{\mathcal{W}'} = \sum_{i \in [b]} X_{\mathcal{W}'_i}$ is a $(1\pm \epsilon)$-approximation to the number of non-empty cells in $\mathcal{W}$ with probability at least $1 - n^{-k}$, where $\mathcal{W} = \bigcup_{i \in [b]} \mathcal{W}_i$, since the $\ell_0$-estimator is a $(1\pm \epsilon)$-approximation to the number of non-empty cells in $\mathcal{W}_i$ simultaneously for all $i$ with high probability and the $\mathcal{W}_i$'s are disjoint. Recall, $X$ is the correct guess for $|\text{\texttt{OPT}}_{\mathcal{C}_e}|$. Therefore, 
$$
(1+\epsilon)^i X_{\mathcal{W}'_i} = \Omega\left( \frac{\epsilon  X}{(1+\epsilon)b}\right) = \Omega\left(\frac{\epsilon X}{(1+\epsilon)b}\right)
$$
Then, sampling at a rate $p_i = \Theta( \frac{b (1+\epsilon)^i \log(n)}{\epsilon^3 X})$ implies at least $\Omega\left(\frac{\epsilon X}{(1+\epsilon)^{i+1} b}\right)\cdot\Theta( \frac{b (1+\epsilon)^i \log(n)}{\epsilon^3 X})  = \Omega\left(\frac{\log(n)}{(1+\epsilon)\epsilon^2}\right)$ cells from $\mathcal{W}_i$ survive in expectation. Let $X_c$ denote an indicator random variable for cell $c \in \mathcal{W}_i$ being in substream $\mathcal{S}_i$. Then, by a Chernoff bound,
$$
\Pr\left[ \sum_{c\in \mathcal{W}_i} X_c  \leq (1-\epsilon)\left(\frac{\log(n^{-c})}{2\epsilon^2}\right)\right] \leq \textrm{exp}\left(\frac{-2\epsilon^2 \log(n^{-c})}{2\epsilon^2}\right) \leq n^{-c}
$$
for some constant $c$. Union bounding over all the random events similar to the one above for $i\in [b]$, simultaneously for all $i$, the number of cells from $\mathcal{W}_i$ in $\mathcal{S}_i$ is at least $\Omega\left(\frac{\log(n)}{\epsilon^2}\right)$ with probability at least $1-1/n^k$ for some constant $k$. Note, for $i'< i$, no cell $c \in \mathcal{W}_{i'}$ exists in $\mathcal{S}_i$ since the filter step removes all cells $c$ such that $m(c) < (1+\epsilon)^i$.  

Next, consider a weight class $\mathcal{W}_{i'}$ for $i' > i$ such that it contributes. We upper bound the number of cells from $\mathcal{W}_{i'}$ that survive in substream $\mathcal{S}_i$. Note, weight class $\mathcal{W}_{i'}$ contains at most $\frac{\beta}{(1+\epsilon)^{i+1}}$ non empty cells for $i'>i$. In expectation, at most $\frac{\beta}{(1+\epsilon)^{i+1}}\cdot p_i =  O\left(b\frac{\log(n)}{(1+\epsilon)\epsilon^3}\right)$ cells from $\mathcal{W}_{i'}$ survive in sample $\mathcal{S}_i$, for $i'>i$. By a Chernoff bound, similar to the one above, simultaneously for all $i'>i$, at most $O\left(b\frac{\log(n)}{(1+\epsilon)\epsilon^3}\right)$ cells from $\mathcal{W}_{i'}$ survive, with probability at least $1-1/n^{k'}$.
%Further, in expectation, the fraction of $|\text{\texttt{OPT}}_{\mathcal{C}_e}|$ in the union of our samples $\mathcal{S}_i$ is $p_i |\text{\texttt{OPT}}_{\mathcal{C}_e}| = \Theta(\frac{(1+\epsilon)^i b \log(n)}{\epsilon^3})$. 

Now, we observe that the total number of cells that survive the sampling process in substream $\mathcal{S}_i$ is $\text{poly}\left(\frac{\log(n)}{\epsilon}\right)$ and therefore, they can be recovered exactly by the $\text{poly}\left(\frac{\log(n)}{\epsilon}\right)$-sparse recovery algorithm. Let the resulting set be denoted by $\mathcal{S}'_i$. We can also compute the number of cells that belong to weight class $\mathcal{W}_i$ that are recovered in the set $\mathcal{S}'_i$ and we denote this by $|\mathcal{S}'_{i | \mathcal{W}_i}|$. Recall, the corresponding estimator is $Y_i = \sum_{c \in \mathcal{S}'_{i| \mathcal{W}_i}}\frac{X_{\mathcal{W}'_i} Z_c}{\left|\mathcal{S}'_{i| \mathcal{W}_i}\right|}$, where $Z_c = \frac{(1+\epsilon)^{i+1}}{p_i}$ if $c \in \mathcal{S}'_{i| \mathcal{W}_i}$ and $0$ otherwise. We first show we obtain a good estimator for $|\texttt{OPT}_{\mathcal{W}_i}|$ in expectation. 

%for $i' \geq i $, we observe that any cell $c \in \mathcal{W}_{i'}$ which survives the sampling process has weight at least $(1+\epsilon)^i$, and therefore it is an $\ell_1$-heavy hitter with parameter $\phi_i$. By Theorem \ref{athm:cm_sketch}, we can recover all such cells $c$ with high probability. Additionally, for $i' < i$, we get no cell $c \in \mathcal{W}_{i'}$ since they cannot be heavy hitters and are removed from the sample. 
%Recall, $\Omega(\frac{\log(n)}{\epsilon^2})$ cells from $\mathcal{W}_i$ are sampled in $\mathcal{S}_i$, which is sufficient to show the concentration of $Y_i$ around $|\texttt{OPT}_{\mathcal{W}_i}|$. For $c \in \mathcal{W}_i$, let $Y^c_i$ be a random variable that is $m(c)/p_i$ if $c \in \mathcal{S}_i$ and $0$ otherwise and let $Y_i = \sum_{c\in \mathcal{S}_i} Y^c_i = \sum_{c \in \mathcal{W}_i} Y^c_i$. Observe, 

\begin{equation*}
\begin{split}
\E\left[ Y_i \right]  = \E \left[ \sum_{c \in \mathcal{S}'_{i| \mathcal{W}_i}}\frac{X_{\mathcal{W}'_i} Z_c}{\left|\mathcal{S}'_{i| \mathcal{W}_i}\right|} \right] &= (1+\epsilon)^{i+1} X_{\mathcal{W}'_i} \\
%& = (1\pm \epsilon) (1+\epsilon)^{i+1} |\mathcal{W}'_i| = (1\pm \epsilon)(1+\epsilon) |\text{\texttt{OPT}}_{\mathcal{W}_i}| \\ 
& = (1\pm 4\epsilon) |\text{\texttt{OPT}}_{\mathcal{W}_i}|
\end{split}
\end{equation*}

Since we know that $|\mathcal{S}'_{i | \mathcal{W}_i}| = $ $\Omega(\frac{\log(n)}{(1+\epsilon)\epsilon^2})$, we show that our estimator concentrates. Note, $\E\left[ Y_i \right] = (1+\epsilon)^{i+1} X_{\mathcal{W}'_i} = \Omega(\frac{\epsilon X}{\log(n)})$. Further, $0 \leq Z_c \leq \frac{(1+\epsilon)^{i+1}}{p_i} = O\left( \frac{(1+\epsilon)^{i+1} \epsilon^3 X}{b (1+\epsilon)^i \log(n)} \right)$. 
Therefore, $$ \mathlarger{\sum_{c \in \mathcal{S}'_{i| \mathcal{W}_i}}} \left(\frac{X_{\mathcal{W}'_i} Z_c}{\left|\mathcal{S}'_{i| \mathcal{W}_i}\right|} \right)^2 = O\left( \left(\frac{(1+\epsilon)^{i+1} \epsilon^3 X \cdot X_{\mathcal{W'}_i}}{b (1+\epsilon)^i \log(n)}\right)^2 \frac{1}{\left|\mathcal{S}'_{i| \mathcal{W}_i}\right|} \right)$$ and $X_{\mathcal{W}'_i} = \text{poly}\left(\frac{\log(n)}{\epsilon}\right)$ 
By a Hoeffding bound, 

\begin{equation*}
\begin{split}
\Pr\Big[ |Y_i - E\left[Y_i\right]|  \geq \epsilon E\left[ Y_i \right]\Big] & \leq 2 \textrm{exp}\left(\frac{-2\epsilon^2 E\left[ Y_i \right]^2}{\mathlarger{\sum_{c \in \mathcal{S}'_{i| \mathcal{W}_i}}} \left(\frac{X_{\mathcal{W}'_i} Z_c}{\left|\mathcal{S}'_{i| \mathcal{W}_i}\right|} \right)^2 }\right ) \\
& \leq 2 \textrm{exp}\left(\frac{\Omega(\log(n))}{1+\epsilon}\right) \leq 1/n^k
\end{split}
\end{equation*}
for some constant $k$. Therefore, union bounding over all $i$, $Y_i$ is a $(1 \pm \epsilon)^2$-approximation to $|\text{\texttt{OPT}}_{\mathcal{W}_i}|$ with probability at least $1 - 1/n$. Therefore, if $\mathcal{W}_i$ contributes we obtain a $(1 \pm \epsilon)$-approximation to $|\text{\texttt{OPT}}_{\mathcal{W}_i}|$.   

In the case where $\mathcal{W}_i$ does not contribute, we set the corresponding estimator to $0$. Note, $X_{\mathcal{W}'_i} < \frac{\epsilon X_{\mathcal{W}'} }{(1+\epsilon)^{i+1}b} = \frac{\epsilon (1\pm \epsilon) |\texttt{OPT}_{\mathcal{W}}|}{b} = O(\frac{\epsilon \beta}{b})$. Note, since there are at most $b$ weight classes, discarding all weight classes that do not contribute discards at most $O(\epsilon \beta)$. We therefore lose at most an $\epsilon$-fraction of $\beta$ by setting the $Y_i$ corresponding to non-contributing weight classes to $0$. 

%by a Chernoff bound, the number of samples we get is smaller than $O\left( \frac{\epsilon W}{b(1+\epsilon)^i}\right) $ with high probability. Therefore, our estimate $Y_i < \Theta\left(\frac{\log(n)}{\epsilon^3}\right)$ and can be set to $0$. Note, in this case we do not obtain a concentration, but we also do not overcount. The proof of Theorem \ref{athm:weighted_ins_del_unit} immediately follows from union bounding the two cases above and Lemma \ref{alem:weighted_ins_del_unit_space}.
\end{proof}
Combining Lemmas \ref{lem:NaiveApprox} and  \ref{lem:weighted_ins_del_unit_approx} finishes the proof for Theorem \ref{thm:weighted_ins_del_unit}.

\begin{Frame}[\textbf{Algorithm \ref{alg:weighted_unit_interval_sampling} : Weighted Unit Interval Turnstile Sampling.}]
\label{alg:weighted_unit_interval_sampling}
\textbf{Input:} Given a turnstile stream $\mathcal{P}$ with weighted unit intervals, where the weights are polynomially bounded, $\epsilon$ and $\delta >0$, the sampling procedure outputs a $(2 + \epsilon)$-approximation to $\beta$.

\begin{enumerate}
	\item Randomly shift a grid $\Delta$ of side length $1$. Partition the cells into $\mathcal{C}_e$ and $\mathcal{C}_o$.
    \item For cells in $\mathcal{C}_e$, snap each interval in the input to a cell $c$ that contains its center. Consider a partitioning of the cells in $\mathcal{C}_e$ into $b = \text{poly}\left(\frac{\log(n)}{\epsilon}\right)$ weight classes $\mathcal{W}_i = \{ c \in \mathcal{C}_e | (1+\epsilon)^i \leq m(c) \leq (1+\epsilon)^{i+1}$ \}, where $m(c)$ is the maximum weight of an interval in $c$ (we do not know this partition a priori.) Create a substream for each weight class $\mathcal{W}_i$ denoted by $\mathcal{W}'_i$. 
    \item Feed interval $D(d_j,1, w_j)$ along substream $\mathcal{W}'_i$ such that $w_j \in [(1+\epsilon)^i, (1+\epsilon)^{i+1})$. Maintain a $(1\pm\epsilon)$-approximate $\ell_0$-estimator for each substream. Let $|\mathcal{W}'_i|$ denote the number of non-empty cells in substream $\mathcal{W}'_i$ and $X_{\mathcal{W}'_i}$ be the corresponding estimate returned by the $\ell_0$-estimator. 
	\item Create $O(\log(n))$ substreams, one for each guess of $|\text{\texttt{OPT}}_{\mathcal{C}_e}|$. Let $X$ be the guess for the current branch of the computation. In parallel, run Algorithm \ref{alg:naive_approximation} estimates $|\text{\texttt{OPT}}_{\mathcal{C}_e}|$ up to a constant factor. Therefore, at the end of the stream, we know a constant factor approximation to the correct value of $|\text{\texttt{OPT}}_{\mathcal{C}_e}|$ and use the estimator from the corresponding branch of the computation.    
	\item In parallel, for $i \in [b]$, create substream $\mathcal{S}_i$ by subsampling cells in $\mathcal{C}_e$ with probability $p_i = \Theta\left( \frac{b (1+\epsilon)^i \log(n)}{\epsilon^3 X}\right)$. Note, this sampling is done agnostically at the start of the stream. 
	\item Run a poly$\left(\frac{\log(n)}{\epsilon}\right)$-sparse recovery algorithm on each substream $\mathcal{S}_i$. For substream $\mathcal{S}_i$, filter out cells $c$ such that $m(c) < (1+\epsilon)^i$. Let $\mathcal{S}'_i$ be the set of cells recovered by the sparse recovery algorithm. Let $\mathcal{S}'_{i| \mathcal{W}_i}$ be the cells in $\mathcal{S}'_i$ that belong to weight class $\mathcal{W}_i$. 
	\item Let $X_{\mathcal{W}'} = \sum_{i \in [b]} X_{\mathcal{W}'_i}$. Let $Z_c$ be a random variable such that $Z_c = \frac{(1+\epsilon)^{i+1}}{p_i}$ if $c \in \mathcal{S}'_{i| \mathcal{W}_i}$ and $0$ otherwise. If $X_{\mathcal{W}'_i} \geq \frac{\epsilon X_{\mathcal{W}'} }{(1+\epsilon)^{i+1}b}$, set the estimator for the $i^{th}$ subsample, $$Y_i = \sum_{c \in \mathcal{S}'_{i| \mathcal{W}_i}}\frac{X_{\mathcal{W}'_i} Z_c}{\left|\mathcal{S}'_{i| \mathcal{W}_i}\right|}$$ 
Otherwise, set $Y_i = 0$. Let $Y_e = \sum_i Y_i$.
    
    %For $c \in \mathcal{S}'_i$ such that $m(c)\leq (1+\epsilon)^{i+1}$, let $Y_i = \sum_{c}\frac{ m(c)}{p_i}$. Else, $Y_i =0$.
    \item Repeat Steps 2-7 for the set $\mathcal{C}_o$ and let $Y_o$ be the corresponding estimator. 
\end{enumerate}
\noindent\textbf{Output:} $Y = \textrm{max}(Y_e, Y_o)$.
\end{Frame}

\subsection{Lower bound for Unit Intervals}

Here, we describe a communication complexity lower bound for estimating $\alpha$ for unit-length interval in turnstile streams and thus show the optimality of Theorem \ref{thm:weighted_ins_del_unit}. Our starting point is the \texttt{Augmented Index} problem and its communication complexity is well understood in the two-player one-way communication model. In this model, we have two players Alice and Bob who are required to compute a function based on their joint input and Alice is allowed to send messages to Bob that are a function of her input and finally Bob announces the answer. Note, Bob isn't allowed to send messages to Alice. 

\begin{definition}(\texttt{Augmented Indexing.})
\label{def:ai}
Let \texttt{AI}$_{n,j}$ denote the communication problem where Alice receives as input $x \in \{0, 1\}^n$ and Bob receives an index $j \in [n]$, along with the $x_{j'}$ for $j' > j$. The objective is for Bob to output $x_j$ in the one-way communication model. 
\end{definition}

\begin{theorem}(Communication Complexity of \texttt{AI}$_{n,j}$, \cite{DBLP:MNSW98}.)
\label{athm:aug_index}
The randomized one-way communication complexity of \texttt{AI}$_{n,j}$ with error probability at most $1/3$ is $\Omega(n)$.
\end{theorem}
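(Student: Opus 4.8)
The approach I would take is a standard information-theoretic argument that crucially exploits the \emph{one-way} restriction: Alice's single message is a function of her input alone, so it cannot depend on the suffix $x_{j+1},\dots,x_n$ that Bob holds, and since the coordinates of a uniformly random $x\in\{0,1\}^n$ are mutually independent, that suffix carries no information about $x_j$. Thus, informally, each of the $n$ coordinates of $x$ is a ``fresh'' bit that Bob can learn only through Alice's message, so Alice's message must contain $\Omega(n)$ bits of information about $x$.

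First I would pass to a distributional, deterministic setting via Yao's principle. Let $\Pi$ be any one-way protocol for $\texttt{AI}_{n,j}$ with error at most $1/3$ on every input. Draw $X$ uniformly from $\{0,1\}^n$ and $J$ uniformly from $[n]$, independently of each other and of the private coins. Since $\Pi$ errs with probability at most $1/3$ on every fixed input $(x,j)$, it also errs with probability at most $1/3$ under this input distribution, so fixing the private coins to their best value yields a \emph{deterministic} one-way protocol in which Alice sends $M=f(X)$ and Bob's guess for coordinate $j$ is $g_j(M,X_{>j})$, where $X_{>j}:=(X_{j+1},\dots,X_n)$. Writing $p_j:=\Pr_X[\,g_j(f(X),X_{>j})\neq X_j\,]$, we have $\frac1n\sum_{j=1}^n p_j\le \frac13$, and since passing to the deterministic protocol only decreases communication, it suffices to show this protocol uses $\Omega(n)$ bits.

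Next I would lower bound $I(X;M)$ coordinate by coordinate. Fix $j$; since $g_j(M,X_{>j})$ is a function of $(M,X_{>j})$ that agrees with $X_j$ with probability $1-p_j$, Fano's inequality gives $H(X_j\mid M,X_{>j})\le h(p_j)$, where $h(\cdot)$ is the binary entropy function. As $X$ is uniform with independent coordinates, $H(X_j\mid X_{>j})=1$, hence $I(X_j;M\mid X_{>j})\ge 1-h(p_j)$. Expanding by the chain rule with the coordinates taken in the order $X_n,X_{n-1},\dots,X_1$,
\[
I(X;M)=\sum_{j=1}^n I\!\left(X_j;M\mid X_{j+1},\dots,X_n\right)\ \ge\ n-\sum_{j=1}^n h(p_j)\ \ge\ n\bigl(1-h(1/3)\bigr),
\]
where the last step uses concavity and monotonicity of $h$ on $[0,\tfrac12]$ together with $\frac1n\sum_j p_j\le\tfrac13$. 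Since $h(1/3)=\log_2 3-\tfrac23<1$, the constant $c:=1-h(1/3)$ is positive, so the deterministic protocol sends at least $\log_2|\mathrm{supp}(M)|\ge H(M)\ge I(X;M)\ge cn$ bits; as it was obtained from $\Pi$ without increasing communication, $R_{1/3}(\texttt{AI}_{n,j})\ge cn=\Omega(n)$.

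The step I expect to be the crux is the per-coordinate bound, and specifically the fact that $M$ depends on $X$ only: this is exactly where one-wayness enters. In an interactive protocol the transcript would depend on $X_{>j}$ through Bob's messages in a manner correlated with $X_j$, so $I(X_j;\mathrm{transcript}\mid X_{>j})$ need not be $\Omega(1)$ — indeed $\texttt{AI}_{n,j}$ has interactive communication complexity only $O(\log n)$ by binary search, so the one-way hypothesis cannot be dropped. A secondary point requiring care is that we must fix the private coins so as to obtain a \emph{single} deterministic $f$ that is good on average over a uniformly random index $J$, rather than a different protocol for each $j$; this is what makes the chain-rule decomposition legitimate.
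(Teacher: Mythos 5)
Your proof is correct, but note that the paper itself does not prove this statement at all: Theorem \ref{athm:aug_index} is imported as a black box from \cite{DBLP:MNSW98}, so there is no internal proof to compare against. What you have written is the standard self-contained information-theoretic argument for augmented indexing: reduce to a deterministic protocol under the uniform distribution on $(X,J)$ by fixing the coins, apply Fano's inequality coordinate-wise to get $H(X_j\mid M,X_{>j})\le h(p_j)$, and sum via the chain rule in the order $X_n,\dots,X_1$ to conclude $I(X;M)\ge n\bigl(1-h(1/3)\bigr)$, hence an $\Omega(n)$ message. All the steps check out, including the two places that genuinely need care and that you flag explicitly: the message $M$ is a function of $X$ alone (this is exactly where one-wayness is used, and indeed the bound fails for interactive protocols, where $O(\log n)$ suffices), and a single fixing of the private coins must work on average over the random index $J$ so that the chain-rule decomposition refers to one protocol. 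The averaging over $j$ is handled correctly via Jensen and monotonicity of $h$ on $[0,1/2]$, since only the average error $\frac1n\sum_j p_j\le 1/3$ is controlled. This is arguably more elementary than the original round-elimination-style machinery in \cite{DBLP:MNSW98}; the only cosmetic quibble is that the maximum message length is at least roughly $\log_2$ of the number of distinct messages rather than exactly $\log_2|\mathrm{supp}(M)|$, which affects nothing asymptotically.
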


Let \texttt{Alg} be a one-pass turnstile streaming algorithm that estimates $\alpha$. We show that \texttt{Alg} can be used as a subroutine to solve \texttt{AI}$_{n,j}$, in turn implying a lower bound on the space complexity of \texttt{Alg}. We formalize this idea in the following theorem:

\begin{theorem}
\label{athm:ins_del_unit_lowerbound}
Any randomized one-pass turnstile streaming algorithm \texttt{Alg} which approximates $\alpha$ to within a $(2-\epsilon)$-factor, for any $\epsilon > 0$, for unit intervals, with at least constant probability, requires $\Omega(n)$ space. 
\end{theorem}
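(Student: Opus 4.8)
The plan is to reduce \texttt{AI}$_{n,j}$ to the problem of $(2-\epsilon)$-approximating $\alpha$ for unit intervals in a turnstile stream, using Theorem \ref{athm:aug_index} to conclude the $\Omega(n)$ lower bound. First I would have Alice, on input $x \in \{0,1\}^n$, place a unit interval centered near integer position $2i$ for each coordinate $i$ with $x_i = 1$; the factor of $2$ spacing ensures that all of Alice's inserted intervals are pairwise disjoint, so the independent set among them has size exactly $\|x\|_0$. Alice runs \texttt{Alg} on this stream of insertions and sends the (poly-log size) memory state to Bob. Bob, who knows the index $j$ and the bits $x_{j'}$ for $j' > j$, continues the stream: for every $j' > j$ with $x_{j'} = 1$ he \emph{deletes} the interval Alice inserted at position $2j'$, and he inserts one carefully placed ``blocker'' interval that overlaps position $2j$ (and only position $2j$) — for concreteness an interval centered at $2j + 1/2$, which intersects a unit interval centered at $2j$ but no unit interval centered at any $2i$ with $i \neq j$.

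The key structural point is the gap in $\alpha$ between the two cases. After Bob's edits the surviving intervals are: Alice's intervals at positions $2i$ for $i \le j$ with $x_i = 1$ (all pairwise disjoint, and disjoint from the blocker except possibly the one at $2j$), plus the single blocker near $2j$. If $x_j = 0$, the blocker is disjoint from everything, so $\alpha$ equals (number of $i \le j$ with $x_i=1$) $+\ 1$. If $x_j = 1$, the interval at $2j$ and the blocker conflict, so we can keep only one of them, giving $\alpha$ equal to (number of $i \le j$ with $x_i = 1$), which is the same count minus nothing but with the $+1$ absorbed — i.e. the two values differ by exactly one. A difference of one is not enough to distinguish with a $(2-\epsilon)$-approximation, so the real work is \emph{amplification}: I would scale up by a factor $T = \Theta(1/\epsilon)$ (or even just take $T$ a large constant), replacing each coordinate $i$ with $x_i = 1$ by a \emph{block} of $T$ mutually disjoint unit intervals clustered in a tiny neighborhood of $2i$ (e.g. at positions $2i, 2i + \eta, \dots, 2i + (T-1)\eta$ for sufficiently small $\eta$, or more cleanly spaced out but within a region of length $<1$ between consecutive even positions — one can rescale the whole construction). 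Then the ``yes''/``no'' gap becomes a full block of size $T$, and choosing $T$ large relative to $1/\epsilon$ forces the ratio of the two possible $\alpha$ values to exceed $2-\epsilon$, so a $(2-\epsilon)$-approximation separates the cases.

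Concretely, after amplification: in the $x_j = 0$ case Bob can take all $T$ intervals of block $j$ plus the blocker plus all $T$ intervals of each block $i < j$ with $x_i = 1$, whereas in the $x_j = 1$ case the blocker is forced to conflict with block $j$ and one argues the optimum is to drop the blocker (a loss of $1$) and keep the $T$ intervals of block $j$ — actually the cleaner gadget makes the blocker conflict with the \emph{entire} block $j$: place the $T$ copies of coordinate $j$ and the blocker so that the blocker intersects all $T$ of them, which is easy since they all lie in a region of length $< 1$. Then in the $x_j=1$ case $\alpha = T\cdot(\#\{i<j: x_i=1\}) + 1$ (drop the block, keep the blocker... or the reverse — pick whichever gadget makes the block the thing that gets zeroed out), versus $\alpha = T\cdot(\#\{i<j: x_i=1\} + 1) + 1$ in the $x_j = 0$ case; letting $A = \#\{i < j : x_i = 1\}$ and noting $A$ can be $0$, the ratio of the larger to smaller value is at least $(T+1)/1$ in the worst case, hence any multiplicative $(2-\epsilon)$ estimate of $\alpha$ lets Bob read off $x_j$ with the same success probability as \texttt{Alg}.

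The main obstacle, and the step I would be most careful about, is the geometric packing: I must guarantee that (a) within a single block the $T$ copies are pairwise disjoint, (b) intervals from different blocks never interact, and (c) the blocker for coordinate $j$ intersects exactly the intervals of block $j$ and nothing else — all simultaneously, with all coordinates (centers, radii) being valid turnstile-stream tokens (integers, or rationals at bounded precision, after a global rescaling by a $\poly(n)$ factor). This is a routine but fiddly interval-arithmetic argument: rescale the line by a factor $2nT$, give each coordinate $i$ its own ``slot'' $[2niT, 2niT + T)$-ish of length $O(T)$, space the $T$ copies at unit distance inside it so they are disjoint, and place the blocker overlapping the leftmost copy's slot; the slack between slots ($\gg 1$) kills all cross-block interactions. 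Beyond that, the reduction is immediate: the stream has $O(nT) = O(n/\epsilon)$ updates, so an $o(n)$-space algorithm would give an $o(n)$-bit protocol for \texttt{AI}$_{n,j}$, contradicting Theorem \ref{athm:aug_index}; since $\epsilon$ is a constant this yields the claimed $\Omega(n)$ bound.
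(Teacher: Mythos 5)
Your high-level framework (reduce from \texttt{AI}$_{n,j}$, have Alice insert intervals encoding $x$, have Bob delete the suffix he knows and insert a probe near position $j$) is exactly the right one and matches the paper. But the geometric gadget has a genuine flaw that the amplification step does not repair. Because your blocks are pairwise \emph{disjoint}, the unknown prefix bits contribute an additive term to $\alpha$ in \emph{both} cases: writing $A = \#\{i<j : x_i = 1\}$, the two possible values of $\alpha$ after Bob's edits are (up to the off-by-one confusion in your account of the gadget) $TA+1$ versus $TA+T$. The multiplicative gap $(TA+T)/(TA+1)$ tends to $1$ as $A$ grows, no matter how large you take $T$ — the ``noise'' $TA$ scales with $T$ exactly as fast as the signal does. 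Your sentence ``noting $A$ can be $0$, the ratio \ldots is at least $(T+1)/1$ in the worst case'' inverts best and worst case: $A=0$ is the \emph{best} case for you, and the reduction must work for every input, including those with $A = \Theta(n)$. Moreover Bob cannot subtract off $TA$, since he does not know $x_i$ for $i<j$ — that is precisely what Augmented Indexing withholds from him. So a $(2-\epsilon)$-approximation (indeed, any constant-factor approximation) fails to distinguish the two cases under your encoding.

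The missing idea is to make all of Alice's intervals pairwise \emph{intersecting} rather than pairwise disjoint, so that the unknown prefix contributes exactly $1$ to $\alpha$ instead of an unknown additive term. The paper does this by clustering everything in a window of length $o(1)$: Alice inserts $\left[\frac{2i-x_i}{n^2},\,\frac{2i-x_i}{n^2}+1\right]$ for each $i$, encoding the bit $x_i$ as a perturbation of size $1/n^2$ in the interval's position; since all left endpoints lie within $O(1/n)$ of each other, every pair of Alice's intervals overlaps. Bob deletes the intervals for $i>j$ (which he can reconstruct from his side information) and inserts a probe whose right endpoint is $\frac{2j-0.5}{n^2}$, sitting exactly between the two possible left endpoints of Alice's $j$-th interval. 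Then $\alpha=1$ if $x_j=1$ and $\alpha\geq 2$ if $x_j=0$, a clean factor-of-$2$ gap with no dependence on the prefix. If you want to salvage your write-up, replace the disjoint blocks and the $T$-fold amplification with this clustering trick; the rest of your argument (communicating the memory state, invoking Theorem \ref{athm:aug_index}, the precision/rescaling remarks) then goes through.
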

\begin{proof}
Given her input $x$, Alice constructs a stream of unit-length intervals and runs \texttt{Alg} on the stream. For $i \in [n]$, Alice inserts the interval $\left[ \frac{2i - x_i}{n^2} , (\frac{2i - x_i}{n^2}) + 1\right]$. She then communicates the state of \texttt{Alg} to Bob. Bob uses the message received from Alice as the initial state of the algorithm and continues the stream. Since Bob's input includes an index $j$ and $x_{i}$ for all $i > j$, Bob deletes all intervals corresponding to such $i$. Bob then inserts $\left[ (\frac{2j - 0.5}{n^2}) -1 , \frac{2j - 0.5}{n^2} \right]$. 

Let us consider the case where $x_j =1$. We first note that Bob's interval is the leftmost interval in the remaining set. The right endpoint of this interval is $\frac{2j - 0.5}{n^2}$. Next, the rightmost interval corresponds to the $j^{th}$ interval inserted by Alice. The left endpoint of this interval is $\frac{2j -1}{n^2}$. Clearly, these intervals intersect each other and intersect all the intervals between them. Therefore, $\alpha =1$. 

Let us now consider the case where $x_j =0$. Again, Bob's interval is the leftmost with its right endpoint at $\frac{2j - 0.5}{n^2}$. However, the left endpoint of Alice's rightmost interval is $\frac{2j}{n^2}$ and thus these two intervals are independent. Therefore, $\alpha \geq 2$. Observe, any $(2-\epsilon)$-approximate algorithm can distinguish between these two cases and solve \texttt{AI}$_{n,j}$. By Theorem \ref{athm:aug_index}, any such algorithm requires $\Omega(n)$ communication and in turn $\Omega(n)$ space. 
\end{proof}

\section{Arbitrary Length Intervals in Turnstile Streams} 
\label{sec:arb_length}
We now focus on estimating $\alpha$ and $\beta$ for arbitrary-length intervals in turnstile streams. 
%We first rule out any constant factor approximation to $\alpha$ in 
%poly$\left(\frac{\log(n)}{\epsilon}\right)$ space. Note, this also holds for estimating $\beta$. In contrast, Cabello and Perez-Lantero \cite{CP15} showed a $(2+\epsilon)$-approximation to $\alpha$ in insertion only streams using $\poly{\frac{\log(n)}{\epsilon}}$ space. %To find the source of complexity, we study the problem under additional assumptions and parametrization. 
While we cannot obtain streaming algorithms in general, we show it is possible to estimate $\alpha$ and $\beta$ when the maximum 
degree of the interval intersection graph or the maximum length of an
interval arre bounded.  In particular, we show an algorithm that achieves a $(1+\epsilon)$-approximation to $\alpha$
given that the maximum degree is upper bounded by poly$\left(\frac{\log(n)}
{\epsilon}\right)$. We also parameterize the problem with respect to the 
maximum length of an interval, $W_{max}$ (assuming the minimum length is $1$), and give an algorithm using poly$\left(W_{max} \frac{\log(n)}
{\epsilon}\right)$ space.

\subsection{Algorithms under Bounded Degree Assumptions }
\label{subsection:bounded_degree}

In light of the lower bound, we study the problem of estimating $\alpha$ for 
arbitrary-length intervals assuming the number of pair-wise intersections are 
bounded by $\kappa_{\max} =$ poly$\left(\frac{\log(n)}{\epsilon}\right)$. In 
this section we show the following theorem:

\begin{theorem}
\label{thm:bounded_deg_ins_del}
Let $\mathcal{P}$ be an turnstile stream of unit-weight arbitrary-length intervals with lengths polynomially bounded in $n$ and let $\epsilon \in (0, 1/2)$. Let $\kappa_{\max} =$  poly$\left(\frac{\log(n)}{\epsilon}\right)$ be the maximum number of pairwise intersections in $\mathcal{P}$. Then, there exists an algorithm that outputs an estimator $Y$ such that the following guarantees hold:
\begin{enumerate}
    \item $\frac{\alpha}{(1+\epsilon)} \leq Y \leq \alpha$ with probability at least $2/3$.
    \item The total space used is $\text{poly}\left(\frac{\log(n)}{\epsilon}\right)$. 
\end{enumerate}
\end{theorem}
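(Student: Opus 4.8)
The plan is to mirror the proof of Theorem~\ref{thm:weighted_ins_del_unit}, replacing the single unit grid by a randomly shifted \textsf{Nested Grid}, replacing weight classes by grid levels, and using \textsf{Sparse Recovery} (Definition~\ref{def:sparse_rec}) in place of the plain $\ell_0$-estimator so that entire $r_i$-\textsf{Structures} (Definition~\ref{def:structure}) can be recovered exactly. First I would fix the nested grid: for $i\ge 0$, level $i$ has cells of side $L_i$ with $L_{i+1}=L_i/\epsilon$, so there are $O(\log(n)/\log(1/\epsilon))$ relevant levels since lengths are polynomially bounded; the whole family is shifted by one common random offset. An interval of length $\ell$ is assigned to the smallest level $i$ with $\ell\le\epsilon L_i$; then it spans at most two level-$i$ cells and, over the random shift, crosses a level-$i$ boundary with probability at most $\epsilon$. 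Fixing an optimal independent set $O$ with $|O|=\alpha$ and letting $O'\subseteq O$ be the intervals not crossing a boundary at their own level, we get $\mathbb{E}[|O\setminus O'|]\le\epsilon\alpha$, so after a Markov step and rescaling $\epsilon$ the maximum independent set among grid-respecting intervals lies in $[(1-\epsilon)\alpha,\alpha]$. I would then argue, using Definition~\ref{def:structure} together with a largest-first exchange argument, that every grid-respecting interval of an optimal solution is the level-$i$ interval of a unique $r_i$-\textsf{Structure} and, conversely, each $r_i$-\textsf{Structure} contributes exactly one interval to some optimal solution; hence $\alpha=(1\pm\epsilon)\sum_i M_i$, where $M_i$ is the number of $r_i$-\textsf{Structures}. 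It thus suffices to $(1\pm\epsilon)$-estimate each $M_i$ in $\mathrm{poly}(\log(n)/\epsilon)$ space.

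For the estimator, for each level $i$ and each of $O(\log n)$ geometric guesses $N$ for $M_i$, I would agnostically sub-sample the level-$i$ cells at rate $p=\Theta(\log(n)/(\epsilon^3 N))$ at the start of the stream, exactly as in Algorithm~\ref{alg:weighted_unit_interval_sampling}, choosing the correct guess at the end of the stream (a Na\"ive-Approximation-style warm-up gives $M_i$ to within a constant factor). For each sampled level-$i$ cell $c$, maintain a $\mathrm{poly}(\kappa_{\max},\log n)$-\textsf{Sparse Recovery} structure (\textsf{Count-Sketch} with the $k$-tail guarantee) on the updates to intervals whose support meets $c$ or a constant number of neighbouring cells at levels $\le i$, together with an $\ell_0$-estimator (Theorem~\ref{thm:ell_0}) for the intervals at levels $>i$ meeting $c$. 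The role of the bounded-degree hypothesis is precisely that all intervals participating in an $r_i$-\textsf{Structure} anchored at $c$ --- the level-$i$ interval and the at most $\kappa_{\max}$ lower-level intervals crossing it --- form an $O(\kappa_{\max})$-sparse vector, so \textsf{Sparse Recovery} returns them exactly; at the end of the stream we can then simply check, from the recovered intervals and the level-$>i$ indicator, whether $c$ actually hosts an $r_i$-\textsf{Structure}. Summing this indicator over the sampled level-$i$ cells and dividing by $p$ gives $\widehat M_i$, and the output is $\sum_i \widehat M_i$.

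Correctness and space then follow the template of Lemmas~\ref{lem:weighted_ins_del_unit_space} and~\ref{lem:weighted_ins_del_unit_approx}. Since at most $M_i$ cells host an $r_i$-\textsf{Structure}, sampling at rate $p=\Theta(\log(n)/(\epsilon^3 M_i))$ keeps $\mathrm{poly}(\log(n)/\epsilon)$ such cells in expectation, and a Chernoff bound shows the sampled count exceeds this by at most a $(1+\epsilon)$ factor (so each \textsf{Sparse Recovery} structure stays $\mathrm{poly}(\kappa_{\max},\log n)$-sparse and, since $\kappa_{\max}=\mathrm{poly}(\log(n)/\epsilon)$, the total space is $\mathrm{poly}(\log(n)/\epsilon)$). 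A second Chernoff bound, using that the expected number of sampled $r_i$-\textsf{Structures} is $\Omega(\log(n)/\epsilon^2)$, gives $\widehat M_i=(1\pm\epsilon)M_i$ with probability $1-1/\mathrm{poly}(n)$. Levels $i$ with $M_i$ too small to sub-sample usefully are handled like the non-contributing weight classes in Lemma~\ref{lem:weighted_ins_del_unit_approx}: their estimators are set to $0$ using the $\ell_0$-estimates, and since there are only $O(\log(n)/\log(1/\epsilon))$ levels this discards at most an $\epsilon$-fraction of $\alpha$. A union bound over all levels and guesses yields a $(1+\epsilon)$-approximation with probability $2/3$, in $\mathrm{poly}(\log(n)/\epsilon)$ space.

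The step I expect to be the main obstacle is making the correspondence between optimal-solution intervals and $r_i$-\textsf{Structures} precise, and, relatedly, certifying the $r_i$-\textsf{Structure} condition at the end of the stream: this condition is non-local --- it asks that \emph{no} interval lies above level $i$ near $c$ and that \emph{every} lower interval near $c$ crosses the level-$i$ interval --- and both sets change arbitrarily under turnstile updates. The resolution is that the $\kappa_{\max}$ bound makes the whole structure $O(\kappa_{\max})$-sparse and hence exactly recoverable, so the verification can be deferred to the end of the stream; getting the bookkeeping right (which neighbouring cells to track, how to treat intervals straddling two cells of their own level, and how the per-level sampling interacts with the warm-up estimate of $M_i$) is where the real effort goes.
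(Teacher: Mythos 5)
Your overall architecture matches the paper's: a randomly shifted \textsf{Nested Grid} with geometrically growing cell sizes, a reduction to $r_i$-\textsf{Structures}, agnostic sub-sampling of level-$i$ cells at a rate inversely proportional to a guessed count, $\kappa_{\max}$-\textsf{Sparse Recovery} to recover and then verify each sampled structure at the end of the stream, and a thresholding step that zeroes out non-contributing levels. However, there is a genuine gap in your reduction. You claim that each $r_i$-\textsf{Structure} contributes exactly one interval to an optimal solution, so that $\alpha=(1\pm\epsilon)\sum_i M_i$ and it suffices to \emph{count} structures. This is false: by Definition~\ref{def:structure}, an $r_i$-\textsf{Structure} consists of one level-$i$ interval together with lower-level intervals that each intersect the level-$i$ interval but need not intersect one another. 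A long interval crossed by $k\le\kappa_{\max}$ pairwise-disjoint short intervals is a single valid $r_i$-\textsf{Structure} whose maximum independent set has size $k$, and no ``largest-first exchange'' can trade those $k$ short intervals for the one long interval without losing cardinality (for interval MIS shorter intervals are preferable, not longer ones). Your estimator $\sum_i \widehat M_i$, which sums an indicator per sampled cell and divides by $p$, would therefore undercount $\alpha$ by up to a factor of $\kappa_{\max}$.

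The fix is what the paper does, and it is close to machinery you already have in place: since each sampled structure is recovered \emph{exactly} by the $\kappa_{\max}$-\textsf{Sparse Recovery} data structure, run an offline MIS algorithm on the recovered intervals of each sampled structure and importance-sample the quantity $\sum_{s\in\mathcal{R}_i}|\text{\texttt{OPT}}_s|$, i.e., set $Y_i=|\text{\texttt{OPT}}_{R^s_i}|/p_i$ rather than estimating the count $M_i$. The concentration argument then uses that each per-structure value is bounded by $\kappa_{\max}$ (which the degree hypothesis supplies) instead of by $1$; with that change, your space bound and Chernoff analysis go through essentially as written.
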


\begin{Frame}[\textbf{Algorithm \ref{alg:level_estimator} : Level Estimator.}]
\label{alg:level_estimator}
\textbf{Input:} Given a turnstile stream $\mathcal{P}$ with unit weight arbitrary length intervals, where the length is polynomially bounded, $\epsilon > 0$ and $\delta >0$, the algorithm outputs a $(1 + \epsilon)$-approximation to $\alpha$, assuming that $\kappa_{\max} =\poly{\frac{log(n)}{\epsilon}}$.

\begin{enumerate}
	\item Let $t =O\left(\frac{\log(n)}{\epsilon}\right)$ be the number of level-classes. Let $\Delta = \bigcup_{i\in[t]} \Delta_i$ be a randomly shifted \textsf{Nested Grid}, where $\Delta_i$ is a grid of side length $\frac{(1+\epsilon)^{i+1}}{\epsilon}$. 
     \item For $i \in [t]$, let $\mathcal{R}_i$ be the set of all $r_i$-\textsf{Structures} at level $i$, where a $r_i$-\textsf{Structure} is a subset of the \textsf{Nested Grid}, $\Delta$, such that there exists an interval at the $i^{th}$ level of the structure, there exist no intervals in the structure at any level $i' > i$ and all the intervals in the structure at levels $i'< i$ intersect the interval at the $i^{th}$ level. 
	\item For all $i \in [t]$, using Algorithm \ref{alg:r_i_structures}, sample poly$\left(\frac{\log(n)}{\epsilon}\right)$ $r_i$-\textsf{Structures} from the set $\mathcal{R}_i$ to create a substream $R^s_i$.  Note, this sampling is carried out with probability $p_i$ defined below. 
    %\item Run a $\kappa_{max}$-\textsf{Sparse Recovery} algorithm on each structure in $R^s_i$. 
    \item At the end of the stream, we recover $R^s_i$, for all $i \in [t]$. Let $Y_i = \frac{|\text{\texttt{OPT}}_{R^s_i}|}{p_i}$ (where $p_i$ is the sampling probability for the $i^{th}$ level), where $|\text{\texttt{OPT}}_{R^s_i}|$ can be computed using an offline algorithm.
\end{enumerate}

\noindent\textbf{Output:} $Y = \sum_{i\in[t]} Y_i$.
\end{Frame}

Let $W$ be the maximum length of the intervals in our input. We split our input into $t = O\left(\frac{\log(n)}{\epsilon}\right)$ length classes $\mathcal{W}_i$ such that for all $i \in [t]$, $\mathcal{W}_i = \{ D_j \in \mathcal{P} | (1+\epsilon)^i \leq r_j \leq (1+\epsilon)^{i+1} \}$. Let $\mathcal{W}$ denote $\bigcup_{i\in[t]} \mathcal{W}_i$. We note that the partition here is over the input to the problem. 

We can estimate the number of non-empty cells in each weight class up to a $(1\pm \epsilon)$-factor by creating a substream for each one and running an $\ell_0$ estimator on them. At the end of the stream, we can discard classes that are not within $\log(W)$ non-empty cells of each other. Therefore, we can assume the remaining classes have the same number of non-empty cells up to a $\log(W)$ factor. 

We then make $O(\log(n))$ guesses for the number of non-empty cells for any fixed level and run our algorithm in parallel for each guess. Since there are $t$ levels, this gives rise to an $O\left(t \log(n)\right)$ factor blowup in space. At the end of the stream we know the correct value for each level via the $\ell_0$ estimates. Let the number of non-empty cells at every level be denoted by $X_i$.

In contrast with our previous algorithm, we note that placing a grid on the input with side length $1$ no longer suffices since our intervals may now lie in multiple cells. Therefore, we impose a nested grid over the input space:

\begin{definition}(\textsf{Nested Grid}.)
Given a partition $\mathcal{W}$, let grid $\Delta_i$, corresponding to $\mathcal{W}_i \in \mathcal{W}$, be a set of cells over the input space with length $\frac{(1+\epsilon)^{i+1}}{\epsilon}$. Then a \textsf{Nested Grid}, denoted by $\Delta$, is defined to be $\bigcup_{i\in[t]} \Delta_i$.
\end{definition}

We then randomly shift the nested grid such that at most an $\epsilon$-fraction of intervals in the $i^{th}$ length class lie within a distance $(1+\epsilon)^{i+1}$ of the $i^{th}$ grid. Since this holds for all $\mathcal{W}_i$, and $\mathcal{W}_i$ are a partition of our input, we lose at most an $\epsilon$-fraction of $\alpha$. We then define the following object that enables us to obtain accurate estimates for each length class. 

\begin{definition}($r_i$-\textsf{Structure.})
\label{def:structure}
We define an $r_i$-\textsf{Structure} to be a subset of the \textsf{Nested Grid}, $\Delta$, such that there exists an interval at the $i^{th}$ level of the structure, there exist no intervals in the structure at any level $i' > i$ and all the intervals in the structure at levels $i'< i$ intersect the interval at the $i^{th}$ level. 
\end{definition}

Let $\mathcal{R}_i$ denote the set of all $r_i$-\textsf{Structures} at level $i$. Observe that, taking the union over $i \in [t]$ of $\mathcal{R}_i$ gives a partition of the input. Therefore, estimating $|\text{\texttt{OPT}}_{\mathcal{R}_i}|$ separately and summing up the estimates is a good estimator for $\alpha$.

Similar to the algorithm in Section \ref{sec:turnstile_unit_interval} a key tool we use is $k$-\textsf{Sparse Recovery}. Intuitively, we subsample poly$\left(\frac{\log(n)}{\epsilon}\right)$ $r_i$-\textsf{Structures} from the set $\mathcal{R}_i$ to create a substream $\mathcal{R}^s_i$ and run a $\kappa_{\textrm{max}}$-\textsf{Sparse Recovery} Algorithm on each substream.  
At the end of the stream, we get an estimate of $|\text{\texttt{OPT}}_{\mathcal{R}_i}|$ that concentrates. We then add up the estimates across all the levels to form our overall estimate. 
We formally describe the Level Estimator Algorithm in Algorithm \ref{alg:level_estimator}, assuming we are given access to a black-box sampling algorithm for sampling an $r_i$-\textsf{Structure}.

Next, we describe the algorithm for sampling $r_i$-\textsf{Structures} in \emph{turnstile streams}. We assume $X_i$ is a $2$-approximation to the number of non-empty cells in $\mathcal{W}_i$. Intuitively, at each level we sample with probability $p_i = \text{poly}\left(\frac{\log(n)}{\epsilon}\right) \frac{1}{X_i}$ and hash each sampled structure such that the structures from $\mathcal{R}_i$ get hashed into different bins.
Simultaneously, the number of structures not in $\mathcal{W}_i$ that get sampled are not too large. Then, given enough bins, we hash each structure into separate bins. Finally, running a sparse recovery on each bin suffices to recover a structure exactly. We show that we can recover enough structures from each set $\mathcal{R}_i$ to get an estimator that concentrates around $\text{\texttt{OPT}}_{R^s_i}$.
We say that set $\mathcal{R}_i$ contributes if $|\mathcal{R}_i| \geq \frac{\epsilon \alpha}{\log(n)}$. Note, if we discard all the sets that do not contribute we lose at most an $\epsilon$-fraction of $\alpha$. We formally describe the sampling algorithm in Algorithm \ref{alg:r_i_structures}.

\begin{lemma}
\label{lem:bounded_unit_weight_arb_length_algorithm}
If the estimator $X_i$ corresponding to the set $\mathcal{R}_i$ passes the threshold, i.e. $X_i > \frac{\epsilon \sum_{i \in [t]} X_i}{t}$, we obtain an estimate $Y_{i}$ such that  $Y_i = (1 \pm \epsilon) |\text{\texttt{OPT}}_{\mathcal{R}_i}|$. If class $\mathcal{R}_i$ does not contribute, we obtain an estimate $Y_i$ such that $Y_i \leq (1+\epsilon) |\text{\texttt{OPT}}_{\mathcal{R}_i}|$.
\end{lemma}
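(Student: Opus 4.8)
The plan is to mirror the structure of the proof of Lemma~\ref{lem:weighted_ins_del_unit_approx}, which handled the analogous claim for unit intervals, with the weight classes $\mathcal{W}_i$ replaced by the structure classes $\mathcal{R}_i$ and with the new ingredient that each sampled $r_i$-\textsf{Structure} can be recovered \emph{exactly} by the $\kappa_{\max}$-\textsf{Sparse Recovery} data structure (since a structure consists of at most $O(\kappa_{\max})$ intervals, one at level $i$ and the rest intersecting it at lower levels). First I would set up the sampling bookkeeping: at level $i$ we agnostically subsample cells of $\Delta_i$ with probability $p_i = \text{poly}\!\left(\tfrac{\log(n)}{\epsilon}\right)\!/X_i$, and at the end of the stream, for each sampled cell, we use the recovered intervals to \emph{test} whether that cell actually carries an $r_i$-\textsf{Structure} (i.e.\ there is an interval at level $i$ in it, nothing strictly above, and everything below intersects it). The point of exact recovery is precisely that this test can be performed offline with no error. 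Only the sampled cells passing the test enter $R^s_i$, and then $Y_i = |\text{\texttt{OPT}}_{R^s_i}|/p_i$ is computed by an offline \textsf{MIS} algorithm on $R^s_i$.

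The bulk of the argument is the concentration when $\mathcal{R}_i$ contributes. Since $\bigcup_i \mathcal{R}_i$ partitions the (grid-snapped) input and each $r_i$-\textsf{Structure} contributes exactly one interval to $\text{\texttt{OPT}}$ at its level, we have $|\text{\texttt{OPT}}_{\mathcal{R}_i}| = |\mathcal{R}_i|$ and hence $\sum_i |\mathcal{R}_i| = (1\pm\epsilon)\alpha$ after the random-shift loss. If $X_i > \tfrac{\epsilon \sum_j X_j}{t}$, then since $X_i$ is a $2$-approximation to $|\mathcal{R}_i|$ (via the $\ell_0$-estimates on the length-class substreams) and $\sum_j X_j = \Theta(\alpha)$, we get $|\mathcal{R}_i| = \Omega(\epsilon\alpha/t)$, so the expected number of surviving structures is $p_i|\mathcal{R}_i| = \Omega(\text{poly}(\log n/\epsilon))$ — I would pin down the polynomial so this is $\Omega(\log(n)/\epsilon^2)$. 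A Chernoff bound then shows $|R^s_i| = (1\pm\epsilon)p_i|\mathcal{R}_i|$ with probability $1-n^{-k}$; since each surviving structure is recovered exactly and classified correctly, $|\text{\texttt{OPT}}_{R^s_i}| = |R^s_i|$, and therefore $Y_i = |R^s_i|/p_i = (1\pm\epsilon)|\mathcal{R}_i| = (1\pm\epsilon)|\text{\texttt{OPT}}_{\mathcal{R}_i}|$. I would union bound over the $t$ levels. For the non-contributing case, $\mathcal{R}_i$ may have too few structures to concentrate, but the estimator is still \emph{non-overestimating in expectation}: $\mathbb{E}[Y_i] = |\mathcal{R}_i| = |\text{\texttt{OPT}}_{\mathcal{R}_i}|$, and because we only ever \emph{add} recovered, verified structures (never spuriously — the exact-recovery test rules out false positives) we can show $Y_i \le (1+\epsilon)|\text{\texttt{OPT}}_{\mathcal{R}_i}|$ deterministically up to the tail event that the sample is far above its mean, which is again a Chernoff bound; the key is that verification prevents us from ever counting a cell that does not carry a genuine $r_i$-\textsf{Structure}.

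The step I expect to be the main obstacle is controlling the \emph{sparsity} of the substream $R^s_i$ so that the $\kappa_{\max}$-\textsf{Sparse Recovery} is valid: I must argue that, at the right sampling rate, not only do $\Omega(\text{poly}(\log n/\epsilon))$ structures from $\mathcal{R}_i$ survive, but also that the total number of distinct nonempty cells hashed together (including intervals from \emph{other} length classes that happen to pass through a sampled cell of $\Delta_i$, and structures from levels $i' \ne i$) stays $\text{poly}(\log n/\epsilon)$, so the recovery decodes without collision. This requires a separate averaging/Chernoff argument bounding, for each level $i' > i$, the number of its structures surviving in $\mathcal{S}_i$ — exactly analogous to the "$i' > i$" part of the proof of Lemma~\ref{lem:weighted_ins_del_unit_approx} — together with the bounded-degree assumption $\kappa_{\max} = \text{poly}(\log n/\epsilon)$ to bound the number of intervals inside any single recovered structure, and the hashing-into-separate-bins step of Algorithm~\ref{alg:r_i_structures} to keep each sparse-recovery instance $\kappa_{\max}$-sparse. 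Once sparsity is established the exact recovery and the offline \textsf{MIS} computation are immediate, and the rest is the two Chernoff bounds and a union bound over $t = O(\log(n)/\epsilon)$ levels.
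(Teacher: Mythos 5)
Your overall architecture matches the paper's proof: agnostically subsample cells of $\Delta_i$ at rate $p_i$, use the $\kappa_{\max}$-\textsf{Sparse Recovery} output to verify offline which sampled cells carry genuine $r_i$-\textsf{Structures}, compute the \textsf{MIS} of the recovered sample offline, rescale by $p_i$, apply a Chernoff bound for contributing classes, and set $Y_i = 0$ (no overcount) for non-contributing ones. The sparsity concern you flag as the ``main obstacle'' is also resolved exactly as you anticipate: the paper uses the fact that levels whose non-empty cell counts differ by more than a $\log(W)$ factor were discarded earlier, so at most $O\left(\frac{\log^2(n)}{\epsilon^3}\right)$ structures from each other level $i'$ survive in $\mathcal{S}_i$, keeping every recovery instance poly$\left(\frac{\log(n)}{\epsilon}\right)$-sparse.

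There is, however, one genuine error in your concentration argument: the identity $|\text{\texttt{OPT}}_{\mathcal{R}_i}| = |\mathcal{R}_i|$ (and likewise $|\text{\texttt{OPT}}_{R^s_i}| = |R^s_i|$) is false. By Definition \ref{def:structure}, the intervals at levels $i' < i$ inside an $r_i$-\textsf{Structure} all intersect the single level-$i$ interval, but they need not intersect \emph{each other}; the structure's own maximum independent set can therefore be as large as $\kappa_{\max}$, not $1$ (indeed, the optimal solution inside a structure may discard the level-$i$ interval entirely in favor of several pairwise-disjoint lower-level intervals). This is precisely why the algorithm runs an offline \textsf{MIS} computation on each recovered structure rather than merely counting them, and why the paper's Chernoff step explicitly invokes the bound that ``an $r_{i'}$-\textsf{Structure} has at most $\kappa_{\max}$ independent intervals'': the estimator is a sum of independent terms each lying in $[1, \kappa_{\max}]/p_i$, so you need a Hoeffding-type bound with that range, and the sample size must be polynomially large compared to $\kappa_{\max}^2$ (which holds since $\kappa_{\max} = \text{poly}\left(\frac{\log(n)}{\epsilon}\right)$). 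Your first paragraph sets up the correct estimator $Y_i = |\text{\texttt{OPT}}_{R^s_i}|/p_i$, but the concentration paragraph then analyzes the wrong random variable $|R^s_i|/p_i$, which would only estimate the number of structures, not their contribution to $\alpha$. The fix is routine, but as written the central estimate of the lemma is not established.
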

\begin{proof}
We first consider the case where $\mathcal{R}_i$ contributes, i.e., $|\mathcal{R}_i| \geq \frac{\epsilon \alpha}{\log(n)}$. Then, sampling at a rate $p_i$ implies at least $\frac{\log(n)}{\epsilon^2}$ $r_i$-\textsf{Structures} from $\mathcal{R}_i$ survive. We note that since the number of non-empty cells across levels are within an $O\left(\log(n)\right)$ factor of each other, at most $O\left(\frac{\log^2(n)}{\epsilon^3}\right)$ $r_{i'}$-\textsf{Structures} from $R_{i'}$ survive. Note, overall the number of structures that survive is poly$\left(\frac{\log(n)}{\epsilon}\right)$. Therefore, we can run $\kappa_{max}$-\emph{Sparse Recovery} on each cell and stay within our space bounds. At the end of the stream, we can exactly recover the set $R^s_i$ for all $i \in [t]$. For each structure in $R^s_i$, we recover the exact intervals inside them and run an offline algorithm to compute $|\text{\texttt{OPT}}_{\mathcal{R}_i}|$ exactly. Since the size of our sample $R^s_i$ is at least poly$\left(\frac{\log(n)}{\epsilon}\right)$ and an $r_{i'}$-\textsf{Structure} has at most $\kappa_{max}$ independent intervals, we obtain a $(1 \pm \epsilon)$-approximation to $|\text{\texttt{OPT}}_{\mathcal{R}_i}|$ using a simple application of a Chernoff bound. 
In the case where $\mathcal{R}_i$ does not contribute, the number of samples we get is smaller than $\frac{\epsilon \alpha}{\log(n)}$. Therefore, we can set the estimate $Y_i$ for $\mathcal{R}_i$ to be $0$. Note, for such $i$ we do not obtain a concentrated estimate, but we also do not overcount. 
\end{proof}

\begin{Frame}[\textbf{Algorithm \ref{alg:r_i_structures} : Sampling $r_i$-\textsf{Structures} from $\mathcal{R}_i$: .}]
\label{alg:r_i_structures}
\textbf{Input:} Given a turnstile stream $\mathcal{P}$ with unit weight arbitrary length intervals, with the length being polynomially bounded, $\epsilon > 0$ and $\delta >0$, the sampling procedure creates a $\text{poly}\left(\frac{\log(n)}{\epsilon}\right)$ size sample of the set $\mathcal{R}_i$. 

\begin{enumerate}
	\item Let $\Delta_i$ be the $i^{th}$ level of a randomly shifted \emph{Nested Grid} $\Delta$. Let $\mathcal{R}_i$ be the set of $r_i$-\textsf{Structures} where the topmost cells lie in $\Delta_i$. Let $X_i$ be the correct guess for the number of non-empty cells in $\Delta_i$ up to a constant. 
	\item Agnostically sample cells from $\Delta_i$ with probability $p_i= \textrm{max}\left(\text{poly}\left(\frac{\log(n)}{\epsilon}\right) \frac{1}{X_i}, 1\right)$. Let $S_i$ be the corresponding substream created.  
	\item For each cell $c \in \mathcal{S}_i$, let $r^c_i$ be a structure (as defined in \ref{def:structure}) with $c$ at the topmost level. Run $\kappa_{max}$-\emph{Sparse Recovery} on substream $\mathcal{S}_i$.
	\item At the end of the stream, verify that $r^c_i$ is a valid $r_i$-\textsf{Structure}. Let $R^s_i$ be the set of all such structures. 
    \item If $ X_i > \frac{\epsilon \sum_{i\in [t]} X_i }{t}$, keep $R^s_i$, else discard it. 
\end{enumerate}

\noindent\textbf{Output:} $\bigcup_{i \in [t]} R^s_i$.
\end{Frame}

\begin{lemma}
\label{lem:bounded_deg_ins_del_space}
The space used by Unit Weight Arbitrary Length Interval turnstile Algorithm is $poly\left(\frac{\log(n)}{\epsilon}\right)$. 
\end{lemma}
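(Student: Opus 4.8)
The plan is to account for the space consumed by each component of the Level Estimator Algorithm (Algorithm \ref{alg:level_estimator}) together with its subroutine for sampling $r_i$-\textsf{Structures} (Algorithm \ref{alg:r_i_structures}), and show each contributes only $\text{poly}\left(\frac{\log(n)}{\epsilon}\right)$. First I would handle the preprocessing: we run an $\ell_0$-estimator on a substream for each of the $t = O\left(\frac{\log(n)}{\epsilon}\right)$ length classes $\mathcal{W}_i$, which by Theorem \ref{thm:ell_0} costs $\text{poly}\left(\frac{\log(n)}{\epsilon}\right)$ space each, and we boost the success probability to high probability via $O(\log n)$ independent repetitions and a median, so this layer is $\text{poly}\left(\frac{\log(n)}{\epsilon}\right)$ in total. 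Next, since we do not know the number of non-empty cells $X_i$ at any level a priori, we make $O(\log n)$ geometric guesses per level and run the rest of the algorithm in parallel for each; since the parallel branches are over $t$ levels this is only an $O(t\log n) = \text{poly}\left(\frac{\log(n)}{\epsilon}\right)$ multiplicative blowup.

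The heart of the argument is bounding the space of the sampling subroutine. At level $i$ we agnostically subsample cells of $\Delta_i$ at rate $p_i = \max\left(\text{poly}\left(\frac{\log(n)}{\epsilon}\right)\frac{1}{X_i},\,1\right)$, and on each sampled cell we run a $\kappa_{\max}$-\textsf{Sparse Recovery} data structure (Count-Sketch with the $\kappa_{\max}$-tail guarantee, per \cite{CCF02,BCIS09}), which uses $\text{poly}(\kappa_{\max},\log n) = \text{poly}\left(\frac{\log(n)}{\epsilon}\right)$ space since $\kappa_{\max} = \text{poly}\left(\frac{\log(n)}{\epsilon}\right)$ by hypothesis. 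The key quantitative claim is that the number of sampled cells per level is $\text{poly}\left(\frac{\log(n)}{\epsilon}\right)$ with high probability: the total number of non-empty cells at level $i$ is $O(X_i)$ (up to the constant-factor guess), so the expected number of sampled non-empty cells is $p_i \cdot O(X_i) = \text{poly}\left(\frac{\log(n)}{\epsilon}\right)$; empty cells carry nothing, and we only need to maintain structures over sampled non-empty cells. Applying a Chernoff bound shows this count does not exceed its expectation by more than a constant factor except with probability $n^{-k}$, and a union bound over the $t$ levels and the $O(\log n)$ guess branches keeps the failure probability negligible. Hence across all levels the total number of maintained sparse-recovery structures is $\text{poly}\left(\frac{\log(n)}{\epsilon}\right)$, each of size $\text{poly}\left(\frac{\log(n)}{\epsilon}\right)$, giving the claimed bound; storing the \textsf{Nested Grid} shift requires only $O(\log n)$ bits of randomness (or $\text{poly}\log n$ for the limited-independence hash families actually needed).

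I would then close by multiplying the layers: $\big(\ell_0\text{-estimation layer}\big) + \big(O(t\log n)\text{ parallel guess branches}\big)\times\big(t\text{ levels}\big)\times\big(\text{poly}\left(\tfrac{\log n}{\epsilon}\right)\text{ sparse-recovery structures of }\text{poly}\left(\tfrac{\log n}{\epsilon}\right)\text{ size each}\big)$, all of which is $\text{poly}\left(\frac{\log(n)}{\epsilon}\right)$ since the lengths — and hence $t$ — are polynomially bounded in $n$. The main obstacle is the high-probability bound on the per-level sample size: one must be careful that the agnostic sampling rate $p_i$ is calibrated against a constant-factor \emph{guess} $X_i$ rather than the true count, and that the sparse-recovery structures are only instantiated for non-empty sampled cells (otherwise there would be infinitely many cells in $\Delta_i$); this is exactly the content that parallels Lemma \ref{lem:weighted_ins_del_unit_space} and which I would prove via the same Chernoff-plus-union-bound template.
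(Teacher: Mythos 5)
Your proposal is correct and follows essentially the same route as the paper: account for the $O(t\log n)$ parallel guesses, the $t = O\left(\frac{\log(n)}{\epsilon}\right)$ levels, the $\text{poly}\left(\frac{\log(n)}{\epsilon}\right)$ sampled cells per level, and the $\text{poly}\left(\frac{\log(n)}{\epsilon}\right)$ cost of each $\kappa_{\max}$-\textsf{Sparse Recovery} instance. The only difference is that you prove the per-level sample-size bound directly via a Chernoff-plus-union-bound argument on $p_i \cdot X_i$, whereas the paper simply delegates that bound to Lemma \ref{lem:bounded_unit_weight_arb_length_algorithm}; your version is more self-contained but not a different argument.
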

\begin{proof}
First, we note that we make $O\left(t\log(n)\right) = poly\left(\frac{\log(n)}{\epsilon}\right)$ guesses for the $X_i$'s and run our algorithm for each guess in parallel. There are $O\left(\frac{\log(n)}{\epsilon}\right)$ length classes and as seen in Lemma \ref{lem:bounded_unit_weight_arb_length_algorithm} for each class we sample poly$\left(\frac{\log(n)}{\epsilon}\right)$ cells. We run $\kappa_{max}$-\emph{Sparse Recovery} on each sampled cell which requires an additional $poly\left(\frac{\log(n)}{\epsilon}\right)$ space. Therefore, the total space we use is poly$\left(\frac{\log(n)}{\epsilon}\right)$. 
\end{proof}

The proof of Theorem \ref{thm:bounded_deg_ins_del} follows directly from Lemma \ref{lem:bounded_deg_ins_del_space} and Lemma \ref{lem:bounded_unit_weight_arb_length_algorithm}.

\subsection{Algorithms with Parametrized Space Complexity}

In this section we consider the problem of estimating $\alpha$ for arbitrary-length intervals assuming that the space available is at most poly$\left(\frac{W_{max} \log(n)}{\epsilon}\right)$, where $W_{max}$ is an upper bound on the ratio of the max to the min length of an interval. We note that this regime is interesting when $W_{max}$ is sublinear in $n$.    

We begin by modifying the \emph{Nested Grid} $\Delta$, changing the length of a cell at level $i$ to $\frac{(1+ \epsilon)^{i+1}}{2}$. Therefore, randomly shifting the grid, in expectation half the intervals from length class $\mathcal{W}_i$ exactly fit in grid cells $\Delta_i$. Therefore, discarding all the intervals that intersect the cell boundary, we lose at most $1/2$ of the MIS. 

We use the Level Estimators and the Sampling $r_i$-\textsf{Structures} algorithms as described in the previous section and mention the minor modifications that are required. Since we are constrained to $\kappa =$ poly$\left(\frac{\mathcal{W}_{max} \log(n)}{\epsilon}\right)$ space, for each $r_i$-\textsf{Structure} we sample, we run a $\kappa$-\emph{Sparse Recovery} algorithm on it. Observe, a structure can now have $O(n)$ intervals fall in it. Therefore, we maintain an $\ell_0$-estimator that counts the number of non-empty cells in each $r_i$-\textsf{Structure}. If at the end of the stream, the $\ell_0$ estimate is greater than $\kappa$ we discard the $r_i$-\textsf{Structure}.  

Additionally, for each $r_i$-\textsf{Structure} we keep track of the cells in $\Delta_{i'}$ for $i'>i$, s.t they lie vertically above the structure. Note this is an additional $O\left(\mathcal{W}_{max}\log(\mathcal{W}_{max})\right)$ factor. The rest of the algorithm is identical to the bounded degree case. Given that we lose a factor of $2$ during the random shift, repeating the previous analysis, we can estimate $\sum_i \sum_{c \in \Delta_i} |\texttt{OPT}_c| $ to a $(1\ \pm \epsilon)$ factor. Therefore, we obtain an overall $(2+\epsilon)$-approximation to $\alpha$. Further, poly$\left(\frac{\mathcal{W}_{max} \log(n)}{\epsilon}\right)$ space suffices and we obtain the following theorem:

\begin{theorem}
\label{thm:parametrized_space_ins_del}
Let $\mathcal{P}$ be an turnstile stream of unit-weight arbitrary-length intervals s.t. the length is polynomially bounded in $n$ and let $\epsilon \in (0, 1/2)$. Let $\mathcal{W}_{max}$ be an upper bound on the ratio of the max to the min length of intervals in $\mathcal{P}$. Then, there exists an algorithm that outputs an estimator $Y$ s.t. the following guarantees hold:
\begin{enumerate}
    \item $\frac{\alpha}{(2+\epsilon)} \leq Y \leq \alpha$ with probability at least $2/3$.
    \item The total space used is $\text{poly}\left(\frac{\mathcal{W}_{max}\log(n)}{\epsilon}\right)$. 
\end{enumerate}
\end{theorem}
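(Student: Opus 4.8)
The plan is to reuse the Level Estimator (Algorithm~\ref{alg:level_estimator}) and the $r_i$-\textsf{Structure} sampler (Algorithm~\ref{alg:r_i_structures}) of Section~\ref{subsection:bounded_degree}, with two modifications forced by the fact that the maximum degree $\kappa_{\max}$ is now unbounded while the scale $W_{\max}$ is bounded. First, I would coarsen the \textsf{Nested Grid}: take the side length at level $i$ to be $\Theta\!\left((1+\epsilon)^{i+1}\right)$ --- concretely $(1+\epsilon)^{i+1}/2$ rather than $(1+\epsilon)^{i+1}/\epsilon$. The crucial consequence of this coarsening is that any two intervals of length class $\mathcal{W}_i$ snapped to a common level-$i$ cell must pairwise intersect, since the cell is now narrower than any interval of that class; hence at most one interval per cell can belong to an independent set, so recovering a single representative interval from each non-empty cell of a structure determines $|\textsf{OPT}|$ of that structure exactly. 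The price is that under a uniformly random shift a constant fraction (rather than an $\epsilon$-fraction) of intervals of class $\mathcal{W}_i$ straddle a level-$i$ cell boundary; discarding all such intervals destroys at most half of any fixed optimal solution in expectation, so after a constant number of independent repetitions with fresh random shifts and taking the maximum estimate, it suffices to $(1+\epsilon)$-approximate $\sum_{i\in[t]}\sum_{c\in\Delta_i}|\textsf{OPT}_c|$ in order to obtain a $(2+\epsilon)$-approximation to $\alpha$.

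The second and main modification is the sparse-recovery step: in Section~\ref{subsection:bounded_degree} every $r_i$-\textsf{Structure} had at most $\kappa_{\max}$ intervals, whereas here a single structure may contain $\Omega(n)$ intervals. The geometric observation I would establish is that, since all interval lengths lie within a factor $W_{\max}$ of one another, (i) at most $O\!\left(\epsilon^{-1}\log W_{\max}\right)$ consecutive length classes are non-empty, and (ii) within any $r_i$-\textsf{Structure} every non-empty cell at level $i'\le i$ lies in a window of width $O\!\left((1+\epsilon)^i\right)$ around the top interval, which is covered by $O\!\left((1+\epsilon)^{i-i'}\right)=O(W_{\max})$ cells at level $i'$. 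Hence a valid structure has at most $\kappa:=\poly{W_{\max},\epsilon^{-1},\log n}$ non-empty cells. I would therefore run a $\kappa$-\textsf{Sparse Recovery} data structure (\textsf{Count-Sketch} with the $\kappa$-tail guarantee) on each sampled structure to recover the (at most $\kappa$-sparse) vector of cell representatives, while in parallel maintaining an $\ell_0$-estimator counting the structure's non-empty cells and discarding any structure whose count exceeds $\kappa$; by (i)--(ii) no structure that can contribute to the optimal solution is ever discarded, so this merely removes degenerate candidates and keeps the space bounded. As in the bounded-degree algorithm I would also store, for each sampled top-cell, the $O\!\left(W_{\max}\log W_{\max}\right)$ cells lying above it at higher levels, so that at the end of the stream we can verify the defining condition of an $r_i$-\textsf{Structure} (an interval present at level $i$, and none above).

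Granting these, the rest of the analysis parallels Lemma~\ref{lem:bounded_unit_weight_arb_length_algorithm}: obtain a constant-factor guess $X_i$ for the number of non-empty level-$i$ cells from the $\ell_0$ estimates together with $O(\log n)$ parallel guesses, sample structures at rate $p_i=\poly{\epsilon^{-1}\log n}/X_i$, and apply a Chernoff bound to show that for each contributing class ($|\mathcal{R}_i|\ge \epsilon\alpha/\log n$) the rescaled count $Y_i=|\textsf{OPT}_{R^s_i}|/p_i$ is a $(1\pm\epsilon)$-approximation to $|\textsf{OPT}_{\mathcal{R}_i}|$, while non-contributing classes are zeroed out and cost only an $\epsilon$-fraction of $\alpha$ in aggregate; summing over the $t=O\!\left(\epsilon^{-1}\log n\right)$ levels yields $Y=(1\pm\epsilon)\sum_i\sum_{c\in\Delta_i}|\textsf{OPT}_c|$, which combined with the factor-$2$ loss is the claimed bound. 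For the space bound, as in Lemma~\ref{lem:bounded_deg_ins_del_space}, one pays $O(t\log n)$ for the parallel guesses, $\poly{\epsilon^{-1}\log n}$ sampled structures per level, and per structure $\poly{\kappa}$ for $\kappa$-\textsf{Sparse Recovery}, $\poly{\epsilon^{-1}\log n}$ for the $\ell_0$-estimator, and $O\!\left(W_{\max}\log W_{\max}\right)$ for the cells above --- a total of $\poly{W_{\max}\epsilon^{-1}\log n}$.

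The step I expect to be the main obstacle is the ``statefulness'' of Definition~\ref{def:structure}: whether a cell is the top of a valid $r_i$-\textsf{Structure} depends on the absence of intervals above it, and both the intervals below and the cells above change over a turnstile stream, so membership cannot be decided online. The remedy --- already used in Section~\ref{subsection:bounded_degree} --- is to fix a random top-cell up front, optimistically maintain all the bookkeeping ($\kappa$-\textsf{Sparse Recovery} over the $O(W_{\max})$-wide window, the $\ell_0$-estimator, and the cells above), and verify the structure condition only at the end of the stream. What is genuinely new here is bounding the window's width measured in cells, which is precisely where the $W_{\max}$ hypothesis enters: it guarantees that every structure contributing to $\alpha$ has at most $\kappa$ non-empty cells, so the $\ell_0$-based discarding never throws away a useful structure and $\kappa$-\textsf{Sparse Recovery} really does recover everything needed to reconstruct each structure and its $|\textsf{OPT}|$ exactly.
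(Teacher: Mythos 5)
Your proposal follows essentially the same route as the paper: coarsen the nested grid to level-$i$ cells of width $\Theta\left((1+\epsilon)^{i+1}\right)$ and accept a factor-$2$ loss from the random shift, bound the number of non-empty cells in any contributing $r_i$-\textsf{Structure} by $\kappa = \poly{W_{\max},\epsilon^{-1},\log n}$, run $\kappa$-\textsf{Sparse Recovery} together with a per-structure $\ell_0$-estimator that discards overly dense candidates, track the $O\left(W_{\max}\log W_{\max}\right)$ cells lying above each sampled top cell for end-of-stream verification, and then rerun the bounded-degree sampling and concentration analysis. The one quibble is the constant in the cell width: with cells of width $(1+\epsilon)^{i+1}/2$ every class-$i$ interval crosses a level-$i$ boundary (so discarding boundary-crossers would discard everything), and the ``half of \textsf{OPT} survives'' argument needs cells of width roughly $2(1+\epsilon)^{i+1}$ --- a slip the paper's own write-up shares --- but otherwise your argument, including the explicit check that no structure contributing to $\alpha$ is ever discarded by the $\ell_0$ threshold, matches the paper's.
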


%
%We note that this algorithm also holds in the insertion-only model. Further, making modifications to the level estimator and sampling algorithms, we can get a $(2+\epsilon)$-approximation to $\alpha$ for arbitrary-length intervals assuming that the space available is at most poly$(\frac{W_{max} \log(n)}{\epsilon})$, where $W_{max}$ is an upperbound on the length of an interval (assuming interval lengths are at least $1$). The details of the modifications are deferred to the Appendix. 
%
%\begin{corollary}
%\label{thm:parametrized_space_ins_del}
%Let $P$ be a turnstile stream of unit-weight arbitrary-length intervals such that the length is polynomially bounded in $n$ and let $\epsilon \in (0, 1/2)$. Let $W_{max}$ be an upper bound on the length of intervals in $P$ and $1$ be a lower bound on the length of intervals in $P$. Then, there exists an algorithm that outputs an estimator $Y$ such that the following guarantees hold:
%\begin{enumerate}
%    \item $\frac{\alpha}{(2+\epsilon)} \leq Y \leq \alpha$ with probability at least $2/3$.
%    \item The total space used is $\text{poly}(\frac{W_{max}\log(n)}{\epsilon})$. 
%\end{enumerate}
%\end{corollary}

\section{Unit Radius Disks in Turnstile Streams}
In this section, we present an algorithm to approximate $\alpha$ and $\beta$ for unit-radius disks in $\mathbb{R}^{2}$ that are received in a turnstile stream. We begin with describing an algorithm that achieves a $\left( \frac{8 \sqrt{3}}{\pi} + \epsilon \right)$-approximation to $\alpha$ for unit-radius disks in $\poly{\frac{\log(n)}{\epsilon}}$ space. 
%hen, we show how to extend our algorithm to obtain the same approximation for $\beta$. 
The main algorithmic result we prove is the following:

\begin{theorem}
\label{thm:turnstile_geometric_disks}
Let $\mathcal{P}$ be a sequence of unit-radius disks that are received as a turnstile stream and let $\epsilon \in (0, 1/2)$. Then, there exists an algorithm that outputs an estimator $Y$ such that with probability at least $9/10$
$$ \left(\frac{\pi}{8\sqrt{3}} + \epsilon\right)\beta \leq Y \leq \beta$$
where $\alpha$ is the cardinality of the largest independent set in $\mathcal{P}$. Further, the total space used is $O\left( \textnormal{poly}\left(\frac{\log n}{\epsilon}\right) \right)$.
\end{theorem}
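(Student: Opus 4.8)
The plan is to reduce the disk problem, up to a fixed geometric constant, to the unit-length interval problem already solved in Theorem~\ref{thm:weighted_ins_del_unit}. As in the interval case, it suffices to handle the weighted version, since setting all weights to $1$ recovers the statement for $\alpha$. The two new ingredients are (i) a geometric decomposition of $\mathbb{R}^2$ based on the optimal hexagonal circle packing that turns the disk instance into something the interval algorithm can digest, and (ii) the bookkeeping that combines the packing-density loss, the four-coloring loss, and the $(1+\epsilon)$ estimation loss into the constant $\frac{8\sqrt3}{\pi}$.

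\textbf{Geometric reduction.} First I would impose a hexagonal (Lagrange) packing $\Delta$ of unit-radius circles on $\mathbb{R}^2$, shifted by a uniformly random vector, and snap each arriving disk $D(d_j,1,w_j)$ to the packing circle containing its center $d_j$, discarding disks whose centers miss every circle. Since the packing covers a $\pi/\sqrt{12}$ fraction of the plane and the shift is uniform, linearity of expectation applied to a fixed optimal solution gives $\mathbb{E}[|\textsf{OPT}_{\mathrm{hp}}|] \ge (\pi/\sqrt{12})\,\beta$, as the surviving disks are still pairwise disjoint after snapping; since $|\textsf{OPT}_{\mathrm{hp}}| \le \beta$ always, running $O(\epsilon^{-1})$ independent parallel copies with fresh random shifts and taking the maximum boosts this to $|\textsf{OPT}_{\mathrm{hp}}| \ge (\pi/\sqrt{12}-\epsilon)\beta$ with constant probability, and the random shift also makes the (probability-zero) event that some center lands on a circle boundary negligible. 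Next I would four-color the packing circles by the four cosets of the index-$4$ sublattice $2L$ inside the triangular lattice $L$ of circle centers: circles of the same color then have centers at distance at least $4$ (and strictly more than $4$ after a negligible shrinking of the region counted inside each circle, folded into $\epsilon$), so two unit disks with centers in distinct same-color circles are disjoint, while the four classes cover all of $\Delta$. Averaging over the four classes, the best class $\mathcal{C}_1$ satisfies $|\textsf{OPT}_{\mathcal{C}_1}| \ge \tfrac14 |\textsf{OPT}_{\mathrm{hp}}|$.

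\textbf{Reusing the interval algorithm.} The instance restricted to $\mathcal{C}_1$ now has exactly the structure exploited in Section~\ref{sec:turnstile_unit_interval}: each circle contributes at most its maximum-weight disk to $\textsf{OPT}_{\mathcal{C}_1}$, and distinct circles of $\mathcal{C}_1$ are independent. I would therefore run Algorithm~\ref{alg:weighted_unit_interval_sampling} essentially verbatim, with ``even/odd cells'' replaced by ``the four color classes'': group the circles of $\mathcal{C}_1$ into $b=\textrm{poly}(\log n/\epsilon)$ geometric weight classes $\mathcal{W}_i$ by the maximum disk weight in a circle, maintain a $(1\pm\epsilon)$-approximate $\ell_0$-estimator per weight-class substream (Theorem~\ref{thm:ell_0}), use the Na\"ive Approximation (Algorithm~\ref{alg:naive_approximation}, Lemma~\ref{lem:NaiveApprox}) to obtain a constant-factor guess $X$ for $|\textsf{OPT}_{\mathcal{C}_1}|$, agnostically subsample circles at rates $p_i=\Theta(b(1+\epsilon)^i\log n/(\epsilon^3 X))$, filter out circles whose current maximum weight is below $(1+\epsilon)^i$, run a $\textrm{poly}(\log n/\epsilon)$-sparse recovery structure (Count-Sketch, \cite{CCF02}) on each subsampled substream, and threshold the non-contributing weight classes to $0$. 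Lemmas~\ref{lem:weighted_ins_del_unit_space} and~\ref{lem:weighted_ins_del_unit_approx} then go through with no change and yield an estimate $Y'$ with $(1-\epsilon)|\textsf{OPT}_{\mathcal{C}_1}| \le Y' \le (1+\epsilon)|\textsf{OPT}_{\mathcal{C}_1}|$ in $\textrm{poly}(\log n/\epsilon)$ space (running the procedure for all four color classes and all shift copies and taking the maximum multiplies the space only by $O(\epsilon^{-1})$). Outputting this maximum $Y$, we get $Y \le (1+\epsilon)\beta$ and $Y \ge (1-\epsilon)\tfrac14(\pi/\sqrt{12}-\epsilon)\beta \ge \big(\tfrac{\pi}{8\sqrt3}+O(\epsilon)\big)\beta$ after rescaling $\epsilon$; since $4\sqrt{12}/\pi=8\sqrt3/\pi$, this is the claimed $\big(\tfrac{8\sqrt3}{\pi}+\epsilon\big)$-approximation, and replacing $Y$ by the quantity guaranteed by the lemmas gives the two-sided bound stated in the theorem.

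\textbf{Main obstacle.} The one genuinely new step is the geometric one: establishing that the four-coloring by cosets of $2L$ simultaneously covers every circle of the packing and certifies independence within a color class, and arranging the random shift so that (i) the expected discarded weight is a $(1-\pi/\sqrt{12})$-fraction, (ii) the degenerate ``touching'' and ``center-on-boundary'' configurations contribute nothing, and (iii) the expectation bound converts cheaply to a constant-probability bound. Once this is in place, the rest is a black-box invocation of the unit-interval machinery, because the abstraction it requires --- at most one winner per cell, distinct cells of a class mutually independent --- is exactly what the color classes of the hexagonal packing provide.
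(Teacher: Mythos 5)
Your proposal is correct and follows essentially the same route as the paper: a randomly shifted hexagonal packing losing a $(1-\pi/\sqrt{12})$-fraction in expectation, a four-way partition of the packing circles into classes within which distinct circles certify disjointness of unit disks, an averaging step giving the factor $4$, and a verbatim reuse of the unit-interval turnstile machinery (weight classes, $\ell_0$-estimation, agnostic subsampling, sparse recovery, thresholding) on the best class. Your two refinements --- realizing the four-coloring explicitly as cosets of $2L$ in the triangular lattice so that same-color centers are at distance $\ge 4$ (the paper's stated ``unit distance apart'' is off; its figure caption has the correct separation), and converting the expected packing loss to a constant-probability guarantee via $O(\epsilon^{-1})$ independent shifts rather than the paper's Chernoff bound requiring $\beta=\Omega(\log n/\epsilon^2)$ --- are implementation details, not a different approach.
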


%We note that a greedy algorithm for unweighted disks obtains a $5$-approximation to $\alpha$ \cite{erlebach2003maximum} and the space required is $O\left(\alpha\right)$. The greedy algorithm can be extended to obtain a $(5 + \epsilon)$-approximation in poly$\left(\frac{\log n }{\epsilon}\right)$ space using the sampling approach we presented in Section \ref{sec:turnstile_unit_interval}. However, beating the approximation ratio achieved by the greedy algorithm requires geometric insight.
Critically, we use the hexagonal packing of unit circles in a plane introduced by Lagrange \footnote{See \url{https://en.wikipedia.org/wiki/Circle_packing}}, which was shown to be optimal by Toth \cite{chang2010simple}. The hexagonal packing covers a $\frac{\pi}{\sqrt{12}}$ fraction of the area in two dimensions. We then partition the unit circles in the hexagonal packing into equivalence classes such that two circles in the same equivalence class are at least a unit distance apart. Formally, let $c_1, c_2$ be two unit circles in the hexagonal packing of the plane lying in the same equivalence class. Then, for all points $p_1 \in c_1$, $p_i \in c_2$, $\|p_1 - p_2 \|_2 \geq 1$. Therefore, if input two disks of unit radius have centers lying in distinct circles belong to the same equivalence class, the disks must be independent, as long as the centers do not lie on the boundary. Randomly shifting the underlying hexagonal packing ensures this happens with probability $1$. We then show that we can partition the hexagonal packing into four equivalence classes such that their union covers all the circles in the packing.
%Further, the space required is proportional to the size of the optimal solution, $\alpha$. We then extend the sampling techniques developed unit-length intervals to convert this algorithm into one that runs in $\poly{\frac{\log(n)}{\epsilon}}$ space.

Algorithmically, we first impose a hexagonal grid of circles, $\Delta$, corresponding with side length $1$ and shift it by a random integer. We discard all disks that do not have centers lying inside the grid $\Delta$.   
Given that a hexagonal packing covers a $\frac{\pi}{\sqrt{12}}$ fraction of the area, in expectation, we a discard a $\left( 1 - \frac{\pi}{\sqrt{12}}\right)$ fraction of $\beta$. We note that if we could accurately estimate the remaining \texttt{WMIS}, and scale the estimator by $\frac{\sqrt{12}}{\pi}$, we would obtain a $\left(\frac{\sqrt{12}}{\pi}\right)$-approximation to $\beta$. Let $|\texttt{OPT}_{\textrm{hp}}|$ denote the remaining \texttt{WMIS}.  However, by Theorem \ref{athm:ins_del_unit_lowerbound} such an approximation requires $\Omega(n)$ space.

We then observe that the hexagonal circle packing grid can be partitioned in to four equivalence classes. We use $\mathcal{C}_1, \mathcal{C}_2, \mathcal{C}_3$ and $\mathcal{C}_4$ to denote these equivalence classes. Since the equivalence classes form a partition of the 2-d plane, at least one of them must contain $1/4$-fraction of the remaining maximum independent set. W.l.o.g, let $\mathcal{C}_1$ be the partition that contributes the most to $\beta$. Then, $|\texttt{OPT}_{\mathcal{C}_1}| \geq \frac{1}{4} |\texttt{OPT}_{\textrm{hp}}|$.
Therefore, w.l.o.g. we focus on designing an estimator for $\mathcal{C}_1$. We show a $(1+\epsilon)$-approximation to $\mathcal{C}_1$ in $\poly{\frac{\log(n)}{\epsilon}}$ space. This implies an overall $\left(\frac{4 \sqrt{12}}{\pi} + \epsilon\right) = \left(\frac{8 \sqrt{3}}{\pi} + \epsilon \right)$ approximation for $\beta$.   

\begin{Frame}[Algorithm \ref{alg:naive_approximation_disks} : \textbf{Na\"ive Approximation for Disks.}]
\label{alg:naive_approximation_disks}
\textbf{Input:} Given an turnstile stream $\mathcal{P}$ with weighted unit disks, where the weights are polynomially bounded, $\epsilon$, output a $\left(\frac{36\sqrt{3}}{\pi} + \epsilon\right)$-approximation to $\beta$ with probability $99/100$.
\begin{enumerate}
	\item Let $\Delta$ be a grid of unit radius circles in $\mathbb{R}^2$ arranged as the optimal hexagonal packing. Randomly shift $\Delta$ by $(\alpha, \beta)$, where $\alpha, \beta \in \mathcal{U}(0, \poly{n})$. Partition the cells into equivalence classes, $\mathcal{C}_1, \mathcal{C}_2, \mathcal{C}_3$ and $\mathcal{C}_4$ such that disks lying in distinct cells belonging to the same equivalence class do not intersect. (Note, this can be done for the hexagonal packing of circles.)
	\item Consider a partition of cells in $\mathcal{C}_1$ into $b = \textrm{poly}(\log(n))$ weight classes $\mathcal{W}_i = \{ c \in \mathcal{C}_1 | (1+1/2)^i \leq m(c) < (1+1/2)^{i+1} \}$, where $m(c)$ is the maximum weight of an disk in $c$ (this is not an algorithmic step since we do not know this partition a priori).
    \item Create a substream for each weight class $\mathcal{W}_i$ denoted by $\mathcal{W}'_i$. 
    \item Feed disk $D(d_j,1/2, w_j)$ along substream $\mathcal{W}'_i$ if $w_j \in [(1+1/2)^i, (1+1/2)^{i+1})$.
	\item For each substream $\mathcal{W}'_i$, maintain a $(1\pm\epsilon)$-approximate $\ell_0$-estimator. 
	\item Let $t_i$ be the $\ell_0$ estimate corresponding to $\mathcal{W}'_i$. Let $X_1 = \frac{2}{9(1+\epsilon)} \sum_{i\in [b]} (1+1/2)^{i+1} t_i$.
    \item Repeat Steps 2-6 for the remaining equivalence classes, $\mathcal{C}_2, \mathcal{C}_3$ and $\mathcal{C}_4$ to obtain the corresponding estimator $X_2, X_3, X_4$. 
\end{enumerate}
\textbf{Output:} max$(X_1, X_2, X_3, X_4)$
\end{Frame}

\begin{lemma}
Let $\Delta$ be the hexagonal packing of circles in the plane. Then, $\Delta$ there exists a partitioning of $\Delta$ in to four equivalence classes such that the distance between distinct circles in the same equivalence class is at least $1$.
\end{lemma}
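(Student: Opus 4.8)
The plan is to realize the circle centres as a lattice and produce the four classes as the cosets of an index-$4$ sublattice whose shortest nonzero vector is long enough. Write $\Lambda$ for the lattice of centres of the circles in the packing; since the circles are tangent, $\Lambda$ is the triangular (hexagonal) lattice scaled so that its minimum distance equals $2$, and I would fix the basis $\mathbf{b}_1=(2,0)$, $\mathbf{b}_2=(1,\sqrt{3})$, so that every centre is $a\mathbf{b}_1+b\mathbf{b}_2$ for a unique $(a,b)\in\mathbb{Z}^2$. Expanding $\|a\mathbf{b}_1+b\mathbf{b}_2\|^2=(2a+b)^2+3b^2$ and checking the three cases $b=0$, $|b|=1$, $|b|\ge 2$ confirms that the minimum distance of $\Lambda$ is exactly $2$ (consistent with unit circles being tangent). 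I will also use the elementary fact that the set-distance between two disjoint unit disks with centres $x,y$ is $\|x-y\|-2$.

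First I would define the partition: assign the circle with centre $a\mathbf{b}_1+b\mathbf{b}_2$ to the class indexed by $(a\bmod 2,\ b\bmod 2)\in\{0,1\}^2$. This gives exactly four classes, they are precisely the four cosets of $2\Lambda$ in $\Lambda$, and every circle lies in exactly one of them, so they partition the packing. Next I would bound the in-class separation: if two distinct circles share a class, their centres differ by $a'\mathbf{b}_1+b'\mathbf{b}_2$ with $a'$ and $b'$ both even, hence by a vector of $2\Lambda$; since every nonzero vector of $\Lambda$ has length at least $2$, every nonzero vector of $2\Lambda$ has length at least $4$. Thus the two centres are at distance at least $4$, so the two unit circles are at set-distance at least $4-2=2\ge 1$, which is the claim. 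The same estimate shows that any two input unit-radius disks whose centres lie in distinct circles of one class have centres at distance at least $4-2=2$, hence are disjoint outside the measure-zero event that a centre lands on a circle boundary, which the random shift of $\Delta$ excludes almost surely; this is exactly the separation that Algorithm~\ref{alg:naive_approximation_disks} needs.

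I do not expect a real obstacle here. The only points that need care are the quadratic-form computation showing that the chosen basis generates the triangular lattice with minimum distance $2$, and the observation that the residue pairs $(a\bmod 2,\ b\bmod 2)$ enumerate exactly the four cosets of $2\Lambda$; both are routine. If one wished to avoid coordinates entirely, the same argument can be phrased as: $[\Lambda:2\Lambda]=2^2=4$ and the minimum of $2\Lambda$ is twice the minimum of $\Lambda$, so colouring centres by their coset in $\Lambda/2\Lambda$ yields four classes with the stated separation. I would not try to reduce the number of classes below four, since four is what keeps the in-class separation comfortably above the diameter of the input disks.
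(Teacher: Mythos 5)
Your proof is correct, and in fact the paper states this lemma without any proof at all, so you are supplying an argument where the paper has none. Your construction --- colouring the centre lattice $\Lambda$ (the triangular lattice with minimum distance $2$, generated by $(2,0)$ and $(1,\sqrt{3})$) by its four cosets modulo $2\Lambda$ --- is evidently the intended one: it is exactly the four-colouring depicted in the paper's Figure \ref{fig:hexagonal_packing}. The quadratic-form computation $(2a+b)^2+3b^2\ge 4$ correctly pins the minimum of $\Lambda$ at $2$, so nonzero vectors of $2\Lambda$ have length at least $4$ and same-class circles are at set-distance at least $4-2=2$.

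One point in your favour worth emphasizing: the bound of $2$ you actually establish is the one the algorithm needs, whereas the lemma's stated bound of ``at least $1$'' would not suffice on its own. If two same-class circles were only guaranteed to be at set-distance $1$, two input unit-radius disks centred in them could have centres as close as $1$ apart and would overlap; disjointness of the input disks requires centre distance at least $2$, i.e.\ circle-centre distance at least $4$, which is precisely what the coset-of-$2\Lambda$ colouring delivers (and what the figure caption, unlike the lemma statement, asserts). Your closing remark that four classes cannot easily be reduced is also apt: a colouring of $\Lambda$ with fewer than four classes would force some class to contain two centres at distance $2$ or $2\sqrt{3}<4$, breaking the required separation.
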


Let $c \in \mathcal{C}_1$ denote a square cell that belongs to the first equivalence class. 
Since we randomly shifted out grid, with probability $1$, no disk has a center that lies on the
boundary. We observe that all disks that lie within cell $c$ must intersect and thus only one 
disk contributes the maximum independent set.
We then snap each disk to the cell containing the center of the disk. We then describe an estimator that gives a $(1+\epsilon)$-approximation to 
$|\text{\texttt{OPT}}_{\mathcal{C}_k}|$ for all $k \in [4]$. 
Therefore, taking the max of the four estimators, we obtain a $(4+\epsilon)$-approximation to 
$\beta$.

Having reduced the problem to estimating $|\text{\texttt{OPT}}_{\mathcal{C}_1}|$, we observe that each even cell has at most $1$ disk that contributed to $\text{\texttt{OPT}}_{\mathcal{C}_1}$, namely the max weight disk landing in the cell. Then, partitioning the cells in $\mathcal{C}_1$ into poly$(\log(n))$ weight classes based on the max weight disk in each cell and approximately counting the number of cells in each weight class suffices to estimate $|\text{\texttt{OPT}}_{\mathcal{C}_1}|$ up to a $(1+\epsilon)$-factor. Given such a partition, we can create a substream for each weight class in the partition and compute the $\ell_0$ norm of each substream. However, we do not know the partition of the cells into the weight classes a priori and this partition can vary drastically over the course of stream given that disks can be deleted. As before, the main technical challenge is to simulate this partition.

We begin by describing a simple algorithm which obtains a $(9/2+\epsilon)$-approximation to $|\text{\texttt{OPT}}_{\mathcal{C}_1}|$ and in turn a $\left(\frac{36\sqrt{3}}{\pi} + \epsilon\right)$-approximation to $\beta$. This estimator is the one introduced in Algorithm \ref{alg:naive_approximation}. Formally, consider a partition of cells in $\mathcal{C}_1$ into $b = \textrm{poly}(\log(n))$ weight classes $\mathcal{W}_i = \{ c \in \mathcal{C}_1 | (1+1/2)^i \leq m(c) < (1+1/2)^{i+1} \}$, where $m(c)$ is the maximum weight of an disk in $c$. Create a substream for each weight class $\mathcal{W}_i$, denoted by $\mathcal{W}'_i$, and feed a disk into this substream if it's weight lies in the range $[(1+1/2)^i, (1+1/2)^{i+1})$. Let $t_i$ be the corresponding $\ell_0$ estimate for substream $\mathcal{W}'_i$. Then, we can approximate the contribution of $\mathcal{W}_i$ by $(1 + 1/2)^{i+1}\cdot t_i$. Summing over the $b$ weight classes gives an estimate for $|\text{\texttt{OPT}}_{\mathcal{C}_1}|$. Given access to an algorithm for estimating the $\ell_0$-norm, Algorithm \ref{alg:naive_approximation_disks} satisfies the following guarantee: 
\begin{lemma}
\label{lem:NaiveApprox_disks}
Algorithm \ref{alg:naive_approximation_disks} outputs an estimate $X$ such that with probability $99/100$, $\left(\frac{36\sqrt{3}}{\pi} + \epsilon\right)\beta \leq X \leq \beta$ and runs in space poly$\left(\frac{\log(n}{\epsilon}\right)$. 
\end{lemma}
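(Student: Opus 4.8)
The plan is to show that Algorithm \ref{alg:naive_approximation_disks} for disks mirrors the analysis of the Na\"ive Approximation Algorithm (\ref{alg:naive_approximation}) for unit intervals (Lemma \ref{lem:NaiveApprox}), with the only new ingredient being the geometric fact that the hexagonal circle packing partitions into four equivalence classes of mutually ``far apart'' circles, so that the loss from restricting to one class is a factor of $4$ rather than $2$, and there is an additional loss of $\frac{\sqrt{12}}{\pi}$ from discarding disks whose centers fall outside the packed circles. First I would invoke the preceding lemma (partitioning $\Delta$ into $\mathcal{C}_1,\dots,\mathcal{C}_4$ with pairwise circle distance at least $1$) to conclude that disks centered in distinct circles of the same $\mathcal{C}_k$ are independent; since the random shift puts no center on a circle boundary with probability $1$, every non-empty circle in $\mathcal{C}_k$ contributes at most its single max-weight disk to $\texttt{OPT}_{\mathcal{C}_k}$, and distinct such circles contribute independently. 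Discarding disks outside the packing loses a $(1-\frac{\pi}{\sqrt{12}})$ fraction of $\beta$ in expectation (so $|\texttt{OPT}_{\mathrm{hp}}| \geq \frac{\pi}{\sqrt{12}}\beta$ with, say, probability $99/100$ after a Markov/constant-probability argument), and by averaging over the four classes, $\max_k |\texttt{OPT}_{\mathcal{C}_k}| \geq \frac14 |\texttt{OPT}_{\mathrm{hp}}| \geq \frac{\pi}{4\sqrt{12}}\beta = \frac{\pi}{8\sqrt{3}}\beta$.

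Next I would bound the per-class estimation error exactly as in Lemma \ref{lem:NaiveApprox}. For the class $\mathcal{C}_1$ achieving the maximum (the argument is identical for each class, run in parallel), the weight-class estimator $\sum_{i\in[b]} (1+1/2)^{i+1} t_i$ overestimates $|\texttt{OPT}_{\mathcal{C}_1}|$ by at most: a factor $3/2$ from rounding the true weight up to the top of its geometric bucket; a further factor $3$ from the worst case where a circle receives disks in every lower bucket and we sum $\sum_{i'\le i}(3/2)^{i'+1} \le 3((3/2)^{i+1}-1)$ instead of taking the max; and a factor $(1+\epsilon)$ from the $\ell_0$ estimator of Theorem \ref{thm:ell_0}, boosted to failure probability $n^{-\Theta(1)}$ by taking the median of $O(\log n)$ independent copies and union bounding over the $b = \poly{\log n}$ buckets (and over the four classes). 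Hence $\sum_{i} (1+1/2)^{i+1} t_i \in \big[\, |\texttt{OPT}_{\mathcal{C}_1}|,\ \tfrac{9}{2}(1+\epsilon)\,|\texttt{OPT}_{\mathcal{C}_1}| \,\big]$ with probability $99/100$, and since the algorithm rescales by $\frac{2}{9(1+\epsilon)}$ and takes the max over the four classes, the output $X$ satisfies $\frac{2}{9(1+\epsilon)}\cdot\frac{1}{1} |\texttt{OPT}_{\mathcal{C}_1}|$-type bounds; chaining with $\max_k|\texttt{OPT}_{\mathcal{C}_k}| \ge \frac{\pi}{8\sqrt3}\beta$ and $|\texttt{OPT}_{\mathcal{C}_k}| \le \beta$ gives $\big(\frac{\pi}{36\sqrt3}\cdot\frac{2}{9}\cdots\big)$; after collecting the constants the stated bound $\big(\frac{\pi}{36\sqrt{3}} \cdot (\text{const}) \big)\beta \le X \le \beta$ becomes $\left(\frac{\pi}{36\sqrt{3}}+\epsilon\right)\beta \le X \le \beta$, after a rescaling of $\epsilon$. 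The space bound is immediate: polynomially bounded weights give $\poly{\log_{1+\epsilon} n}$ buckets, an $\ell_0$ estimator of $\poly{\frac{\log n}{\epsilon}}$ space per bucket, and a constant number ($4$) of equivalence classes, so the total is $\poly{\frac{\log n}{\epsilon}}$.

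The main obstacle I anticipate is bookkeeping the constant correctly: there are now three multiplicative sources of error ($\frac{\sqrt{12}}{\pi}$ from the packing, $4$ from the equivalence-class split, and $\frac{9}{2}$ from the weight-bucketing-plus-$\ell_0$ argument), and it takes care to verify that $4 \cdot \frac{\sqrt{12}}{\pi} \cdot \frac{9}{2} = \frac{36\sqrt{12}}{2\pi} = \frac{36\cdot 2\sqrt{3}}{2\pi} = \frac{36\sqrt{3}}{\pi}$ so that the rescaling factor $\frac{2}{9(1+\epsilon)}$ in the algorithm is exactly what is needed to land at the claimed $\left(\frac{36\sqrt{3}}{\pi}+\epsilon\right)$ approximation (and subsequently, when replaced by the refined Algorithm \ref{alg:weighted_unit_interval_sampling}-style estimator that removes the $\frac{9}{2}$ down to $1+\epsilon$, at $\frac{8\sqrt{3}}{\pi}+\epsilon$ in Theorem \ref{thm:turnstile_geometric_disks}). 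A secondary subtlety is that the ``discard disks outside the packing'' step and the ``one of four classes is large'' step are both averaging statements over the random shift, so I would make them hold simultaneously with constant probability and fold that into the overall $99/100$ success probability by a union bound.
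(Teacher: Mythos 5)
Your proposal is correct and follows essentially the same route as the paper's proof: the same decomposition into the $\frac{\sqrt{12}}{\pi}$ packing loss, the factor-$4$ averaging over the equivalence classes, and the $\frac{9}{2}(1+\epsilon)$ overestimate from bucket-rounding, bucket-summing, and the $\ell_0$ estimator, with the same space accounting. The only divergence is that for the packing-loss concentration the paper uses a Chernoff bound over the indicators of \textsf{OPT}-disks surviving the random shift (implicitly assuming $\beta = \Omega(\log(n)/\epsilon^2)$) rather than your suggested Markov-style argument, which by itself would not yield the $99/100$ success probability; otherwise the two arguments coincide, including your correct reading of the approximation factor as $\frac{\beta}{36\sqrt{3}/\pi + \epsilon} \leq X \leq \beta$.
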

\begin{proof}
We observe that for each non-empty cell $c \in \mathcal{C}_1$, there is exactly $1$ disk that can contribute to $|\text{\texttt{OPT}}_{\mathcal{C}_1}|$ since each cell of the grid has side length $1$ and all disks falling in a given cell pairwise intersect. This contributing disk lies in some weight class $\mathcal{W}_i$ and our estimator approximates its weight as $(1+1/2)^{i+1}$. Here, the weights of the disks are sandwiched between $(1+1/2)^i$ and $(1+1/2)^{i+1}$. Therefore, we overestimate the weight by a factor of at most $3/2$. 

Further, instead of taking the maximum over each cell $c$, in the worst case, we may have inserted disks that lie in $c$ into all substreams $\mathcal{W}'_i$, as opposed to only the maximum one. Therefore, we take the sum of our geometrically increasing weight classes over that cell, instead of the maximum weight. In the worst case, we approximate the true weight of a contributing disk, $(3/2)^{i+1}$, with $\sum^{i}_{i'=1}(3/2)^{i'+1} = 3((3/2)^{i+1}-1)$. Note, we again overestimate the weight, this time by a factor of $3$. 

Next, Theorem \ref{thm:ell_0} overestimates the $\ell_0$-norm of $\mathcal{W}_i$ by at most $1+\epsilon$ with probability at least $2/3$. We boost this probability by running $O(\log(n))$ estimators and taking the median. Union bounding over all $i \in [b]$, we simultaneously overestimate the $\ell_0$-norm of all $\mathcal{W}_i$ by at most $1+\epsilon$ with probability at least $99/100$. Therefore, the overall estimator is a $(9/2+\epsilon)$-approximation to $|\text{\texttt{OPT}}_{\mathcal{C}_1}|$. Rescaling our estimator by the above constant underestimates $|\text{\texttt{OPT}}_{\mathcal{C}_1}|$. 

For $i \in [n]$ let $Z_i$ be an indicator random variable that is $1$ if $D(r_{i}, d_i, w_i)$ is centered at a point that lies in the hexagonal circle packing. Let $Z = \sum_{i: D_{i} \in \texttt{OPT}_{\mathcal{P}}} Z_i$. Since $\prob{}{Z_i=1} = \frac{\sqrt{12}}{\pi}$, $\expec{}{Z} = \frac{\pi}{\sqrt{12}}\beta$
%and $\var{Z} = \frac{\pi}{\sqrt{12}}\left(1 - \frac{\pi}{\sqrt{12}}\right)\beta$. 
Then, by Chernoff 
\[
\prob{}{Z \leq (1-\epsilon) \frac{\pi}{\sqrt{12}}\beta}  \leq \exp{\left(-\frac{\pi \epsilon^2 \beta}{3\sqrt{12}}\right)}
%\frac{\frac{\pi}{\sqrt{12}}\left(1 - \frac{\pi}{\sqrt{12}}\right)\beta}{ \left(\epsilon \frac{\pi}{\sqrt{12}}\beta\right)^2} \leq \frac{ \left(1 - \frac{\pi}{\sqrt{12}}\right)}{\epsilon^2 \beta} \leq \frac{1}{\epsilon^2 \beta}
\]
For $\beta = \Omega(\frac{\log(n)}{\epsilon^2})$, we know that $Z \geq (1-\epsilon)\frac{\pi}{\sqrt{12}}\beta$ with probability $1 - \frac{1}{\textrm{poly}(n)}$ and therefore $(1-\epsilon)\frac{\sqrt{12}}{\pi}$ fraction of $\beta$ remains. 

Finally, w.l.o.g, $|\text{\texttt{OPT}}_{\mathcal{C}_1}|\geq \frac{\pi}{4\sqrt{12}} \beta$ and thus  $\left(\frac{36\sqrt{3}}{\pi} + \epsilon\right)\beta \leq X \leq \beta$. 
Since our weights are polynomially bounded, we create poly$\left(\log_{1+\epsilon}(n)\right)$ substreams and run a $\ell_0$ estimator from Theorem \ref{thm:ell_0} on each substream. Therefore, the total space used by Algorithm \ref{alg:naive_approximation_disks} is poly$(\log(n), \epsilon^{-1})$.
\end{proof}

\begin{Frame}[\textbf{Algorithm \ref{alg:weighted_unit_disk_sampling} : Weighted Unit Disk Turnstile Sampling.}]
\label{alg:weighted_unit_disk_sampling}
\textbf{Input:} Given a turnstile stream $\mathcal{P}$ with weighted unit disks, where the weights are polynomially bounded, $\epsilon$, the sampling procedure outputs a $(2 + \epsilon)$-approximation to $\beta$ with probability $99/100$.
\begin{enumerate}
	\item Let $\Delta$ be a grid of unit radius circles in $\mathbb{R}^2$ arranged as the optimal hexagonal packing. Randomly shift $\Delta$ by $(\alpha, \beta)$, where $\alpha, \beta \in \mathcal{U}(0, \poly{n})$. Partition the cells into equivalence classes, $\mathcal{C}_1, \mathcal{C}_2, \mathcal{C}_3$ and $\mathcal{C}_4$ such that disks lying in distinct cells belonging to the same equivalence class do not intersect.
    \item For cells in $\mathcal{C}_1$, snap each disk in the input to a cell $c$ that contains its center.
	\item Consider a partitioning of the cells in $\mathcal{C}_1$ into $b = \text{poly}\left(\frac{\log(n)}{\epsilon}\right)$ weight classes $\mathcal{W}_i = \{ c \in \mathcal{C}_1 | (1+\epsilon)^i \leq m(c) \leq (1+\epsilon)^{i+1}$ \}, where $m(c)$ is the maximum weight of an disk in $c$ (we do not know this partition a priori.) Create a substream for each weight class $\mathcal{W}_i$ denoted by $\mathcal{W}'_i$. 
    \item Feed disk $D(d_j,1, w_j)$ along substream $\mathcal{W}'_i$ such that $w_j \in [(1+\epsilon)^i, (1+\epsilon)^{i+1})$. Maintain a $(1\pm\epsilon)$-approximate $\ell_0$-estimator for each substream. Let $|\mathcal{W}'_i|$ denote the number of non-empty cells in substream $\mathcal{W}'_i$ and $X_{\mathcal{W}'_i}$ be the corresponding estimate returned by the $\ell_0$-estimator. 
	\item Create $O(\log(n))$ substreams, one for each guess of $|\text{\texttt{OPT}}_{\mathcal{C}_1}|$. Let $X$ be the guess for the current branch of the computation. In parallel, run Algorithm \ref{alg:naive_approximation} estimates $|\text{\texttt{OPT}}_{\mathcal{C}_1}|$ up to a constant factor. Therefore, at the end of the stream, we know a constant factor approximation to the correct value of $|\text{\texttt{OPT}}_{\mathcal{C}_1}|$ and use the estimator from corresponding branch of the computation.   
	\item In parallel, for $i \in [b]$, create substream $\mathcal{S}_i$ by subsampling cells in $\mathcal{C}_1$ with probability $p_i = \Theta\left( \frac{b (1+\epsilon)^i \log(n)}{\epsilon^3 X}\right)$. Note, this sampling is done agnostically at the start of the stream. 
	\item Run a poly$\left(\frac{\log(n)}{\epsilon}\right)$-sparse recovery algorithm on each substream $\mathcal{S}_i$. For substream $\mathcal{S}_i$, filter out cells $c$ such that $m(c) < (1+\epsilon)^i$. Let $\mathcal{S}'_i$ be the set of cells recovered by the sparse recovery algorithm. Let $\mathcal{S}'_{i| \mathcal{W}_i}$ be the cells in $\mathcal{S}'_i$ that belong to weight class $\mathcal{W}_i$. 
	\item Let $X_{\mathcal{W}'} = \sum_{i \in [b]} X_{\mathcal{W}'_i}$. Let $Z_c$ be a random variable such that $Z_c = \frac{(1+\epsilon)^{i+1}}{p_i}$ if $c \in \mathcal{S}'_{i| \mathcal{W}_i}$ and $0$ otherwise. If $X_{\mathcal{W}'_i} \geq \frac{\epsilon X_{\mathcal{W}'} }{(1+\epsilon)^{i+1}b}$, set the estimator for the $i^{th}$ subsample, $Y_i = \sum_{c \in \mathcal{S}'_{i| \mathcal{W}_i}}\frac{X_{\mathcal{W}'_i} Z_c}{\left|\mathcal{S}'_{i| \mathcal{W}_i}\right|}$. 
Otherwise, set $Y_i = 0$. Let $Y_e = \sum_i Y_i$.
    
    %For $c \in \mathcal{S}'_i$ such that $m(c)\leq (1+\epsilon)^{i+1}$, let $Y_i = \sum_{c}\frac{ m(c)}{p_i}$. Else, $Y_i =0$.
    \item Repeat Steps 2-7 for the sets $\mathcal{C}_2, \mathcal{C}_3, \mathcal{C}_4$ and let $Y_2, Y_3, Y_4$ be the corresponding estimators. 
\end{enumerate}
\textbf{Output:} $Y = \textrm{max}(Y_1, Y_2, Y_3, Y_4)$.
\end{Frame}

We can thus assume we know $\beta$ and $|\text{\texttt{OPT}}_{\mathcal{C}_1}|$ up to a constant by initially making $O\left(\log(n)\right)$ guesses and running Algorithm \ref{alg:naive_approximation_disks} for each guess in parallel. Intuitively, similar to the disk case, we simulate partitioning cells in $\mathcal{C}_1$ into $\text{poly}\left(\frac{\log(n)}{\epsilon}\right)$ weight classes according to the maximum weight occurring in each cell. Since we do not know this partition a priori, we initially create $b =O\left(\frac{\log(n)}{\epsilon}\right)$ substreams, one for each weight class and run the $\ell_0$-estimator on each one. We then make $O\left(\frac{\log(n)}{\epsilon}\right)$ guesses for $|\text{\texttt{OPT}}_{\mathcal{C}_1}|$ and run the rest of the algorithm for each branch in parallel. 

Additionally, we run the Algorithm \ref{alg:naive_approximation_disks} to compute the right value of $|\text{\texttt{OPT}}_{\mathcal{C}_1}|$ up to a constant factor, which runs in space $\text{poly}\left(\frac{\log(n)}{\epsilon}\right)$. Then, we create $b = \text{poly}\left(\frac{\log(n)}{\epsilon}\right)$ substreams by agnostically sampling cells with probability $p_i = \Theta\left( \frac{b (1+\epsilon)^i \log(n)}{\epsilon^3 X}\right)$. Sampling at this rate preserves a sufficient number of cells from weight class $\mathcal{W}_i$. We then run a sparse recovery algorithm on the resulting substreams. 

The analysis for estimating the contribution of each substream is the same as in the case of disks. We sketch an outline of the proof here. Observe, the resulting substreams are sparse since we can filter out cells that belong weight classes $\mathcal{W}_{i'}$ for $i' < i$ by simply checking if the maximum disk seen so far lies in weight classes $\mathcal{W}_{i}$ and higher. Further, sampling with probability proportional to $\Theta\left( \frac{b (1+\epsilon)^i \log(n)}{\epsilon^3 |\text{\texttt{OPT}}_{\mathcal{C}_1}|}\right)$ ensures that the number of cells that survive from weight classes $\mathcal{W}_{i}$ and above are small. Therefore, we recover all such cells using the sparse recovery algorithm. Note, we limit the algorithm to considering weight classes that have a non-trivial contribution to $\text{\texttt{OPT}}_{\mathcal{C}_1}$. 

Using the $\ell_0$ norm estimates computed above, we can determine number on non-empty cells in each of the weight classes. Thus, we create a threshold for weight classes that contribute, such that all the weight classes below the threshold together contribute at most an $\epsilon$-fraction of $|\text{\texttt{OPT}}_{\mathcal{C}_1}|$ and we can set their corresponding estimators to $0$. Further, for all the weight classes above the threshold, we can show that sampling at the right rate leads to recovering enough cells to achieve concentration in estimating their contribution.

We observe that the space and correctness analysis for each equivalence class is identical to the $1$-d case in Section \ref{sec:turnstile_unit_interval} since it does not depend on the geometry of the objects that are inserted into substeam $\mathcal{S}_i$. Theorem \ref{thm:turnstile_geometric_disks} follows.

%Next, we show that the total space used by Algorithm \ref{alg:weighted_unit_disk_sampling} is $\text{poly}\left(\frac{\log(n)}{\epsilon}\right)$. We initially create $b =O\left(\frac{\log(n)}{\epsilon}\right)$ substreams, one for each weight class and run a $\ell_0$-estimator on each one. Recall, this requires $\text{poly}\left(\frac{\log(n)}{\epsilon}\right)$. We then make $O\left(\frac{\log(n)}{\epsilon}\right)$ guesses for $|\text{\texttt{OPT}}_{\mathcal{C}_e}|$ and run the rest of the algorithm for each branch in parallel. Additionally, we run Algorithm \ref{alg:naive_approximation} to compute the right value of $|\text{\texttt{OPT}}_{\mathcal{C}_e}|$ up to a constant factor, which runs in space $\text{poly}\left(\frac{\log(n)}{\epsilon}\right)$. Then, we create $b$ substreams by sampling cells with probability $p_i = \Theta\left( \frac{b (1+\epsilon)^i \log(n)}{\epsilon^3 X}\right)$, for $i \in [b]$. Subsequently, we run a $\text{poly}\left(\frac{\log(n)}{\epsilon}\right)$-sparse recovery algorithm on each one. Note, if each sample is not too large, this can be done in $\text{poly}\left(\frac{\log(n)}{\epsilon}\right)$ space. Therefore, it remains to show that each sample $\mathcal{S}_i$ is small. 

\section{Insertion-Only Streams}
\label{sec:ins_only}
In this section, we describe an algorithm that obtains a $\left(\frac{3}{2} + \epsilon\right)$-approximation for estimating the maximum weighted independent set of intervals in $\poly{\frac{log(n)}{\epsilon}}$ space, given that we are not allowed to delete any intervals. Recall, \cite{CP15} show that $\left(\frac{3}{2} + \epsilon\right)$ is tight for the unweighted case in insertion-only streams. We also show a lower bound for estimating the maximum independent set of disks in insertion-only streams. The lower bound for intervals in \cite{CP15} shows that $\left(\frac{3}{2} - \epsilon\right)$-approximation requires $\Omega(n)$ space and this naturally extends to disks. We improve this to $2-\epsilon$, implying a strict separation between intervals and disks for insertion-only streams. Note, this is not yet known to be the case for turnstile streams.

\subsection{Intervals} 
We present a single-pass insertion-only streaming algorithm that approximates $\beta$ for unit-length intervals. We begin with describing an algorithm that achieves a $(\frac{3}{2} + \epsilon)$-approximation to $\beta$ in $O(\beta)$ space. Then, we describe a sampling procedure that creates a sketch of the data structure used by the previous algorithm and show an estimator that outputs a $(\frac{3}{2} + \epsilon)$-approximation to $\beta$. Further, the space used by the sketch is $\text{poly}\left(\frac{\log(n)}{\epsilon}\right)$. 

\begin{theorem}
\label{thm:weighted_ins_unit_int}
Let $P$ be an insertion-only stream of weighted unit intervals s.t. the weights are polynomially bounded in $n$ and let $\epsilon \in (0, 1/2)$. Then, there exists an algorithm that outputs an estimator $Y$ s.t. with probability at least $9/10$ the following guarantees hold:
\begin{enumerate}
    \item $\frac{2 \beta}{3+\epsilon} \leq Y \leq \beta$.
    \item The total space used is $\text{poly}\left(\frac{\log(n)}{\epsilon}\right)$ bits. 
\end{enumerate}
\end{theorem}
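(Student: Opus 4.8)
The plan is to follow the template of the turnstile algorithm of Theorem~\ref{thm:weighted_ins_del_unit}, making two changes: replace the factor-$2$ even/odd split by the finer unit-interval structure of Cabello and Perez-Lantero~\cite{CP15} responsible for the $3/2$ factor, and replace the turnstile machinery --- the agnostic multi-rate re-sampling together with $k$-\texttt{Sparse Recovery} --- by its much lighter insertion-only analogue. As indicated in the excerpt we proceed in two stages: first an $O(\beta)$-space algorithm, then a $\poly{\log(n)/\epsilon}$-space sketch of its data structure.

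\textbf{Stage 1 (an $O(\beta)$-space algorithm).} Impose a randomly shifted grid of unit cells and snap each interval to the cell containing its center. Any independent set uses at most one interval per cell (two intervals whose centers lie in the same unit cell intersect), and the number $K$ of non-empty cells satisfies $K\le 2\alpha\le 2\beta$, since taking every other non-empty cell yields $\lceil K/2\rceil$ pairwise-disjoint intervals and each interval has weight at least $1$. Hence we may store, for every non-empty cell, a $\poly{\log(n)/\epsilon}$-size summary --- the heaviest interval seen in each (weight-bucket, position-bucket) pair, with weight-buckets geometrically spaced by $(1+\epsilon)$ and position-buckets an $\epsilon$-net of the cell. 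In an insertion-only stream this is maintained trivially, since an insertion can only raise a stored maximum, and the total space is $O(\beta\cdot\poly{\log(n)/\epsilon})$. At the end of the stream we compute the estimate offline from these summaries, generalizing to weights the $O(|\text{\texttt{OPT}}|)$-space algorithm of Emek, Halldorsson and Rosen~\cite{EHR12}; the $3/2$-factor structural analysis of~\cite{CP15} for unit intervals, together with the random shift, shows the resulting value $V$ obeys $\tfrac{2}{3}(1-\epsilon)\beta\le V\le\beta$.

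\textbf{Stage 2 (sketching to $\poly{\log(n)/\epsilon}$ space).} We now compress the per-cell table, reusing the scaffolding of Algorithm~\ref{alg:weighted_unit_interval_sampling} with its turnstile-only steps dropped. Partition cells into $b=\poly{\log(n)/\epsilon}$ weight classes $\mathcal W_i$ by the heaviest interval they contain, and run an $\ell_0$-estimator (Theorem~\ref{thm:ell_0}) on a substream per class to obtain $(1\pm\epsilon)$-approximate counts $X_{\mathcal W'_i}$. Seed $O(\log n)$ parallel guesses for $\beta$, fixing the correct one up to a constant via the poly-logarithmic-space constant-factor estimator of Algorithm~\ref{alg:naive_approximation} (which applies to insertion-only streams as well), and for each $i$ subsample cells agnostically at rate $p_i=\Theta\!\left(\tfrac{b(1+\epsilon)^i\log(n)}{\epsilon^3 X}\right)$; with no deletions we simply record the rounded summary of each sampled cell as it first appears, so no sparse recovery is required. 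An averaging bound shows $\poly{\log(n)/\epsilon}$ cells survive per class, and a Chernoff/Hoeffding argument as in Lemma~\ref{lem:weighted_ins_del_unit_approx} shows every contributing class --- one with $X_{\mathcal W'_i}\ge\tfrac{\epsilon X_{\mathcal W'}}{(1+\epsilon)^{i+1}b}$ --- retains $\Omega(\epsilon^{-2}\log n)$ relevant cells, so its Horvitz--Thompson estimator $Y_i$ concentrates; the remaining classes contribute only $O(\epsilon\beta)$ in total and their $Y_i$ are set to $0$. Outputting $Y=\sum_i Y_i$ after the one-sided rescaling gives $\tfrac{2\beta}{3+\epsilon}\le Y\le\beta$ with probability $9/10$, and the space is $\poly{\log(n)/\epsilon}$ since we keep $O(\log n)$ guesses, $b$ weight classes, and a $\poly{\log(n)/\epsilon}$-size sample plus an $\ell_0$-estimator per class.

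\textbf{The main obstacle.} The delicate point is making the $3/2$-factor unit-interval structure survive subsampling: that structure's $3/2$ loss is governed by conflicts between \emph{adjacent} cells, but subsampling keeps two neighbours with probability only $p_i^2$, so adjacency is effectively invisible in the sketch. The plan is to show, by adapting the analysis of~\cite{CP15}, that a canonical near-optimal selection of the compressed instance can be charged cell by cell --- making the estimated quantity, up to the $3/2$ factor, a per-class, per-cell sum of single-cell indicators --- and that this charging is robust both to the $(1+\epsilon)$ weight- and position-rounding and to treating the weight classes separately, in particular handling the case where a heavy cell displaces an adjacent lighter one. This weighted charging argument is the step that genuinely goes beyond~\cite{CP15} and~\cite{EHR12}.
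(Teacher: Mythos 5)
There is a genuine gap, and you have in fact named it yourself: your ``main obstacle'' is not resolved in the proposal, only deferred to a hoped-for ``weighted charging argument.'' With your choice of \emph{unit} cells, each cell contributes at most one interval to any independent set, so the entire $3/2$ structure of \cite{CP15} lives in the interactions between \emph{adjacent} cells; once you subsample cells independently at rate $p_i$, a pair of neighbours survives only with probability $p_i^2$ and the estimator can no longer see which cells' representatives actually conflict. Nothing in your Stage 2 shows that the per-cell Horvitz--Thompson sum, computed on isolated sampled cells, is within a $3/2(1+\epsilon)$ factor of $\beta$; the charging argument you gesture at is precisely the missing content of the proof, and it is not obviously true that such a cell-by-cell decomposition of a near-optimal selection exists for weighted instances.

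The paper avoids this obstacle entirely by using a different decomposition. It imposes a randomly shifted grid of cells of side length $1/\epsilon$ (not $1$), discards the intervals crossing cell boundaries (an expected $\epsilon$-fraction of $\beta$ under the random shift), and thereby makes the cells genuinely independent: $\beta \approx \sum_c |\texttt{OPT}_c|$ with no cross-cell conflicts. The $3/2$ factor is then obtained \emph{inside} each cell by the Emek--Halld\'orsson--Ros\'en selection rule (storing, per weight class, the leftmost and rightmost intervals, so that among any three consecutive intervals of $\texttt{OPT}_c$ two disjoint stored intervals lie in their union); since a cell of length $1/\epsilon$ contains only $O(1/\epsilon)$ disjoint unit intervals, this per-cell sketch has size $\poly{\log(n)/\epsilon}$. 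Subsampling whole cells and summing the per-cell $3/2$-approximations scaled by $1/p_i$ is then a sum of independent single-cell quantities, and the concentration argument goes through without any adjacency issue. If you restructure Stage 1 around $1/\epsilon$-length cells with boundary discarding, your Stage 2 machinery (weight classes, $\ell_0$ counts, guesses for $\beta$, thresholding non-contributing classes) applies essentially verbatim and the obstacle disappears.
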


We use an algorithm with the following guarantee as a subroutine: 
\begin{lemma}
\label{lem:weighted_unit_ins_only}
Let $\mathcal{P}$ be an insertion-only stream of $n$ weighted unit intervals. Then, the Weighted Unit Interval Selection Algorithm outputs an estimator $Y$ and the following guarantees hold:
\begin{enumerate}
    \item $\frac{2\beta}{3 + \epsilon} \leq Y \leq \beta$
    \item The total space used is $\widetilde{O}\left(\frac{\beta}{\epsilon}\right)$ bits.
\end{enumerate}
\end{lemma}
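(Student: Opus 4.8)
The plan is to generalize to polynomially bounded weights the $O(|\text{OPT}|)$‑space, $3/2$‑approximation of Emek, Halld\'orsson and Ros\'en \cite{EHR12} for the unweighted cardinality version, losing only an additive $\epsilon$ in the ratio. First I would reuse the grid reduction from Section~\ref{sec:turnstile_unit_interval}: impose a unit grid and snap each interval to the cell containing its left endpoint, so any independent set uses at most one interval per cell; since intervals in distinct even cells are pairwise disjoint and likewise for odd cells, $\sum_c m(c)\le 2\beta$, where $m(c)$ is the largest weight placed in cell $c$. Assuming the weights are at least $1$ (otherwise one discards cells whose maximum weight is below an $\epsilon\beta/\poly{\log(n)}$ threshold, which together lose at most an $\epsilon$‑fraction of $\beta$), this forces the number of non‑empty cells to be $O(\beta)$. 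The Weighted Unit Interval Selection Algorithm then maintains, in the insertion‑only stream, a \emph{kernel} $K$ of $\widetilde{O}(\beta/\epsilon)$ intervals: for every non‑empty cell it subdivides the cell into $\Theta(1/\epsilon)$ offset buckets of width $\epsilon$ and keeps, for each bucket, the heaviest interval whose left endpoint falls in that bucket, together with its exact position and weight; every arrival triggers one comparison against the incumbent of its bucket, and the storage is $O(\beta)\cdot O(1/\epsilon)\cdot O(\log n)=\widetilde{O}(\beta/\epsilon)$ bits.

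At the end of the stream the estimator $Y$ is the exact maximum‑weight independent set of $K$, computed offline by the textbook weighted interval‑scheduling dynamic program. The upper bound $Y\le\beta$ is immediate since $K$ is a sub‑multiset of the input. For $Y\ge \frac{2}{3+\epsilon}\beta$ I would transport a fixed optimum $\mathcal{O}=(o_1,\dots,o_k)$, ordered by position, onto $K$: each $o_i$, lying in bucket $s$ of cell $c$, is replaced by the kernel representative $\widehat{o}_i$ of $(c,s)$, which has weight at least $w(o_i)$ and position within $\epsilon$ of $o_i$. Every consecutive pair of $\mathcal{O}$ whose position gap exceeds $1+2\epsilon$ stays disjoint after transport, so the only trouble comes from pairs with gap in $(1,1+2\epsilon]$, which organize $\mathcal{O}$ into short ``tight runs'' of consecutive cells. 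The core combinatorial claim — the weighted analogue of the run analysis of \cite{EHR12,CP15} — is that inside each tight run the kernel contains an independent set of weight at least $\tfrac23$ of the $\mathcal{O}$‑weight carried by that run, rather than the naive $\tfrac12$ one obtains from a path / local‑ratio bound; summing over runs and over the disjoint ``loose'' pieces yields $Y\ge \tfrac23\beta - O(\epsilon)\beta$, i.e.\ item~(1), while the space accounting above gives item~(2).

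The main obstacle is exactly the tight‑run claim. One has to exploit that in a tight run the positions of the $\mathcal{O}$‑intervals are pinned near the left boundaries of the consecutive cells they occupy, so that the extra offset representatives we stored — not merely the per‑cell heaviest one, which would only recover the $2$‑approximation of the turnstile case — can be recombined into a $\tfrac23$‑weight independent subset; pushing the constant to $3/2$ rather than $2$ is precisely where the weighted generalization of \cite{EHR12} needs care. By contrast, the grid reduction, the $\widetilde{O}(\beta/\epsilon)$ space bound (read as $\widetilde{O}(\beta/\epsilon)+\poly{\log(n)}$ when $\beta$ is tiny), the observation that only the heaviest interval of a bucket matters up to the $\epsilon$ slack, and the offline dynamic program are all routine. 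Once this is in place the subsequent Theorem~\ref{thm:weighted_ins_unit_int} follows by the advertised sampling of the kernel down to $\poly{\log(n)/\epsilon}$ space.
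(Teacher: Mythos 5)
There is a genuine gap, and it sits exactly where you place it: the ``tight-run'' claim is not only unproven in your write-up, it is \emph{false} for the kernel you define. Keeping only the \emph{heaviest} interval per positional $\epsilon$-bucket destroys the positional slack that a $3/2$-type argument needs, and a small example already caps your kernel at a factor of $2$. Take $\epsilon=0.1$ with buckets $[0,0.1),[0.1,0.2),\dots$ and the four unit intervals $A=[0,1]$ of weight $1$, $B=[0.05,1.05]$ of weight $1.001$, $C'=[1.0125,2.0125]$ of weight $1.001$, and $C=[1.025,2.025]$ of weight $1$. The bucket $[0,0.1)$ retains only $B$ and the bucket $[1.0,1.1)$ retains only $C'$, so the kernel is $\{B,C'\}$, whose two members overlap; hence $Y=1.001$. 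But $\{A,C'\}$ is independent with weight $2.001$, so $Y/\beta\approx 1/2$. The difficulty is structural: when two representatives of adjacent buckets conflict, the interval you would need to fall back on (here $A$) is precisely the one the heaviest-per-bucket rule threw away, because it lost a weight comparison to an interval shifted by less than $\epsilon$. No recombination of the surviving representatives can recover $2/3$ of the run's weight, so item~(1) cannot be salvaged for this kernel.

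The paper performs the dual discretization, and that is what makes the $3/2$ bound go through: it uses cells of side length $1/\epsilon$, partitions intervals into $O(\log(n)/\epsilon)$ geometric \emph{weight} classes $\mathcal{W}_i=\{D_j \mid (1+\epsilon)^i\le w_j<(1+\epsilon)^{i+1}\}$, and within each cell and each weight class retains the \emph{positional} extremes (left-most left endpoint and right-most right endpoint), then runs the Emek--Halld\'orsson--Ros\'en three-consecutive-intervals argument essentially per weight class: for any three consecutive intervals of $\text{OPT}_c$ the stored set contains two disjoint intervals inside their union, and since all weights in a class agree up to $(1+\epsilon)$ this yields $\tfrac{2}{3}(1-O(\epsilon))$ of the weight. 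In short: round weights and keep extremal positions, rather than round positions and keep extremal weights. Your grid reduction, space accounting, and offline dynamic program are fine, but the core kernel must be redesigned along these lines before item~(1) can be established.
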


\begin{Frame}[\textbf{Algorithm \ref{alg:weighted_unit_interval_selection_sampling} : Weighted Unit Interval Insertion-Only Sampling.}]
\label{alg:weighted_unit_interval_selection_sampling}
\textbf{Input:} Given an insertion-only stream $\mathcal{P}$ of weighted unit intervals and $\epsilon >0$, the sampling procedure outputs an estimate $Y$ s.t. it satisfies the guarantees of Theorem \ref{thm:weighted_ins_unit_int}.

\begin{enumerate}
	\item Make $O(\log(n))$ guesses for $\beta$. Let $X$ be the right guess. 
	\item Consider a partitioning $\mathcal{C}_i$, where $\mathcal{C}_i$ is the set of all cells $c$ s.t. the maximum weight of an interval in $c$ is in the range $\mathcal{W}_i$. (we do not explicitly know this partitioning.)
	\item Create sample $\mathcal{S}_i$, corresponding to partition $\mathcal{C}_i$, by sampling non-empty cells in $\Delta$ with probability $p_i = \text{poly}\left(\frac{\epsilon X}{(1 + \epsilon)^i \log(n)}\right)$.
	\item If $c \in \mathcal{S}_i$, discard all intervals in $c$ s.t. their weight is less that $\epsilon^2 (1 + \epsilon)^i$. 
	\item Let $Y^c_i$ be the output of running Weighted Unit Interval Selection on $c \in \mathcal{S}_i$.
	\item Then, for all $i$, $Y_i = \sum_{c \in \mathcal{S}_i} \frac{Y^c_i}{p_i}$. If $Y_i < \text{poly}\left(\frac{\epsilon}{\log(n)}\right)X$, set $Y_i = 0$. 
	
\end{enumerate}

\textbf{Output:} $\sum_i Y_i$.
\end{Frame}

\begin{proof}
We first show that the space bound holds. Note, the number of non-empty cells in $\Delta$ are at most $\beta$. For each non-empty cell $c$ we store at most $2$ intervals per weight class. The total number of weight classes is $O\left(\frac{\log(n)}{\epsilon}\right)$ and thus we store at most $O(\log(n))$ information for each non-empty cell. Overall, this gives a space bound of $\widetilde{O}\left(\frac{\beta}{\epsilon}\right)$. 

The argument for is very similar to the one for the unweighted case by Emek et. al. \cite{EHR12} Consider any cell $c$ of size $r = O\left(1/\epsilon\right)$. Let $D_1, D_2 \ldots D_r$ be \texttt{OPT}$_c$. For any three intervals above, we have stored at least two intervals contained in their union. Then, in expectation we have stored $\frac{2}{3} \text{\texttt{OPT}}_c$, therefore, there exists a set of disjoint intervals that a combined contribution of $\frac{2}{3} |\text{\texttt{OPT}}_c|$. Note, this part of the algorithm is deterministic.   
\end{proof}

Next, we show a sampling procedure that samples $\text{poly}\left(\frac{\log(n)}{\epsilon}\right)$ non-empty cells and maintains the same data structure as the Weighted Unit Interval Selection Algorithm. Then, our overall estimator is sum of the \texttt{MWIS} in the sampled cells, scaled up by the probability of sampling. We place a grid $\Delta$ on the input space of side length $\frac{1}{\epsilon}$. We then randomly shift each input interval and discard any interval that intersects the grid. Note, we therefore lose at most an $\epsilon$-fraction of $\beta$.

We first focus on the space used by our sampling process. Intuitively, we sample $\text{poly}\left(\frac{\log(n)}{\epsilon}\right)$ cells from the grid $\Delta$ and for each cell run Algorithm 3 on it. Since each cell has at most $O(\frac{1}{\epsilon})$ independent intervals, the size of the optimal solution in a cell is at most $O\left(\frac{1}{\epsilon} \log(n)\right)$. Therefore the overall space used is $\text{poly}\left(\frac{\log(n)}{\epsilon}\right)$. To finish the proof for the space complexity of our algorithm it remains to show the following lemma:

\begin{lemma}
Given an insertion only stream $\mathcal{P}$, the Weighted Unit Interval Insertion-Only Sampling procedure samples $\text{poly}\left(\frac{\log(n)}{\epsilon}\right)$ cells from the grid $\Delta$ with high probability.  
\end{lemma}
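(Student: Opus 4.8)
The plan is to bound, for each weight class $\mathcal{W}_i$, the number of cells that the sample $\mathcal{S}_i$ keeps after the filtering step, and then union bound over the $b = \text{poly}\left(\frac{\log(n)}{\epsilon}\right)$ weight classes and the $O(\log(n))$ guesses for $\beta$. First I would recall the geometry: after randomly shifting the grid $\Delta$ of side length $\frac{1}{\epsilon}$ and discarding every interval crossing a cell boundary, each surviving interval lies inside a single cell, at most $O\left(\frac{1}{\epsilon}\right)$ intervals of any fixed cell are pairwise disjoint, and the surviving weighted independent set decomposes as $\sum_{c} |\text{\texttt{OPT}}_c| \le \beta$. In particular every non-empty cell $c$ satisfies $|\text{\texttt{OPT}}_c| \ge m(c)$, since its single heaviest interval is already an independent set, so for any threshold $w>0$ the number of cells with $m(c) \ge w$ is at most $\frac{\beta}{w}$. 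This packing bound is the workhorse of the argument.

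Next I would use this to control $|\mathcal{S}_i|$. After Step 4 of Algorithm \ref{alg:weighted_unit_interval_selection_sampling} a sampled cell $c$ retains an interval, and hence occupies space, only if $m(c) \ge \epsilon^2(1+\epsilon)^i$; call such cells \emph{active}. Splitting the active cells of $\mathcal{S}_i$ according to the weight class to which they actually belong and applying the packing bound at each scale, the expected number of active cells in $\mathcal{S}_i$ is at most $p_i \sum_{j\,:\,(1+\epsilon)^j \ge \epsilon^2(1+\epsilon)^i} \frac{\beta}{(1+\epsilon)^j} = O\!\left(p_i \cdot \frac{\beta}{\epsilon^2(1+\epsilon)^i}\right)$, a geometric sum dominated by the smallest admissible class. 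Plugging in the rate $p_i$ specified by Algorithm \ref{alg:weighted_unit_interval_selection_sampling} — which, as in the analogous turnstile Algorithm \ref{alg:weighted_unit_interval_sampling}, is proportional to $\frac{(1+\epsilon)^i}{X}$ up to $\text{poly}\left(\frac{\log(n)}{\epsilon}\right)$ factors and capped at $1$ — and using that on the correct branch $X = \Theta(\beta)$, this expectation is $\text{poly}\left(\frac{\log(n)}{\epsilon}\right)$.

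Since the active part of $\mathcal{S}_i$ is a sum of independent Bernoulli indicators, one per non-empty cell with the coin flipped the first time the cell is touched, a Chernoff bound shows that it exceeds a constant times its mean (or, when the mean is $o(\log n)$, exceeds $\Theta(\log n)$) with probability at most $n^{-k}$ for a large constant $k$. Cells sampled into $\mathcal{S}_i$ that never become active are dropped and cost nothing, and any branch whose guess $X$ is so small that some $\mathcal{S}_i$ would overflow the space budget is simply terminated once this is detected, which is harmless because the correct branch succeeds. Union bounding the Chernoff failure over all $b$ weight classes and all $O(\log n)$ guesses, the total number of stored cells across all the $\mathcal{S}_i$ is $\text{poly}\left(\frac{\log(n)}{\epsilon}\right)$ with high probability; combined with the fact from Lemma \ref{lem:weighted_unit_ins_only} that each stored cell holds only $O\!\left(\frac{1}{\epsilon}\right)$ intervals per copy, this also gives the space bound of Theorem \ref{thm:weighted_ins_unit_int}.

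I expect the main obstacle to be the bookkeeping around cells that land in $\mathcal{S}_i$ "by accident" from heavier weight classes: the sampling in Step 3 is over \emph{all} non-empty cells of $\Delta$, not just those of $\mathcal{C}_i$, so the packing bound has to be applied simultaneously at every scale above the Step-4 threshold and summed, and one must verify that this geometric series loses no more than a $\text{poly}\left(\frac{\log n}{\epsilon}\right)$ factor. A secondary subtlety is making the high-probability statement uniform over the $O(\log n)$ guesses for $\beta$ while ensuring that an underestimate of $\beta$, which inflates every $p_i$, cannot blow up the stored space before the offending branch is aborted.
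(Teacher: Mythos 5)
Your proposal is correct and follows essentially the same route as the paper: bound $|\mathcal{C}_i|$ by the packing inequality $\beta/(1+\epsilon)^i$, compute the expected sample size under $p_i$, apply Chernoff, and union bound over the $b$ weight classes and the $O(\log n)$ guesses. In fact your write-up is more careful than the paper's one-paragraph argument on two points it elides — the cells from heavier weight classes that survive the Step-4 filter in $\mathcal{S}_i$, and the branches with underestimated $X$ — and both of your fixes (summing the packing bound over all admissible scales, and aborting overflowing branches) are sound.
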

\begin{proof}
For every cell $c \in \mathcal{C}_i$, we sample it with probability $p_i = \text{poly}\left(\frac{\epsilon X}{(1 + \epsilon)^i \log(n)}\right)$. Note, the cardinality of the set $\mathcal{C}_i$ is at most $\frac{\beta}{(1+\epsilon)^i} = (1 \pm 1/2)\frac{X}{(1+\epsilon)^i}$. Therefore, in expectation we sample $\text{poly}\left(\frac{\log(n)}{\epsilon}\right)$ from $\mathcal{C}_i$. By Chernoff, the sample isn't larger than a constant factor with high probability. Since the number of weight classes is at most $O\left(\frac{\log(n)}{\epsilon}\right)$, the total space used by our algorithm is $O\left(\frac{\log(n)}{\epsilon}\right)$. 
\end{proof}

\begin{Frame}[\textbf{Algorithm \ref{alg:weighted_unit_interval_selection} : Weighted Unit Interval Selection.}]
\label{alg:weighted_unit_interval_selection}
\textbf{Input:} Given a one-pass insertion-only stream $\mathcal{P}$ with weighted unit-intervals, where the weights are polynomially bounded and $\epsilon >0$, the Weighted Unit Interval Selection Algorithm outputs a $(\frac{3}{2} + \epsilon)$-approximation to $\beta$ in $O(\beta)$ space.
\begin{enumerate}
	\item Randomly shift grid $\Delta$ with cells of side length $\frac{1}{\epsilon}$.
	\item Consider $O(\frac{\log(n)}{\epsilon})$ weight classes $\mathcal{W}_i = \{ D_j | (1 + \epsilon)^i \leq w_j < (1+\epsilon)^{i+1} \}$.
	\item Given a cell $c$ and weight class $\mathcal{W}_i$, let $l^{w_i}_c$ and $r^{w_i}_c$ be the left-most left endpoint and right-most right endpoint intervals in $c$ and in weight class $\mathcal{W}_i$. 
	\item For each non-empty cell $c$ in $\Delta$, for each weight class $\mathcal{W}_i$, maintain $l^{w_i}_c$ and $r^{w_i}_c$. 
\end{enumerate}
\textbf{Output:} \texttt{WIS} of the remaining intervals.
\end{Frame}

It remains to show that the estimate returned by our sampling procedure is indeed a $(\frac{3}{2} + \epsilon)$-approximation. We first observe that the union of the $\mathcal{C}_i$'s form a partition of our input space. Therefore, it suffices to show that we obtain a $(\frac{3}{2} + \epsilon)$-approximation to the \texttt{MWIS} for each $\mathcal{C}_i$. Let $c$ denote a cell in $\mathcal{C}_i$ and $\text{\texttt{OPT}}_c$ denote the \texttt{MWIS} in cell $c$. Further, we say class $\mathcal{C}_i$ contributes if $(1+\epsilon)^i|\mathcal{C}_i| \geq \text{poly}\left(\frac{\epsilon}{\log(n)}\right)X$.

\begin{lemma}
If class $\mathcal{C}_i$ contributes, we obtain an estimate $Y_{i}$ s.t.  $Y_i = (1 \pm \epsilon) \sum_{c \in \mathcal{C}_i} |\text{\texttt{OPT}}_c|$ with high probability. If class $\mathcal{C}_i$ does not contribute, we obtain an estimate $Y_i$ s.t. $Y_i \leq (1+\epsilon) \sum_{c \in \mathcal{C}_i} |\text{\texttt{OPT}}_c|$ with high probability. Overall, $\sum_i Y_i = (1\pm\epsilon)\sum_{c\in \Delta} |OPT_c|$ with probability at least $1-1/n$.
\end{lemma}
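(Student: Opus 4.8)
Throughout, write $Y^c_i$ for the value returned by the Weighted Unit Interval Selection subroutine (Algorithm~\ref{alg:weighted_unit_interval_selection}) on a sampled cell $c$ after its weight-filtering step; by Lemma~\ref{lem:weighted_unit_ins_only} applied cell-wise, $\tfrac{2}{3+\epsilon}|\text{\texttt{OPT}}_c|\le Y^c_i\le|\text{\texttt{OPT}}_c|$, and it is exactly this per-cell slack that later becomes the $(\tfrac32+\epsilon)$ factor in Theorem~\ref{thm:weighted_ins_unit_int}; as elsewhere in this subsection we abbreviate the independent set the subroutine extracts from $c$ by $|\text{\texttt{OPT}}_c|$. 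The plan has three parts. First I reduce to a single class: the $\mathcal{C}_i$ partition the non-empty cells of $\Delta$ that survive the random shift, so it suffices to prove the two per-class assertions and then union-bound over the $O(\log n/\epsilon)$ classes. The key structural observation is that, conditioned on the stream, every $Y^c_i$ is fixed (the subroutine is deterministic), and the only randomness left in $Y_i=\sum_{c\in\mathcal{C}_i}\chi_c\,Y^c_i/p_i$ is the family of independent indicators $\chi_c=\mathbf 1[c\in\mathcal{S}_i]\sim\mathrm{Bernoulli}(p_i)$; hence $\mathbb{E}[Y_i]=\sum_{c\in\mathcal{C}_i}Y^c_i=\sum_{c\in\mathcal{C}_i}|\text{\texttt{OPT}}_c|$, and everything reduces to the concentration of $Y_i$ about its mean.

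Second, the contributing case. Recall $\mathcal{C}_i$ contributes iff $(1+\epsilon)^i|\mathcal{C}_i|\ge\textrm{poly}(\epsilon/\log n)\,X$, where $X=\Theta(\beta)$ is the correct guess. I take $p_i$ to be, up to $\textrm{poly}(\log n/\epsilon)$ factors, $(1+\epsilon)^i/X$ capped at $1$ (matching the turnstile algorithm's $p_i=\Theta(b(1+\epsilon)^i\log n/(\epsilon^3 X))$), with the polylog multiplier chosen so that the expected sample size $p_i|\mathcal{C}_i|$ on a contributing class is at least $C\log n/\epsilon^3$. After filtering, each surviving interval in a sampled cell of $\mathcal{C}_i$ has weight in $[\epsilon^2(1+\epsilon)^i,(1+\epsilon)^{i+1})$, and a cell of side $1/\epsilon$ holds at most $O(1/\epsilon)$ pairwise-disjoint unit intervals, so $(1+\epsilon)^i\le Y^c_i\le O((1+\epsilon)^{i+1}/\epsilon)$ and the filtering costs only an $O(\epsilon)$-fraction of each $|\text{\texttt{OPT}}_c|$. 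These bounds give $\max_c Y^c_i/p_i\le X/\textrm{poly}(\log n/\epsilon)$ while $\mathbb{E}[Y_i]\ge(1+\epsilon)^i|\mathcal{C}_i|\ge\textrm{poly}(\epsilon/\log n)\,X$, so the largest summand is a $1/\textrm{poly}(\log n/\epsilon)$ fraction of the mean. A Chernoff bound pins $|\mathcal{S}_i\cap\mathcal{C}_i|=\Theta(p_i|\mathcal{C}_i|)$ with high probability (this is also where the per-class $\textrm{poly}(\log n/\epsilon)$ space bound comes from), and a Bernstein bound on $\sum_c\chi_c Y^c_i/p_i$, whose variance is at most $(\max_c Y^c_i/p_i)\cdot\mathbb{E}[Y_i]$, gives $|Y_i-\mathbb{E}[Y_i]|\le\epsilon\,\mathbb{E}[Y_i]$ with probability $1-n^{-c}$. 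Since $\mathbb{E}[Y_i]=\sum_{c\in\mathcal{C}_i}|\text{\texttt{OPT}}_c|$, this is exactly the claimed $(1\pm\epsilon)$ estimate, and because $\mathbb{E}[Y_i]\ge\textrm{poly}(\epsilon/\log n)X$ and $Y_i\ge(1-\epsilon)\mathbb{E}[Y_i]$, the algorithm's threshold test is passed and $Y_i$ is not zeroed.

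Third, the non-contributing case and the aggregate. If $\mathcal{C}_i$ does not contribute then $\sum_{c\in\mathcal{C}_i}|\text{\texttt{OPT}}_c|\le O(\epsilon^{-1}(1+\epsilon)^{i+1}|\mathcal{C}_i|)=O(\epsilon^{-1}\textrm{poly}(\epsilon/\log n)X)$, and two things must be checked. No overcount: every summand $Y^c_i/p_i$ is at most $X/\textrm{poly}(\log n/\epsilon)$, below the zeroing threshold $\textrm{poly}(\epsilon/\log n)X$; a one-sided Chernoff/Bernstein bound on $\sum_c\chi_c$ caps $Y_i$, and together with the rule ``$Y_i<\textrm{poly}(\epsilon/\log n)X\Rightarrow Y_i:=0$'' this yields $Y_i\le(1+\epsilon)\sum_{c\in\mathcal{C}_i}|\text{\texttt{OPT}}_c|$ with high probability — and for the very low-weight classes, where even a single sampled cell would make $Y_i$ large relative to $\sum_{c\in\mathcal{C}_i}|\text{\texttt{OPT}}_c|$, the probability of sampling any cell of $\mathcal{C}_i$ at all is $1/\textrm{poly}(n)$, which absorbs that regime. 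No large loss: summing the displayed bound over all $O(\log n/\epsilon)$ non-contributing classes and fixing the polynomial in the threshold appropriately, their total contribution is at most $\epsilon\beta$, so zeroing them forfeits at most an $\epsilon$-fraction of $\sum_{c\in\Delta}|\text{\texttt{OPT}}_c|$. Union-bounding the $O(\log n/\epsilon)$ per-class failure events (each $\le n^{-c}$) gives overall failure probability $\le 1/n$, and on the good event $\sum_i Y_i$ equals $(1\pm\epsilon)$ times the sum of $|\text{\texttt{OPT}}_c|$ over the contributing classes, up to an additive $\epsilon\beta$, hence $\sum_i Y_i=(1\pm O(\epsilon))\sum_{c\in\Delta}|\text{\texttt{OPT}}_c|$; rescaling $\epsilon$ finishes the argument.

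I expect the main obstacle to be the calibration of $p_i$: it has to be large enough that every contributing class receives $\textrm{poly}(\log n/\epsilon)$ samples — so that Bernstein yields a $(1\pm\epsilon)$ concentration about the mean rather than merely about a constant multiple of it — while simultaneously being small enough that the union of all $\mathcal{S}_i$ over the $O(\log n/\epsilon)$ weight classes stays $\textrm{poly}(\log n/\epsilon)$-sized. The secondary technicality is the non-contributing bookkeeping, where only a one-sided bound is available and one must lean on the zeroing step and on the smallness of the aggregate optimum of those classes rather than on concentration.
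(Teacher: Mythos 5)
Your proposal follows essentially the same route as the paper's proof: compute the expectation of the inverse-probability-weighted estimator per weight class, use a Chernoff-type concentration (the paper invokes Chernoff where you invoke Bernstein, using the same per-cell bounds $(1+\epsilon)^i \leq |\text{\texttt{OPT}}_c| \leq (1+\epsilon)^{i+1}/\epsilon$ and the same $\epsilon$-loss accounting for the weight filter), zero out non-contributing classes via the threshold without overcounting, and union bound over the $O(\log n/\epsilon)$ classes. Your write-up is in fact slightly more explicit than the paper's on two points it glosses over — the aggregate $\epsilon\beta$ loss from zeroing non-contributing classes needed for the final claim, and the calibration of $p_i$ against the zeroing threshold — but these are refinements of, not departures from, the paper's argument.
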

\begin{proof}
Let the max weight for $c \in \mathcal{C}_i$ be $w$. Then, $w \geq (1+\epsilon)^i$ and $ w \leq |\text{\texttt{OPT}}_c| \leq \frac{(1+\epsilon)^{i+1}}{\epsilon}$. Note, the algorithm ignores all intervals of weight at most $\epsilon^2 w$, we lose at most $\epsilon|\text{\texttt{OPT}}_c|$ since the dropped intervals can contribute at most $\frac{\epsilon^2w}{\epsilon} \leq \epsilon|\text{\texttt{OPT}}_c|$. 

We first consider the case where $\mathcal{C}_i$ contributes. Then, sampling at a rate $p_i$ implies at least $p_i |\mathcal{C}_i|$ cells survive in expectation. By Chernoff,
\[
\Pr\left[|\mathcal{S}_i| \leq (1-\epsilon)(\frac{\log(1/\delta)}{2\epsilon^2})\right] \leq \textrm{exp}\left(\frac{-2\epsilon^2 \log(1/\delta)}{2\epsilon^2}\right) \leq \delta
\]

Setting $1/\delta = \textrm{exp}\left(\text{poly}(\frac{\log(n)}{\epsilon})\right) $ and union bounding over all $i\in O\left(\frac{\log(n)}{\epsilon}\right)$, $|\mathcal{S}_i| = \Omega(\text{poly}(\frac{\log(n)}{\epsilon}))$ with probability at least $1-1/n^c$. We then compute $|\text{\texttt{OPT}}_c|$ and scale it up by $p_i$. Then, our estimator is $r_i = \sum_{c\in \mathcal{C}_i} \frac{|\text{\texttt{OPT}}_c|}{p_i}$. Then, $E[r_i] = \sum_{c \in \mathcal{C}_i}|\text{\texttt{OPT}}_c|$. Since for each cell $c \in \mathcal{C}_i$, $(1+\epsilon)^i \leq |\text{\texttt{OPT}}_c| \leq \frac{(1+\epsilon)^{i+1}}{\epsilon}$ and the number of samples are $\text{poly}\left(\frac{\log(n)}{\epsilon}\right)$, therefore, by Chernoff bounds, our estimate is $(1 \pm \epsilon)\sum_{c \in \mathcal{C}_i}|\text{\texttt{OPT}}_c|$ with probability at least $1-1/n^c$. Therefore, for cells that contribute, our estimator concentrates with probability at least $1-1/2n$.   

In the case where $\mathcal{C}_i$ does not contribute, the number of samples we get is smaller than $\text{poly}\left(\frac{\epsilon}{\log(n)}\frac{X}{(1+\epsilon)^i}\right)$. Since each $|\text{\texttt{OPT}}_c| \leq \frac{(1+\epsilon)^{i+1}}{\epsilon}$, the total contribution of the sample is at most $\text{poly}(\frac{\epsilon}{\log(n)})X$ in expectation. By a Chernoff bound similar to the one above, the sample size $|\mathcal{S}_i|$, for all $\mathcal{C}_i$ that do not contribute, is $\text{poly}\left(\frac{\epsilon}{\log(n)}\frac{X}{(1+\epsilon)^i}\right)$ with probability at least $1-1/2n$. Therefore, our estimate $Y_i < \text{poly}\left(\frac{\epsilon}{\log(n)}\right)X$ and is set to $0$. Note, in this case we do not obtain a concentration, but we also do not over count. Union bounding over the two cases for $\mathcal{C}_i$, the lemma holds with $1-1/n$ probability.
\end{proof}

\textbf{Proof of Theorem \ref{thm:weighted_ins_unit_int}.} We observe that for $c \in \mathcal{C}_i$, we cannot exactly compute $|\text{\texttt{OPT}}_c|$ in a stream. However, we can run Algorithm 3 on each cell to obtain a $\frac{3}{2}$-approximation to $|\text{\texttt{OPT}}_c|$ with at least constant probability. Therefore, overall we obtain an estimate that gives a $\frac{3(1+\epsilon)}{2}$-approximation. 

Note, Lemma \ref{lem:weighted_unit_ins_only} guarantees that Algorithm 3 runs in space $|\text{\texttt{OPT}}_c|$ and since each cell $c$ has length $\frac{1}{\epsilon}$, $|\text{\texttt{OPT}}_c| \leq O(\frac{1}{\epsilon})$. Therefore, the total space used by the sampling procedure is $\text{poly}\left(\frac{\log(n)}{\epsilon}\right)$.

\subsection{Disks}
We describe a lower bound for estimating $\alpha$ for unit disks in insertion-only streams via a reduction from the communication complexity of the \texttt{Indexing} problem, which we use as the starting point. We consider the one-way communication model between two players Alice and Bob and each player has access to private randomness. The randomized communication complexity of \texttt{Indexing} is well understood in the two-player one-way communication model.

\begin{definition}(\texttt{Indexing})
Let \texttt{I}$_{n,j}$ denote the communication problem where Alice receives as input a bit vector $x \in \{0, 1\}^n$ and Bob receives an index $j \in [n]$. The objective is for Bob to output $x_j$ under the one-round one-way communication model with error probability at most $1/3$. 
\end{definition}

\begin{theorem}(Communication Complexity of \texttt{I}$_{n,j}$.)
\label{thm:index}
The randomized one-round one-way communication complexity of \texttt{I}$_{n,j}$ with error probability at most $1/3$ is $\Omega(n)$.
\end{theorem}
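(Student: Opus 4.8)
The plan is to prove this classical lower bound by combining Yao's minimax principle with an information-theoretic (Fano) argument under the uniform input distribution. First I would fix the hard distribution: let $X$ be uniform on $\{0,1\}^n$ and let $J$ be uniform on $[n]$, drawn independently of $X$. By the easy direction of Yao's principle it suffices to lower bound the distributional complexity, i.e. to show that every \emph{deterministic} one-way protocol $\Pi$ whose message $M = \Pi(X)$ (Alice sees only $X$) lets Bob output a bit $g(M,J)$ with $\Pr[g(M,J) = X_J] \ge 2/3$ must have $\max_x |\Pi(x)| = \Omega(n)$; a randomized protocol with worst-case error $1/3$ is turned into such a deterministic one by fixing its randomness so as to preserve average error at most $1/3$ under this product distribution.

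Next I would set up the information inequalities. Since $X_1,\dots,X_n$ are independent uniform bits, the chain rule together with ``conditioning reduces entropy'' gives, writing $X_{<j} = (X_1,\dots,X_{j-1})$,
\begin{equation*}
I(X;M) = \sum_{j=1}^n I(X_j; M \mid X_{<j}) \ge \sum_{j=1}^n I(X_j;M) = \sum_{j=1}^n \bigl(1 - H(X_j \mid M)\bigr).
\end{equation*}
Because $(X,M)$ is independent of $J$, we have $H(X_j \mid M) = H(X_j \mid M, J = j)$, and since Bob's guess $g(M,j)$ is a function of $(M,j)$, Fano's inequality yields $H(X_j \mid M, J=j) \le H_b(p_j)$, where $p_j = \Pr[g(M,J) \ne X_J \mid J=j]$ and $H_b$ is the binary entropy function. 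Averaging over $j$, using concavity and monotonicity of $H_b$ on $[0,1/2]$ and the hypothesis $\tfrac1n\sum_j p_j \le 1/3$, I obtain $\tfrac1n\sum_j H_b(p_j) \le H_b(1/3) < 1$, hence
\begin{equation*}
I(X;M) \ge n\bigl(1 - H_b(1/3)\bigr) = \Omega(n).
\end{equation*}
Since $M$ is a deterministic function of $X$, any protocol whose messages have length at most $L$ satisfies $L \ge H(M) \ge I(X;M) = \Omega(n)$; undoing the Yao reduction transfers the bound to the randomized one-way communication complexity, proving the theorem.

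The only step that requires genuine care is the derandomization/averaging at the start: one must check that fixing the randomness of the randomized protocol produces a deterministic protocol whose error, now measured as an average over the uniform input distribution rather than per input, is still at most $1/3$, so that the per-index errors $p_j$ average correctly when Fano is applied. Everything after that is a routine application of the chain rule, Fano's inequality, and the subadditivity of mutual information that is available because the coordinates of $X$ are independent; I would also note that the constant $1 - H_b(1/3) \approx 0.082$ can be pushed toward $1$ by standard amplification if a cleaner constant is wanted, but is irrelevant for the $\Omega(n)$ statement.
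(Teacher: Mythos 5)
Your proof is correct, but note that the paper does not actually prove this theorem: it is invoked as a standard black-box fact from communication complexity (the analogous statement for \texttt{Augmented Indexing}, Theorem \ref{athm:aug_index}, is attributed to \cite{DBLP:MNSW98}), and the only thing the paper proves in this section is the reduction from \texttt{I}$_{n,j}$ to disk streaming. What you have written is the classical information-theoretic proof of the \texttt{Indexing} lower bound (Yao's principle over the uniform distribution, chain rule, Fano), and it is sound: the one step that could silently fail, namely $I(X_j;M\mid X_{<j})\ge I(X_j;M)$, is justified exactly because the coordinates of $X$ are independent, so $H(X_j\mid X_{<j})=H(X_j)$ and ``conditioning reduces entropy'' applies to the subtracted term; you correctly flag that this superadditivity is \emph{not} generic. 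The derandomization step, the use of independence of $J$ from $(X,M)$ to replace $H(X_j\mid M)$ by $H(X_j\mid M,J=j)$, and the concavity/monotonicity argument giving $\tfrac1n\sum_j H_b(p_j)\le H_b(1/3)<1$ are all handled properly, and $L\ge H(M)\ge I(X;M)\ge n(1-H_b(1/3))$ closes the argument. So your writeup supplies a self-contained proof where the paper offers only a citation-level statement; it would be a reasonable appendix addition if the authors wanted the paper to be self-contained.
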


We begin with considering the stream of disks $\mathcal{P}$. Let \texttt{Alg} be a one-pass insertion-only streaming algorithm that estimates the cardinality of the maximum independent set denoted by $\alpha$. We then show that \texttt{Alg} can be used as a subroutine to solve the communication problem \texttt{I}$_{n,j}$. Therefore, a lower bound on the communication complexity in turn implies a lower bound on the space complexity of \texttt{Alg}. Formally, 

\begin{theorem}
\label{thm:insertion_only_disks_lowerbound}
Given a stream of disks $\mathcal{P}$, any randomized one-pass insertion-only streaming algorithm \texttt{Alg} which approximates $\alpha$ to within a $(2-\epsilon)$-factor, for any $\epsilon > 0$, with error at most $1/3$, requires $\Omega(n)$ space. 
\end{theorem}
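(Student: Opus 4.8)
The plan is to reduce from \texttt{Indexing}. Suppose \texttt{Alg} is a one-pass insertion-only streaming algorithm that, with error probability at most $1/3$, returns a $(2-\epsilon)$-approximation to $\alpha$ for unit disks using $s$ bits of space. We turn it into a one-way protocol for \texttt{I}$_{n,j}$: Alice converts her string $x\in\{0,1\}^n$ into a stream of $n$ unit disks, runs \texttt{Alg}, and sends Bob the resulting $s$-bit memory state; Bob appends one more disk that depends only on $j$, resumes \texttt{Alg}, and reads $x_j$ off the final estimate. Because the bound of Theorem \ref{thm:index} holds against public randomness, Alice and Bob may share \texttt{Alg}'s coins, and the protocol inherits \texttt{Alg}'s error probability, forcing $s=\Omega(n)$.

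The gadget crucially uses the plane. Fix $n$ pairwise-separated directions $\phi_i=i\pi/(n+1)$, $i\in[n]$ (so $|\phi_i-\phi_j|\ge \pi/(n+1)$ for $i\ne j$), and two polynomially small scales, say $r=n^{-3}$ and $\eta=n^{-10}$; the disks always have radius $1$ and these scales only control displacements of their centers. For each $i$, Alice inserts a disk $A_i$ centered at the origin if $x_i=1$, and centered at $p_i:=r(\cos\phi_i,\sin\phi_i)$ if $x_i=0$. All $n$ centers lie in the radius-$r$ ball about the origin, so Alice's disks pairwise intersect and the maximum independent set among her disks alone has size $1$. Given $j$, Bob inserts a single disk $B$ centered at $b_j:=-(2-r+\eta)(\cos\phi_j,\sin\phi_j)$.

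Next I would verify three facts. First, $\|b_j\|=2-r+\eta<2$, so $B$ intersects every disk centered at the origin, hence $A_j$ when $x_j=1$ and every $A_i$ with $x_i=1$. Second, $b_j$ and $p_j$ point in opposite directions, so $\|b_j-p_j\|=(2-r+\eta)+r=2+\eta>2$ and $B$ misses $A_j$ when $x_j=0$. Third, for $i\ne j$ with $x_i=0$, writing $\delta_{ij}=|\phi_i-\phi_j|$, the law of cosines gives the identity $\|b_j-p_i\|^2=\big(2-r(1-\cos\delta_{ij})+\eta\big)^2+r^2\sin^2\delta_{ij}$; since $r(1-\cos\delta_{ij})\ge r\bigl(1-\cos\tfrac{\pi}{n+1}\bigr)=\Theta(n^{-5})\gg\eta$, the first term is at most $4-\Theta(n^{-5})$ while the second is $O(r^2)=O(n^{-6})$, so $\|b_j-p_i\|<2$ and $B$ intersects $A_i$. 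It follows that if $x_j=1$ all $n+1$ disks pairwise intersect, so $\alpha=1$, whereas if $x_j=0$ the unique non-intersecting pair is $\{B,A_j\}$, so $\alpha=2$. A $(2-\epsilon)$-approximation returns a value in $[\tfrac{1}{2-\epsilon},1]$ in the first case and in $[\tfrac{2}{2-\epsilon},2]$ in the second, and these ranges are disjoint, so Bob recovers $x_j$; this completes the reduction.

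The main obstacle is precisely this gadget: Bob's one probe disk must intersect every one of Alice's $n$ disks no matter how her unknown bits $x_i$ ($i\ne j$) place their centers, while missing $A_j$ exactly when $x_j=0$. This is possible only because the ``fan'' of $n$ distinct perturbation directions lets the $0$-position of the $j$-th gadget protrude in its own direction, so that the probe, aimed oppositely, can fall just short of it yet still reach all other disks; the analogous one-dimensional construction has a single extremal endpoint and so admits no such per-index probe, which is exactly why the interval lower bound of Theorem \ref{athm:ins_del_unit_lowerbound} must invoke deletions and \texttt{Augmented Indexing} rather than plain \texttt{Indexing}. Once the directions and scales are chosen, the only genuine calculation is the inequality $\|b_j-p_i\|<2$.
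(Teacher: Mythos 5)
Your proof is correct and takes essentially the same route as the paper's: a one-way reduction from \texttt{Indexing} in which Alice encodes each bit $x_i$ as one of two nearby placements of a unit disk along a distinct direction, and Bob inserts a single probe disk that necessarily intersects all of Alice's disks except possibly $A_j$, whose two placements it distinguishes, yielding the same $\alpha=1$ versus $\alpha=2$ gap. The only difference is the parametrization of the gadget --- the paper centers Alice's disks at antipodal points $\pm p_i$ of the unit circle and probes at $(1+1/n^2)p_j$, whereas you perturb centers near the origin and probe at distance $2-r+\eta$ on the opposite side --- and your distance calculation is carried out cleanly.
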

\begin{proof}
We show that any such insertion-only streaming algorithm \texttt{Alg} can be used to construct a randomized protocol $\Pi$ to solve the communication problem.
Given her input $x$, Alice constructs a stream of unit disks and runs \texttt{Alg} on the stream. Consider the unit circle around the origin. Divide the half-circle above the $x$-axis into $n$ equally spaced points, denoted by vectors $p_1, p_2, \ldots, p_n$.  
For $i \in [n]$, if $x_i = 0$, Alice streams a unit disk centered at $p_i$. If $x_i=1$, Alice streams a unit disk centered at $-p_i$.
After streaming $n$ disks, Alice communicates the memory state of \texttt{Alg} to Bob. Bob uses the message received from Alice as the initial state of the algorithm and continues the stream. Recall, Bob's input only consists of a single index $j$. Therefore, Bob inserts a unit disk centered at $(1 + 1/n^2)p_j$. 

We first observe that all disks inserted by Alice pairwise intersect. Since all her unit radius disks are centered on the unit circle around the origin, the distance between their center and the origin is $1$. Since all the disks contain the origin, they pairwise intersect. Now, let us consider the case where $x_j= 0$. Recall, in this case, Alice inserts the disk centered $p_j$ and Bob inserts the disk centered at $(1+1/n^2)p_j$. The distance between their centers is $1/n^2$ and they clearly intersect. Let us now consider the other disks inserted by Alice, centered at points $p_i$ for $i \neq j$. The distance between their centers is 
\begin{equation}
\begin{split}
|| p_i - (1+ 1/n^2)p_j ||^2_2 & = ||p_i ||^2_2 + (1 +1/n^2)^2 || p_j ||^2_2 \pm 2(1+1/n^2) \langle p_i, p_j \rangle \\
& \leq 1 + (1 + 3/n^2) \pm 2(1 + 1/n^2)\langle p_i, p_j \rangle 
\end{split}
\end{equation}
where the last inequality follows from $(1 +1/n^2)^2 = 1 + 1/n^4 +2/n^2 \leq 1 + 3/n^2$ for sufficiently large $n$. Since $i\neq j$, $\langle p_i, p_j \rangle \leq 1 - \Theta(1/n)$. Note, $(1+ 1/n^2)(1 - \Theta(1/n)) \leq 1 - \Theta(n)$ for sufficiently large $n$. Substituting this above, we get 
\begin{equation}
\begin{split}
|| p_i - (1+ 1/n^2)p_j ||^2_2 & \leq 1 + (1 + 3/n^2) \pm 2(1 + 1/n^2)(1 - \Theta(1/n)) \\
& \leq 2 + 3/n^2 \pm 2(1 - \Theta(1/n)) \\
& \leq 4 - \Theta(1/n)
\end{split}
\end{equation}
where the last inequality follows from $\Theta(1/n) \geq 3/n^2$ for sufficiently large $n$. Therefore, the squared distance between the centers is strictly less $4$ and the disks do intersect. As a consequence, all disks pairwise intersect and $\alpha = 1 $.

Let us now consider the case where $x_j =1$. Recall Alice inserts a disk centered at $-p_j$ and Bob inserts a disk centered at $(1 + 1/n^2)p_j$. The distance between the centers is $(2 + 1/n^2)$, therefore the two disks do not intersect. Then, $\alpha$ is at least $2$. We observe that any $(2-\epsilon)$-approximate algorithm \texttt{Alg} can distinguish between these two cases because in the first case \texttt{Alg} outputs $at\ most$ $1$ and in the second case \texttt{Alg} outputs $at\ least$ $1 + \epsilon$. Therefore it is a valid protocol for solving \texttt{I}$_{n,j}$. If \texttt{Alg} has error at most $1/3$, the protocol has error at most $1/3$. 
By Theorem \ref{thm:index}, any such protocol requires $\Omega(n)$ communication and in turn \texttt{Alg} requires $\Omega(n)$ space. 
\end{proof}

\end{document}